\date{}
\numberwithin{equation}{section}
\newtheorem{theorem}{Theorem}[section]
\newtheorem{lemma}{Lemma}[section]
\newtheorem{lemma*}{Lemma}
\newtheorem{remark}{Remark}[section]
\newtheorem{remark*}{Remark}
\newtheorem{proposition}{Proposition}[section]
\newcommand{\bZ}{{\bf Z}}
\newcommand{\bA}{{\bf A}}
\newcommand{\bB}{{\bf B}}
\newcommand{\bI}{\mbox{\bf I}}
\newcommand{\balpha}{\boldsymbol{\alpha}}
\def\b0{\mathbf{0}}
\def\bI{\mathbf{I}}
\def\Ebb{\mathbb{E}}
\def \mx {\mathbf{x}}
\def \my {\mathbf{y}}
\def \mv {\mathbf{v}}
\def \mz {\mathbf{z}}
\begin{document}
\title{\large \textbf{Schr{\"o}dinger-F{\"o}llmer Sampler:
 Sampling without Ergodicity}}

\author{
{\small Jian Huang}
\thanks{Department of Statistics and Actuarial Science, University of Iowa, Iowa City,  IA, USA.
Email: jian-huang@uiowa.edu}
\and
{\small Yuling Jiao}
\thanks{
School of Mathematics and Statistics, Wuhan University, Wuhan, China.
Email: yulingjiaomath@whu.edu.cn}
\and
{\small Lican Kang}
\thanks{
School of Mathematics and Statistics, Wuhan University, Wuhan, China.
Email: kanglican@whu.edu.cn}
\and
{\small Xu Liao}
\thanks{Center for Quantitative Medicine Duke-NUS Medical School, Singapore.
Email: liaoxu@u.duke.nus.edu}
\and
{\small Jin Liu}
\thanks{Center for Quantitative Medicine Duke-NUS Medical School, Singapore.
Email: jin.liu@duke-nus.edu.sg}
\and
{\small Yanyan Liu}
\thanks{School of Mathematics and Statistics, Wuhan University, Wuhan, China.
Email: liuyy@whu.edu.cn}
}
\maketitle
\begin{abstract}
Sampling from probability distributions is an important problem in statistics and machine learning, specially in Bayesian inference when integration with respect to  posterior distribution is intractable and sampling from the posterior is the only viable option for inference.  In this paper, we propose Schr\"{o}dinger-F\"{o}llmer sampler (SFS),  a novel approach to sampling from possibly unnormalized distributions. The proposed SFS is based on the  Schr\"{o}dinger-F\"{o}llmer diffusion process on the unit interval with a time-dependent drift term, which transports the degenerate distribution at time zero to the target distribution at time one. Compared with the existing Markov chain Monte Carlo samplers that require ergodicity, SFS does not need to have the property of ergodicity.  Computationally, SFS can be easily implemented using the Euler-Maruyama discretization. In theoretical analysis, we establish non-asymptotic error bounds for the sampling distribution of SFS in the Wasserstein distance under reasonable conditions. We conduct numerical experiments to evaluate the performance of SFS and demonstrate that it is able to generate samples with better quality than several existing methods.

\vspace{0.5cm} \noindent{\bf KEY WORDS}: Euler-Maruyama discretization, Non-asymptotic error bound,  Schr\"{o}dinger bridge,  Unnormalized distribution,
Wasserstein distance.
\end{abstract}

\section{Introduction}\label{introduction}
Sampling from a probability distribution
is a fundamental problem in statistics and machine learning. For example,  the ability to efficiently sample from an unnormalized posterior distribution is crucial to the success of Bayesian inference. Many sampling approaches have been developed in the literature. In particular, there is a large body of work on the Markov Chain Monte Carlo (MCMC)  methods, including the celebrated Metropolis-Hastings (MH) algorithm \citep{metropolis1953equation,hastings1970monte,tierney1994markov}, the Gibbs sampler \citep{geman1984, gelfand1990},
the Langevin algorithm \citep{roberts1996exponential,dalalyan2017theoretical,durmus2017nonasymptotic},
the bouncy particle sampler \citep{peters2012rejection,bouchard2018bouncy}, and the zig-zag sampler \citep{bierkens2019zig}, among others, see \citep{martin2020computing,changye2020markov,dunson2020hastings,brooks2011handbook}  and the references therein.

Among these methods,  the Langevin sampler based on the Euler-Maruyama discretization of Langevin diffusion has received much attention recently. The Langevin diffusion reads
\begin{align}\label{lange}
dL_t=-\nabla V(L_t) dt+\sqrt{2}\, dB_t,
\end{align}
where $-\nabla V(\cdot) $ is the drift term and $\{B_t\}_{t\geq0}$ is a standard $p$-dimensional Brownian motion process. Under suitable conditions, the Langevin diffusion process $\{L_t\}_{t\geq0}$ in \eqref{lange} admits an invariant distribution $\mu(x)=\exp (-V(x))/C, x \in \mathbb{R}^p$  with normalizing constant $C>0$   \citep{bakry2008rate,cattiaux2009trends}.
Nice convergence properties of the Langevin sampler under the strongly convex potential assumption have been established by several authors
\citep{durmus2016high-dimensional,
durmus2016sampling,dalalyan2017further,
dalalyan2017theoretical,cheng2018convergence,dalalyan2019user-friendly}.
Furthermore, the strongly convex potential assumption can be replaced by
different conditions to guarantee the log-Sobolev inequality for the target distribution,  including the dissipativity condition for the drift term \citep{raginsky2017non,mou2019improved,zhang2019nonasymptotic} and  the local convexity condition for the potential function outside a ball \citep{durmus2017nonasymptotic,cheng2018sharp,ma2019sampling,bou2020coupling}.

Although tremendous progress has been  made in the past decades,
it remains a challenging task to sample from distributions with multiple modes or distributions in high dimensions \citep{dunson2020hastings}.
Even for the one-dimensional Gaussian mixture model $0.5 N(-1,\sigma^2)+0.5N(1,\sigma^2)$,
the optimally tuned Hamiltonian Monte Carlo and the random walk Metropolis algorithms have the mixing time proportional to $\exp({1}/{(2\sigma^2)})$ \citep{mangoubi2018does,dunson2020hastings},
which will blow up exponentially as $\sigma$ decreases to $0$.
The constant in the log-Sobolev inequality may depend on the dimensionality exponentially \citep{menz2014poincare,wang2009log,hale2010asymptotic,raginsky2017non},
indicating that the efficiency of Langevin sampler may suffer from the curse of dimensionality when the ambient dimension $p$ is high.

In this paper, we propose Schr\"{o}dinger-F\"{o}llmer sampler (SFS),  a novel sampling approach without requiring the property of ergodicity.
SFS is based on the Schr\"{o}dinger-F\"{o}llmer diffusion
\begin{align}\label{sde}
\mathrm{d} X_{t}=b(X_{t}, t) \mathrm{d} t+\mathrm{d} B_{t}, \ X_0= 0, \  t \in[0,1],
\end{align}
where the function $b: \mathbb{R}^p \times [0,1]\to \mathbb{R}^p$ is a time-varying drift term determined by the target distribution. The specific form of $b$ is given by \eqref{driftbb} below.
According to \citet{leonard2014survey} and \citet{eldan2020}, the process  $\{X_t\}_{t\in [0,1]}$ (\ref{sde}) was first formulated by F\"{o}llmer \citep{follmer1985, follmer1986, follmer1988} when studying the Schr\"{o}dinger bridge problem \citep{schrodinger1932theorie}.

The law of  $\{X_t\}_{t\in [0,1]}$ in (\ref{sde}) minimizes the relative entropy with respect to the Wiener measure among all processes with laws interpolating $\delta_{0}$ (the degenerate distribution at $X_0=0$) and the target distribution $\mu$
\citep{jamison1975markov, dai1991stochastic, leonard2014survey}.
Therefore, the diffusion process (\ref{sde}) can be used to sample from the target distribution $\mu$ by transporting the initial degenerate distribution at $t=0$ to the target  $\mu$ at $t=1$.
To numerically implement this sampling approach, we use the Euler-Maruyama method to discretize the diffusion process \eqref{sde}. The resulting discretized version of \eqref{sde} is
\begin{align}\label{eld}
Y_{t_{k+1}}=Y_{t_k}+\delta_k  b\left(Y_{t_{k}}, t_{k}\right)+\sqrt{\delta_k }\epsilon_{k+1},\
k=0,1,\ldots, K-1, \ K \ge 2,
\end{align}
where {\color{black}$K$} is the number of the grid points on $[0,1]$ with $0=t_{0}<t_1<\ldots<t_K=1$, $\delta_k=t_{k+1}-t_k$ is the $k$-th step size,  and $\{\epsilon_{k}\}_{k=1}^{K}$ are independent and identically distributed random vectors from $N(0, \bI_{p})$.
Based on \eqref{eld}, we can start from $Y_{t_0}=0$ and iteratively update this initial value to obtain a realization of the random vector $Y_{t_K}$, which is approximately distributed as the target distribution $\mu$ under suitable conditions.
For convenience, we shall refer to the proposed sampling method as the Schr\"{o}dinger-F\"{o}llmer sampler (SFS).

An important difference between SFS and existing MCMC methods is that ergodicity is not required for SFS to generate valid samples. This is due to the basic property of the Schr\"{o}dinger-F\"{o}llmer diffusion \eqref{sde} on the unit time interval $[0, 1]$, which transports the initial distribution $\delta_0$ at $t=0$ to the exact target distribution
$\mu$ at $t=1$. The sampling error of SFS is entirely due to the Euler-Maruyama discretization and the approximation of the drift term in the Schr\"{o}dinger-F\"{o}llmer diffusion. These two type of approximation errors can be made arbitrarily small under suitable conditions.

Our main contributions are as follows.

\begin{enumerate}[(i)]
\item We propose a novel sampling method SFS without assuming ergodicity.
SFS is based on Euler-Maruyama discretization to
the Schr{\"o}dinger-F{\"o}llmer diffusion. The proposed SFS also works for unnormalized distributions.


\item
We establish non-asymptotic bounds for the difference between the law of the samples generated via SFS and the target distribution $\mu$ in terms of the Wasserstein distance
under appropriate conditions.
When the drift term $b$ can be calculated in a closed form, for example the target $\mu$ is a finite mixture of Gaussians, we show that $$W_2(\mbox{Law} (Y_{t_K}),\mu)\leq
\mathcal{O}(\sqrt{p/K})$$ under some smoothness conditions on the drift term,
see Theorem \ref{th1}.
In the case when the drift term needs to be calculated via Monte Carlo approximation, we prove that
$$W_2(\mbox{Law} (\widetilde{Y}_{t_K}),\mu)\leq
\mathcal{O}(\sqrt{p}(1/\sqrt{K}+1/\sqrt{m}))$$
under the assumption that the potential  $U(x,t)$ in \eqref{Udef} below
is strongly convex with respect to $x$, where $m$ is the number of Gaussian samples used in the Monte Carlo approximation of $b$, see Theorem \ref{th3}.

\item We conduct  numerical experiments to evaluate the effectiveness of the proposed SFS and demonstrate that SFS performs better than several  existing MCMC methods for Gaussian mixture models and Bayesian logistic regression.
\end{enumerate}

The rest of the paper is organized as follows. In Section \ref{method} we recall  some background and present the proposed SFS in details.
In Section \ref{compare} we compare SFS with Langevin sampler when the target distribution is a standard normal distribution. In Section \ref{Theorey} we establish the non-asymptotic bounds on the Wasserstein distance between the distribution of the samples generated via SFS and the target one. In Section \ref{simulation} we conduct simulation studies to assess the performance of SFS. Concluding remarks are given in Section \ref{conclusion}. Proofs for all the theorems are provided in Appendix \ref{append}.

We end this section by introducing some notation used throughout the paper.
Denote $\mathcal{B}(\mathbb{R}^{p})$ as the Borel set of $\mathbb{R}^{p}$,
and let $\mathcal{P}\left(\mathbb{R}^{p}\right)$
be the collection of probability measures on $(\mathbb{R}^{p}, \mathcal{B}(\mathbb{R}^{p}))$. Denote the gradient of a smooth
function $\varphi(x), x \in \mathbb{R}^{p}$ by $\nabla \varphi(x)$.
Similarly, denote the  partial derivative with respect to $x$ of $\phi(x,t), (x, t) \in \mathbb{R}^p\times[0,1]$ by $\nabla_x \phi(x,t)$.
For symmetric matrices  $\bA,\bB\in\mathbb{R}^{p\times p}$,
$\bA>\bB$ means that $\bA-\bB$ is a positive definite matrix.
Let $\|\beta\|_{\ell}=(\sum_{i=1}^{p}|\beta_{i}|^{\ell})^{\frac{1}{\ell}}$ be the  $\ell$-norm of the vector $\beta=(\beta_1,\ldots,\beta_p)^{\top}\in \mathbb{R}^{p}$. Denote $\|X\|_{L_2}=(\Ebb\|X\|_2^2)^{1/2}$
as the $L_2$-norm of a random vector $X$.
Denote $\Ebb_{X}$ as the expectation over the random vector $X$.
\section{Schr{\"o}dinger-F{\"o}llmer sampler}\label{method}
In this section we first provide some background on the Schr{\"o}dinger-F{\"o}llmer diffusion. We then present the proposed SFS method based on
the Euler-Maruyama discretization of the Schr{\"o}dinger-F{\"o}llmer diffusion.

\subsection{Background on Schr{\"o}dinger-F{\"o}llmer diffusion}
We first recall the Schr{\"o}dinger bridge problem, then introduce the Schr{\"o}dinger-F{\"o}llmer diffusion, and at last calculate the closed form expression of the drift term in the scenario that the target distribution is  
a Gaussian mixture.

\subsubsection{Schr{\"o}dinger bridge problem}
Let $\Omega = C([0,1],\mathbb{R}^p)$ be the space of $\mathbb{R}^p$-valued continuous functions on the time interval $[0, 1]$. Denote $Z = (Z_t)_{t\in [0,1]}$ as the canonical process on $\Omega$, where $Z_t(\omega) = \omega_t$, $\omega = (\omega_s)_{s\in [0,1]}\in \Omega$.
The canonical $\sigma$-field on $\Omega$ is then generated as $\mathscr{F}  = \sigma(Z_t,t\in[0,1]) = \left\{\{\omega:(Z_t(\omega))_{t\in [0,1]}\in H\}:H\in\mathcal{B}(\mathbb{R}^p)\right\}$. Denote $\mathcal{P}(\Omega)$ as the space of probability measures on the path space $\Omega$, and $\mathbf{W}_{\mx}\in\mathcal{P}(\Omega)$ as the Wiener measure
whose initial marginal is $\delta_{\mx}$. The law of the reversible Brownian motion, is then defined as $\mathbf{P}= \int \mathbf{W}_{\mx}\mathrm{d}\mx$, which is an unbounded measure on $\Omega$. One can observe that, $\mathbf{P}$ has a marginal coinciding with the Lebesgue measure $\mathscr{L}$ at each $t$.
\citet{schrodinger1932theorie} studied the problem of finding the most likely random evolution between two probability distributions $\widetilde{\nu}, \widetilde{\mu} \in \mathcal{P}(\mathbb{R}^p)$.
This problem is referred to as the Schr\"{o}dinger bridge problem (SBP).
SBP can be further formulated as seeking a probability law on the path space that interpolates between $\widetilde{\nu}$ and $\widetilde{\mu}$, such that the probability law is close to the prior law of the Brownian diffusion with respect to the relative entropy \citep{jamison1975markov,leonard2014survey}, i.e.,  finding a path measure $\mathbf{Q}^* \in \mathcal{P}(\Omega)$ with marginal $\mathbf{Q}^*_{t} = (Z_t)_{\#}\mathbf{Q}^*=\mathbf{Q}^*\circ Z_t^{-1}, t\in [0,1]$ such that
$$\mathbf{Q}^* \in \arg \min \mathbb{D}_{\mathrm{KL}}(\mathbf{Q}||\mathbf{P}),$$ and  $$\mathbf{Q}_0 = \widetilde{\nu}, \mathbf{Q}_1 = \widetilde{\mu},$$ where
the relative entropy $\mathbb{D}_{\mathrm{KL}}(\mathbf{Q}||\mathbf{P}) = \int \log(\frac{d \mathbf{Q}}{d \mathbf{P}}) d \mathbf{Q} $ if $\mathbf{Q}\ll \mathbf{P}$ (i.e. $\mathbf{Q}$ is absolutely continuous w.r.t. $\mathbf{P}$), and $\mathbb{D}_{\mathrm{KL}}(\mathbf{Q}||\mathbf{P}) = \infty$ otherwise.
The following theorem characterizes the solution of SBP.

\begin{theorem}\label{th01}\citep{leonard2014survey}
If $\widetilde{\nu}, \widetilde{\mu} \ll \mathscr{L}$,  then SBP admits a unique solution $\mathbf{Q}^* = f^*(Z_0)g^*(Z_1)\mathbf{P}$, where
$f^*$ and $g^*$ are $\mathscr{L}$-measurable nonnegative  functions satisfying the  Schr\"{o}dinger system
$$\left\{\begin{array}{l}
f^*(\mx) \mathbb{E}_{\mathbf{P}}\left[g^*\left(Z_{1}\right) \mid Z_{0}=\mx\right]= \frac{d \widetilde{\nu}}{d\mathscr{L}}(\mx), \quad \mathscr{L}-a . e . \\
g^*(\my)  \mathbb{E}_{\mathbf{P}}\left[f^{*}\left(Z_{0}\right) \mid Z_{1}=\my\right]=\frac{d \widetilde{\mu}}{d\mathscr{L}}(\my),  \quad \mathscr{L}-a . e .
\end{array}\right.$$
Furthermore, the pair $(\mathbf{Q}^*_{t},\mv^*_{t})$ with $$\mv^*_{t}(\mx) = \nabla_{\mx} \log \mathbb{E}_{\mathbf{P}}\left[g^{*}\left(Z_{1}\right) \mid Z_{t}= \mx\right]$$ solves the minimum action problem
$$\min_{\mu_t,\mv_t} \int_{0}^{1} \mathbb{E}_{\mz\sim \mu_t}[\|\mv_t(\mz)\|^2] \mathrm{d} t$$ s.t.
$$\left\{\begin{array}{l}
\partial_{t}\mu_t = -\nabla \cdot(\mu_t \mv_t) +  \frac{\Delta \mu_t}{2}, \quad \text { on }(0,1) \times \mathbb{R}^{p} \\
\mu_{0}=\widetilde{\nu}, \mu_{1}=\widetilde{\mu}.
\end{array}\right.
$$
\end{theorem}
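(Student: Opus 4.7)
The plan is to prove Theorem \ref{th01} in two stages: first reduce the dynamic Schr\"odinger bridge problem to a static entropic optimal transport problem over couplings of the endpoints, and then transfer the optimizer back to path space via a Doob $h$-transform of the reference Brownian motion.

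First I would disintegrate any path measure $\mathbf{Q}\ll \mathbf{P}$ with respect to its endpoint marginal $\mathbf{Q}_{01}\in \mathcal{P}(\mathbb{R}^p\times \mathbb{R}^p)$, writing $\mathbf{Q}=\int \mathbf{P}^{\mx,\my}\,\mathbf{Q}_{01}(d\mx,d\my)$, where $\mathbf{P}^{\mx,\my}$ is the Brownian bridge from $\mx$ to $\my$. The same bridge measures appear in the disintegration of $\mathbf{P}$ against $\mathbf{P}_{01}$, so the chain rule for relative entropy gives
\begin{equation*}
\mathbb{D}_{\mathrm{KL}}(\mathbf{Q}\|\mathbf{P}) = \mathbb{D}_{\mathrm{KL}}(\mathbf{Q}_{01}\|\mathbf{P}_{01}).
\end{equation*}
The SBP therefore collapses to the static entropic transport problem of minimizing $\mathbb{D}_{\mathrm{KL}}(\pi\|\mathbf{P}_{01})$ over couplings $\pi$ of $\widetilde{\nu}$ and $\widetilde{\mu}$. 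Existence follows from lower semicontinuity of entropy plus tightness of the coupling set, and strict convexity of $\pi\mapsto \mathbb{D}_{\mathrm{KL}}(\pi\|\mathbf{P}_{01})$ gives uniqueness. Writing the Lagrangian for the two marginal constraints, the multipliers separate into a function of $\mx$ and a function of $\my$, so the optimizer has the product structure
\begin{equation*}
\frac{d\pi^*}{d\mathbf{P}_{01}}(\mx,\my) = f^*(\mx)\,g^*(\my),
\end{equation*}
and imposing that the marginals equal $\widetilde{\nu}$ and $\widetilde{\mu}$ produces exactly the Schr\"odinger system in the statement. The hypothesis $\widetilde{\nu},\widetilde{\mu}\ll \mathscr{L}$ is precisely what makes both sides of that system densities; the existence of a nonnegative measurable pair $(f^*,g^*)$ is then the standard Sinkhorn/IPFP fixed-point argument on the convex set of finite-entropy couplings.

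Second, to obtain the path-space representation $\mathbf{Q}^* = f^*(Z_0)g^*(Z_1)\mathbf{P}$, I would check directly that this candidate has endpoint marginals $\widetilde{\nu},\widetilde{\mu}$ (using the Schr\"odinger system) and realizes the minimal entropic cost (using the disintegration above), so by uniqueness it must coincide with the dynamic optimizer. For the second half of the theorem I would invoke Girsanov's theorem: for any $\mathbf{Q}\ll \mathbf{P}$ realized as a diffusion with drift $\mv_t$ adapted to the canonical filtration,
\begin{equation*}
\mathbb{D}_{\mathrm{KL}}(\mathbf{Q}\|\mathbf{P}) = \mathbb{D}_{\mathrm{KL}}(\mathbf{Q}_0\|\mathscr{L}) + \tfrac{1}{2}\int_0^1 \mathbb{E}_{\mz\sim\mu_t}\!\left[\|\mv_t(\mz)\|^2\right] dt,
\end{equation*}
while the Fokker--Planck equation for $\mathbf{Q}$ gives the continuity-equation constraint. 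With endpoint marginals fixed, minimizing the right-hand side is exactly the displayed minimum-action problem, and uniqueness is inherited from the static entropic problem. To identify $\mv^*_t$ I would perform the Doob $h$-transform on $\mathbf{P}$ using $h^*(\mx,t):=\mathbb{E}_{\mathbf{P}}[g^*(Z_1)\mid Z_t=\mx]$: the transformed measure is precisely $\mathbf{Q}^*$, under it the canonical process solves $dX_t = \nabla_\mx \log h^*(X_t,t)\,dt + dB_t$, and so $\mv^*_t(\mx)=\nabla_\mx \log h^*(\mx,t)$ as claimed.

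The main obstacle will be handling the fact that $\mathbf{P} = \int \mathbf{W}_\mx\,d\mx$ is an unbounded $\sigma$-finite measure rather than a probability measure, so $\mathbb{D}_{\mathrm{KL}}(\cdot\|\mathbf{P})$ is not the ordinary relative entropy. Both the chain rule used in the reduction to the static problem and the Girsanov identification of the entropic cost have to be justified in this extended setting, where the initial-time term $\mathbb{D}_{\mathrm{KL}}(\mathbf{Q}_0\|\mathscr{L})$ keeps track of the Lebesgue part of the reference. Once this extended entropy calculus is set up carefully (as in L\'eonard's survey), the remaining pieces reduce to classical entropic optimal transport together with standard parabolic PDE regularity for the $h$-transform, which I would quote rather than redevelop.
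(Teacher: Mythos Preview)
Your proposal is a faithful outline of the standard proof from \citet{leonard2014survey}: reduce the dynamic problem to a static entropic coupling problem via disintegration over endpoints and the chain rule for relative entropy, extract the product form $f^*(\mx)g^*(\my)$ from the Lagrangian structure, and then recover the drift by a Doob $h$-transform together with Girsanov's identification of the action. You also correctly flag the one genuine technical issue, namely that $\mathbf{P}=\int \mathbf{W}_{\mx}\,d\mx$ is only $\sigma$-finite, so the entropy calculus (chain rule, Girsanov energy identity) must be carried out in L\'eonard's extended sense.

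However, there is nothing in the paper to compare your argument against. Theorem~\ref{th01} is stated as a quoted background result with the citation \citep{leonard2014survey}; the paper does not supply its own proof and the appendix only treats Proposition~\ref{SBP}, Remark~\ref{R2}, and Theorems~\ref{th1}--\ref{th4}. So your sketch is not an alternative to the paper's proof but rather a summary of the cited source. If the intent is to include a self-contained argument, your outline is sound; the only places that would need real work beyond what you wrote are (i) the existence part of the Schr\"odinger system, where the Sinkhorn/IPFP fixed-point argument requires more than just ``$\widetilde{\nu},\widetilde{\mu}\ll\mathscr{L}$'' (one typically needs finite entropy of the product coupling or an integrability condition on the kernel, as in \citet{leonard2014survey}), and (ii) the Girsanov step, where you should argue that the optimal $\mathbf{Q}^*$ is in fact realized by a Markovian drift so that the action formula applies.
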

Let  $K(s, \mx, t, \my) = [2\pi(t-s)]^{-p/2}\exp\left(-\frac{\|\mx - \my\|^2}{2(t-s)}\right)$ be the transition density of the Wiener process,
$\widetilde{q}(\mx)$ and $\widetilde{p}(\my)$ be
the density of $\widetilde{\nu}$ and $\widetilde{\mu}$, respectively.  Denote
$$f_{0}(\mx) = f^*(\mx), \ \ g_{1}(\my) = g^*(\my),$$
$${f_{1}}(\my) = \mathbb{E}_{\mathbf{P}}\left[f^{*}\left(Z_{0}\right) \mid Z_{1}=\my\right] = \int K(0, \mx, 1, \my)f_{0}(\mx) d \mx,$$
$${g_{0}}(\mx)= \mathbb{E}_{\mathbf{P}}\left[g^*\left(Z_{1}\right) \mid Z_{0}=\mx\right] = \int K(0, \mx, 1, \my)g_{1}(\my) d \my.$$
Then, the  Schr\"{o}dinger system  in Theorem \ref{th01} can also be characterized by
\begin{equation}\label{sbs}
\widetilde{q}(\mx) = f_0(\mx) {g_{0}}(\mx), \ \  \widetilde{p}(\my)=  {f_{1}}(\my)g_1(\my)
\end{equation}
with the following forward and backward time harmonic equations  \citep{chen2020stochastic},
 $$\left\{\begin{array}{l}
\partial_t f_t(\mx) = \frac{\Delta}{2} f_t(\mx),  \\
\partial_t g_t(\mx) = -\frac{\Delta}{2} g_t(\mx),
\end{array}\right. \quad \text { on }(0,1) \times \mathbb{R}^{p}.
$$

Let $q_t$ denote marginal density of $\mathbf{Q}_t^{*}$, i.e., $q_t(\mx) = \frac{d\mathbf{Q}_t^{*}}{d\mathscr{L}}(\mx)$,  then it can be represented by the product of  $g_t$  and $f_t$ \citep{chen2020stochastic}. Let  $\mathcal{V}$ consist of admissible Markov controls with finite energy. Then,  the vector field
\begin{equation}\label{drift}
\begin{aligned}
\mv^*_{t}=\nabla_{\mx}\log g_t(\mx)
= \nabla_{\mx}\log  \int K(t, \mx, 1, \my)g_1(\my) \mathrm{d} \my
\end{aligned}
\end{equation}
solves the following stochastic control problem.
\begin{theorem}\label{th02}\citep{dai1991stochastic}
$$\mathbf{v}^*_{t}(\mx)\in \arg\min_{\mathbf{v} \in \mathcal{V}}\mathbb{E}\left[\int_0^1\frac{1}{2}\|\mathbf{v}_t\|^2\mathrm{d}t\right]$$
s.t.
\begin{equation}\label{sdeb}
\left\{\begin{array}{l}
\mathrm{d}\mx_t = \mathbf{v}_t \mathrm{d}t+\mathrm{d} B_t, \\
\mx_0\sim \widetilde{q}(\mx),\quad \mx_1\sim \widetilde{p}(\mx).
\end{array}\right.
\end{equation}
\end{theorem}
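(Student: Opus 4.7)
The plan is to reduce the stochastic control problem to the static Schrödinger bridge problem (Theorem \ref{th01}) via a Girsanov change of measure, and then identify the drift of the optimal path measure through a Doob $h$-transform argument.

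First, I would fix an admissible Markov control $\mathbf{v}\in\mathcal{V}$ and let $\mathbf{Q}^{\mathbf{v}}\in\mathcal{P}(\Omega)$ denote the law on path space of the solution of \eqref{sdeb} with drift $\mathbf{v}$ and initial marginal $\widetilde{q}$. Denote by $\mathbf{W}_{\widetilde{q}}=\int \mathbf{W}_{\mx}\widetilde{q}(\mx)\,\mathrm{d}\mx$ the Wiener measure with initial marginal $\widetilde{q}$. By Girsanov's theorem, for controls of finite energy,
\begin{equation*}
\mathbb{D}_{\mathrm{KL}}(\mathbf{Q}^{\mathbf{v}}\,\|\,\mathbf{W}_{\widetilde{q}})\;=\;\frac{1}{2}\,\mathbb{E}_{\mathbf{Q}^{\mathbf{v}}}\!\left[\int_0^1 \|\mathbf{v}_t\|^2\,\mathrm{d}t\right].
\end{equation*}
The right-hand side is (twice) the control cost in the theorem. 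Hence, among all path measures $\mathbf{Q}$ with $\mathbf{Q}_0=\widetilde{q}$ and $\mathbf{Q}_1=\widetilde{p}$ that are induced by an admissible drift, minimizing the control cost is equivalent to minimizing $\mathbb{D}_{\mathrm{KL}}(\mathbf{Q}\,\|\,\mathbf{W}_{\widetilde{q}})$ subject to the same endpoint marginals. Since $\mathbf{P}=\int \mathbf{W}_{\mx}\mathrm{d}\mx$ is $\sigma$-finite and differs from $\mathbf{W}_{\widetilde{q}}$ only through the (fixed) initial marginal, minimizing entropy against $\mathbf{W}_{\widetilde{q}}$ or against $\mathbf{P}$ yields the same minimizer up to a constant depending only on $\widetilde{q}$. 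Therefore, by Theorem \ref{th01}, the unique minimizer is the Schrödinger bridge $\mathbf{Q}^{*}=f^{*}(Z_0)g^{*}(Z_1)\mathbf{P}$.

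Next, I would identify the drift of $\mathbf{Q}^{*}$ by an $h$-transform. Using the Markov property of $\mathbf{P}$ under the time reversal together with the definition
\begin{equation*}
g_t(\mx)\;=\;\mathbb{E}_{\mathbf{P}}\!\left[g^{*}(Z_1)\mid Z_t=\mx\right]\;=\;\int K(t,\mx,1,\my)\,g_1(\my)\,\mathrm{d}\my,
\end{equation*}
one checks that $(g_t(Z_t))_{t\in[0,1]}$ is a positive $\mathbf{P}$-martingale. The Radon–Nikodym density of $\mathbf{Q}^{*}$ restricted to $\mathscr{F}_t$ with respect to $\mathbf{W}_{\widetilde{q}}$ equals $f_0(Z_0)g_t(Z_t)/(\text{const})$, so by Girsanov, under $\mathbf{Q}^{*}$ the canonical process admits the semimartingale decomposition
\begin{equation*}
\mathrm{d}Z_t\;=\;\nabla_{\mx}\log g_t(Z_t)\,\mathrm{d}t+\mathrm{d}\widetilde{B}_t,
\end{equation*}
where $\widetilde{B}$ is a $\mathbf{Q}^{*}$-Brownian motion. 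Comparing with \eqref{drift} and \eqref{sdeb} shows $\mathbf{v}^{*}_t(\mx)=\nabla_{\mx}\log g_t(\mx)$ is the optimal drift. Admissibility of $\mathbf{v}^{*}$ follows since the entropy $\mathbb{D}_{\mathrm{KL}}(\mathbf{Q}^{*}\,\|\,\mathbf{W}_{\widetilde{q}})$ is finite for any solvable pair $(\widetilde{\nu},\widetilde{\mu})$, giving $\mathbb{E}[\int_0^1\|\mathbf{v}^{*}_t\|^2\mathrm{d}t]<\infty$.

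The main obstacle is the $h$-transform step: one must verify that $g_t(Z_t)$ is a genuine martingale (not merely a local martingale) so that Girsanov applies globally on $[0,1]$, and that the endpoint constraint $\mathbf{Q}^{*}_1=\widetilde{\mu}$ is automatically enforced by the factorization $q_t=f_tg_t$ together with the Schrödinger system \eqref{sbs}. Both points rest on the integrability provided by $\widetilde{\nu},\widetilde{\mu}\ll\mathscr{L}$ with finite entropy, together with the time-harmonic PDEs $\partial_t g_t=-\tfrac{1}{2}\Delta g_t$ and $\partial_t f_t=\tfrac{1}{2}\Delta f_t$. Once these are in place, uniqueness of the minimizer from Theorem \ref{th01} transfers directly to uniqueness of the optimal drift in $\mathcal{V}$.
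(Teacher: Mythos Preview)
The paper does not supply its own proof of Theorem~\ref{th02}: it is stated as a background result with the citation \citep{dai1991stochastic}, and nothing further is given in the text or in Appendix~\ref{append}. So there is no in-paper argument to compare your proposal against.

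That said, your outline is the standard route and is essentially how the result is established in \citep{dai1991stochastic} and surveyed in \citep{leonard2014survey}: (i) use Girsanov to rewrite the expected energy as relative entropy of the controlled path law against the Wiener measure with initial marginal $\widetilde q$; (ii) invoke the static Schr\"odinger bridge solution (Theorem~\ref{th01}) to identify the minimizing path measure $\mathbf{Q}^{*}=f^{*}(Z_0)g^{*}(Z_1)\mathbf{P}$; (iii) read off the optimal drift via the Doob $h$-transform, using that $g_t$ solves the backward heat equation so that $g_t(Z_t)$ is a positive martingale and Girsanov yields $\mathrm{d}Z_t=\nabla_{\mx}\log g_t(Z_t)\,\mathrm{d}t+\mathrm{d}\widetilde B_t$. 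Your identification of the delicate point---upgrading $g_t(Z_t)$ from a local to a true martingale so that Girsanov applies on the full interval---is accurate and is exactly where the finite-entropy hypothesis is used. The only mild caveat is that Theorem~\ref{th01} as stated assumes $\widetilde\nu,\widetilde\mu\ll\mathscr{L}$, whereas the Schr\"odinger--F\"ollmer specialization later in the paper takes $\widetilde\nu=\delta_0$; your argument is fine for the general statement of Theorem~\ref{th02}, but if you intend to cover the degenerate initial case you would need to appeal to the variant of the Schr\"odinger system with a Dirac initial marginal (as handled in \citep{dai1991stochastic,lehec2013,eldan2020}) rather than Theorem~\ref{th01} verbatim.
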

According to Theorem \ref{th02}, the dynamics determined by the SDE in \eqref{sdeb} with a time-varying drift term $\mathbf{v}^*_{t}$ in \eqref{drift} will drive the particles sampled from the initial distribution $\widetilde{\nu}$ to evolve to the particles drawn from the target distribution $\widetilde{\mu}$ on the unit time interval. This nice property is what we need in designing samplers: we can sample from the underlying target distribution $\widetilde{\mu}$ via pushing forward a simple reference distribution $\widetilde{\nu}$.
In particular, if we take the initial distribution $\widetilde{\nu}$ to be $\delta_0$, the degenerate distribution at $0$, then the Schr\"{o}dinger-F\"{o}llmer diffusion process \eqref{sch-equation} defined below is a solution to \eqref{sdeb}, i.e., it
will transport $\delta_0$ to the target distribution.

\subsubsection{Schr\"{o}dinger-F\"{o}llmer diffusion process}
Let $\mu \in \mathcal{P}\left(\mathbb{R}^{p}\right)$ denote the target distribution of interest. 
Suppose $\mu$ is absolutely continuous with respect to the $p$-dimensional standard Gaussian distribution $N(0,\bI_p).$
Let $f$ denote the Radon-Nikodym derivative of $\mu$ with respect to $N(0, \bI_p)$, or the ratio of the density of $\mu$ over the density of $N(0, \bI_p),$ i.e.,
\begin{equation}\label{drb}
 f(x)=\frac{d\mu}{dN(0,\bI_p)}(x), \ x \in \mathbb{R}^p.
 \end{equation}
Let $Q_t$ be the heat semigroup defined by
\begin{equation}\label{qdr}
Q_{t} f(x)=\Ebb_{Z \sim N(0,\bI_p)}[f(x+\sqrt{t} Z)],~ t \in [0, 1].
\end{equation}
The Schr\"{o}dinger-F\"{o}llmer diffusion process $\{X_t\}_{t\in [0,1]}$
is defined as \citep{follmer1985, follmer1986, follmer1988}
\begin{align}\label{sch-equation}
\mathrm{d} X_{t}=-\nabla_x U\left(X_{t}, t\right) \mathrm{d} t+\mathrm{d} B_{t},  \ X_{0}=0,  \  t \in [0,1],
\end{align}
where $U$ is the potential given by
\begin{equation}
\label{Udef}
U(x, t)=-\log Q_{1-t} f(x).
\end{equation}
This process $\{X_t\}_{t\in [0,1]}$ defined by \eqref{sch-equation} is a solution to \eqref{sdeb} with
 $\tilde{\nu} = \delta_{0}$, $\tilde{\mu} = \mu$, and $\mathbf{v}_{t}(x) = -\nabla_x U(x,t)$ \citep{dai1991stochastic, lehec2013, eldan2020}.
For notational convenience, we denote the drift term of the SDE \eqref{sch-equation} by
\begin{equation}\label{driftbb}
b(x,t)\equiv -\nabla_x U(x,t)=\frac{\Ebb_{Z}[\nabla f(x+\sqrt{1-t}Z)]}{\Ebb_{Z}[f(x)+\sqrt{1-t} Z)]},  \  x \in \mathbb{R}^p, t \in [0, 1],
\end{equation}
where $Z \sim N(0, \bI_p)$.

To ensure that the SDE \eqref{sch-equation} admits a unique strong solution, we assume that the drift term $b$ satisfies a linear growth condition and a Lipschitz continuity condition \citep{revuz2013continuous,pavliotis2014stochastic},
\begin{align}\label{cond1}
\|b(x,t)\|_2^2\leq C_0(1+\|x\|_2^2) \tag{C1}, \ x \in \mathbb{R}^p, t \in [0,1]
\end{align}
and
\begin{align}\label{cond2}
\|b(x,t)-b(y,t)\|_2\leq C_1 \|x-y\|_2 \tag{C2}, \ x, y \in \mathbb{R}^p, t \in [0,1],
\end{align}
where $C_0$ and $C_1$ are finite positive constants.
\begin{proposition}\label{SBP}
Assume \eqref{cond1} and \eqref{cond2} hold, then the Schr{\"o}dinger-F\"{o}llmer
SDE  \eqref{sch-equation}
has a unique strong solution $\{X_t\}_{t\in[0,1]}$ with $X_0 \sim \delta_0$ and  $X_{1} \sim \mu$.
\end{proposition}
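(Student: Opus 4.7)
The plan is to separate the two assertions: (i) existence of a pathwise-unique strong solution on $[0,1]$, and (ii) the terminal marginal $X_1 \sim \mu$. Part (i) is a direct application of classical SDE theory, and part (ii) identifies \eqref{sch-equation} with the Schr\"{o}dinger-bridge SDE of Theorem \ref{th02} and verifies the marginal by checking the corresponding Fokker-Planck equation.

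For (i), I would invoke the standard existence-and-uniqueness theorem for It\^{o} SDEs whose drift is globally Lipschitz and grows at most linearly, with a constant (here, identity) diffusion coefficient; see \citet{pavliotis2014stochastic} or \citet{revuz2013continuous}. Conditions \eqref{cond1} and \eqref{cond2} hold uniformly in $t\in[0,1]$ and supply exactly the hypotheses required, and the initial condition $X_0=0$ is deterministic, hence trivially square-integrable. Consequently a unique continuous adapted strong solution $\{X_t\}_{t\in[0,1]}$ exists on the whole unit interval, and standard Gr\"{o}nwall moment estimates derived from \eqref{cond1} give $\sup_{t\in[0,1]}\Ebb\|X_t\|_2^2<\infty$, which I will need below.

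For (ii), the initial marginal $X_0\sim\delta_0$ is automatic by construction, so the content is the identification $X_1\sim\mu$. The natural candidate for the density of $X_t$ is
$$p_t(x)=\phi_t(x)\,Q_{1-t}f(x),$$
where $\phi_t(x)=(2\pi t)^{-p/2}\exp(-\|x\|_2^2/(2t))$ is the $N(0,t\bI_p)$ density. A direct Gaussian calculation confirms $\int p_t(x)\,dx=1$ for every $t\in(0,1]$, and at the endpoints $p_t\Rightarrow\delta_0$ as $t\downarrow 0$ (since $Q_1 f$ is bounded continuous with $Q_1 f(0)=1$), while $p_1(x)=\phi_1(x)f(x)$, which by \eqref{drb} is precisely the Lebesgue density of $\mu$. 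To verify that $p_t$ really is the law of $X_t$, I would check the Fokker-Planck equation $\partial_t p_t=-\nabla\cdot(b(\cdot,t)p_t)+\tfrac{1}{2}\Delta p_t$ associated with \eqref{sch-equation}. Combining the forward heat equation $\partial_t\phi_t=\tfrac{1}{2}\Delta\phi_t$ with the backward heat equation $\partial_t Q_{1-t}f=-\tfrac{1}{2}\Delta Q_{1-t}f$, and using that \eqref{driftbb} yields $b(x,t)=\nabla_x\log Q_{1-t}f(x)$ so $b\,p_t=\phi_t\,\nabla_x Q_{1-t}f$, the cross terms cancel exactly and the identity reduces to a tautology. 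Uniqueness for the Fokker-Planck equation among weak solutions with finite second moment (available from the moment bound above) then identifies $p_t$ with the law of $X_t$, and evaluating at $t=1$ gives the claim.

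The main obstacle is making these formal manipulations rigorous at the endpoint $t=0$, where the true marginal is a point mass and the candidate density is singular. I would handle this either by (a) working first on subintervals $[\varepsilon,1]$ with the genuinely smooth density of $X_\varepsilon$ as initial datum and passing to the limit $\varepsilon\downarrow 0$, using uniform moment bounds to control the limit; or (b) invoking Girsanov's theorem to write the law of $X_1$ directly as a change of measure against Wiener measure started at $0$, with Novikov's condition being secured by \eqref{cond1} together with the moment bound from step (i). In either route, the crucial simplification at the terminal time is the identity $Q_0 f=f$. Alternatively, the proposition can be deduced by quoting the Schr\"{o}dinger-bridge construction already summarized in Theorems \ref{th01}-\ref{th02} together with \citet{follmer1985,dai1991stochastic,lehec2013,eldan2020}, noting that \eqref{sch-equation} is precisely \eqref{sdeb} specialized to $\widetilde{\nu}=\delta_0$, $\widetilde{\mu}=\mu$, with the optimal drift \eqref{drift} rewritten as \eqref{driftbb}.
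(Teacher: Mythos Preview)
Your proposal is correct, and your candidate marginal density $p_t(x)=\phi_t(x)\,Q_{1-t}f(x)$ coincides exactly with the paper's $p_{0,t}(0,x)$ once one notes $Q_1 f(0)=\Ebb_{Z}[f(Z)]=1$. The two arguments differ in how this density is justified: the paper writes down the Doob $h$-transform transition density
\[
p_{s,t}(x,y)=\widetilde{p}_{s,t}(x,y)\,\frac{Q_{1-t}f(y)}{Q_{1-s}f(x)}
\]
as a known fact and reads off $X_1\sim\mu$ from $p_{0,1}(0,\cdot)$ in two lines, whereas you derive the marginal by checking the Fokker--Planck equation using the forward/backward heat equations for $\phi_t$ and $Q_{1-t}f$. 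Your route is more self-contained and makes the endpoint issue at $t=0$ explicit (and your proposed fixes via restricting to $[\varepsilon,1]$ or Girsanov are standard and adequate); the paper's route is shorter but leans entirely on the $h$-transform formula and does not separately argue existence/uniqueness, which you handle cleanly via \eqref{cond1}--\eqref{cond2}. Either approach is acceptable; yours is closer to a complete proof, while the paper's is effectively a citation of the classical Schr\"{o}dinger--F\"{o}llmer construction.
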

\begin{remark}\label{Re1}
\item[(i)]  Proposition  \ref{SBP} is a known property of the Schr{\"o}dinger-F\"{o}llmer process
\citep{dai1991stochastic,lehec2013,tzen2019theoretical,eldan2020}.
See also the review \citep{leonard2014survey} for additional discussions and historical account on the Schr\"{o}dinger problem.
\item[(ii)] The drift term $b(x,t)$ is scale-invariant with respect to $f$ in the sense that
$b(x, t)=\nabla \log Q_{1-t} Cf(x), \forall C>0$. Therefore, the Schr{\"o}dinger-F\"{o}llmer diffusion can be used for sampling from an unnormalized distribution $\mu$,  that is,  the normalizing constant of  $\mu$ does not need to be known.
\item[(iii)]
If  $f$ and $\nabla f$ are Lipschitz continuous, and $f$ has a lower bound strictly greater than 0, then we can easily deduce that assumptions \eqref{cond1} and \eqref{cond2} hold.
 When the target distribution $\mu$ is a Gaussian mixture distribution \eqref{gaumix} below,
$f$ and $\nabla f$ are Lipschitz continuous if $\bI_p>\Sigma_i, i=1,\ldots,\kappa$.
\end{remark}

Suppose that the target distribution has a density function with respect to the Lebesgue measure $\mathscr{L}$  on $(\mathbb{R}^p,\mathcal{B}(\mathbb{R}^p))$.
Let $\mu$ also denote the density function with a normalizing constant $C>0$.
Without loss of generality, we write
\begin{align}\label{tdensity}
\mu(x)=\frac{1}{C} \exp (-V(x)), \   x  \in \mathbb{R}^p,
\end{align}
where $V$ has a known form, but $C$ may be unknown.
The Radon-Nikodym derivative of $\mu$ with respect to $N(0, \bI_p)$ can be written as $f(x)= C^{-1}{(2\pi)^{p/2}} g(x)$,
where
\begin{align}\label{gf}
g(x) = \exp \left(-V(x)+\frac{1}{2}\|x\|^{2} \right), \ x \in \mathbb{R}^p.
\end{align}
If  the potential $V$ in \eqref{tdensity} takes the form
\begin{align}\label{vfun}
V(x)=a_1x^{\top}\bA x+a_2\eta^{\top}x+a_3,
\end{align}
where $a_1,a_2,a_3 \in \mathbb{R}$ are constants, $\bA \in \mathbb{R}^{p\times p}$ is a positive definite matrix  and $\eta \in \mathbb{R}^{p}$ is a vector, then the closed form expression of $b$ can be computed.
Several common distributions are special cases.
For example,  $\mu$ simplifies to a uniform distribution if $a_1=a_2=0$ and its support is a bounded set in $\mathbb{R}^p$; it is an exponential distribution if $a_1=0$;  and it is a normal distribution for $a_1>0$.
Also, when the target distribution $\mu$ is a finite mixture of distributions with potential function given by \eqref{vfun}, the drift terms can be calculated explicitly.
Using widely used Gaussian mixture models as an illustrative example, we derive the explicit form expression of the corresponding drift terms in the following subsection.
\subsubsection{Gaussian mixture distributions}
Assume that the target distribution $\mu$ is a Gaussian mixture,  i.e.,
\begin{align}\label{gaumix}
\mu=\sum_{i=1}^{\kappa}\theta_iN(\alpha_i,\Sigma_i), \
\sum_{i=1}^{\kappa} \theta_i=1 \text{ and } 0\leq\theta_i\leq1, i=1,\ldots, \kappa,
\end{align}
where $\kappa$ is the number of mixture components, $N(\alpha_i,\Sigma_i)$ is the $i$th Gaussian component with mean $\alpha_i\in \mathbb{R}^p$ and covariance matrix $\Sigma_i\in \mathbb{R}^{p\times p}$.
Obviously, the target distribution $\mu$ in \eqref{gaumix} is absolutely continuous with respect to the $p$-dimensional standard Gaussian distribution $N(0,\bI_p)$.  The density ratio is $f=\sum_{i=1}^{\kappa} \theta_i f_i$, where $f_i=\frac{d N(\alpha_i,\Sigma_i)}{d N(0,\bI_p)}$ is the density ratio of $N(\alpha_i, \Sigma_i)$ over $N(0, \bI_p)$.
The drift term of  the
Schr{\"o}dinger-F{\"o}llman SDE \eqref{sch-equation} is
\begin{align}\label{dr}
b(x,t)=\frac{\sum_{i=1}^{\kappa} \theta_i \Ebb_{Z}\nabla f_i(x+\sqrt{1-t}Z)}{\sum_{i=1}^{\kappa} \theta_i \Ebb_{Z} f_i(x+\sqrt{1-t}Z)}, \ Z \sim N(0, \bI_p).
\end{align}
To obtain the expression of the drift term $b(x,t)$ in \eqref{dr}, we only need to derive the expressions of  $\Ebb_{Z}\nabla f_i(x+\sqrt{1-t}Z)$ and $\Ebb_{Z} f_i(x+\sqrt{1-t}Z)$, $i=1\ldots \kappa$.
Some tedious calculation shows that
\begin{align}
&
\Ebb_{Z}\nabla f_i(x+\sqrt{1-t}Z) \nonumber\\
&=\frac{\Sigma_i^{-1}\alpha_i+(\bI_p-\Sigma_i^{-1})[t\bI_p+(1-t)\Sigma_i^{-1}]^{-1}
[(1-t)\Sigma_i^{-1}\alpha_i+x]}
{|t\Sigma_i+(1-t)\bI_p|^{1/2}}g_i(x,t), \label{exp1} \\
&\Ebb_{Z} f_i(x+\sqrt{1-t}Z)=\frac{g_i(x,t)}{|t\Sigma_i+(1-t)\bI_p|^{1/2}} ,
\label{exp2}
\end{align}
where
\begin{align*}
g_i(x,t)&=\exp\left(\frac{1}{2-2t}\|(t\bI_p+(1-t)\Sigma_i^{-1})^{-1/2}((1-t)\Sigma_i^{-1}\alpha_i+x)\|_2^2\right)\\
&~~~~\times \exp\left(-\frac{1}{2}\alpha_i^T\Sigma_i^{-1}\alpha_i-\frac{1}{2-2t}\|x\|_2^2
\right).
\end{align*}
Therefore,  we can obtain an analytical expression of $b(x,t)$  by plugging the expressions \eqref{exp1}
and \eqref{exp2} into \eqref{dr}. See Appendix \ref{append} for details.
\subsection{SFS based on Euler-Maruyama discretization} \label{discretization}
Proposition  \ref{SBP} shows that we can start from $X_0=0$ and update the values of
$\{X_t: 0 < t \le 1\}$ according to the Sch\"{o}linger-F\"{o}llmer SDE \eqref{sch-equation} in continuous time, then the value $X_1$
 has the desired distributional property, that is,
$X_1 \sim \mu$.  Hence, to implement this sampling procedure computationally, we just need to discretize the continuous process.  We use the Euler-Maruyama discretization for the SDE \eqref{sch-equation} with a fixed step size.
Let $$t_{k}=k\cdot s, ~k=0,1, \ldots, K, ~\mbox{with}~ s=1 / K,$$
and set $Y_{t_{0}}=0$.
Then the Euler-Maruyama discretization of \eqref{sch-equation} has the form
\begin{align}\label{emd}
Y_{t_{k+1}}=Y_{t_k}+s  b(Y_{t_{k}}, t_{k})+\sqrt{s}\,\epsilon_{k+1},
~
k=0,1,\ldots, K-1,
\end{align}
where $\{\epsilon_{k}\}_{k=1}^{K}$ are independent and identically distributed random vectors from $N(0,\bI_{p})$ and
\begin{align}\label{driftb}
b(Y_{t_{k}},t_{k})=\frac{\Ebb_{Z}[\nabla f(Y_{t_{k}}+\sqrt{1-t_{k}}Z)]}{\Ebb_{Z}[f(Y_{t_{k}}+\sqrt{1-t_{k}} Z)]}, \ Z \sim N(0, \bI_p).
\end{align}
The main computational task involved in updating the Euler-Maruyama discretization \eqref{emd} is to compute the drift term $b$ defined in \eqref{driftb}. The following points are worth noting.
\begin{enumerate}[(a)]
\item Recall the Radon-Nikodym derivative
 $f(x)=\frac{d \mu}{d N(0,\bI_p)}(x).$
 The normalizing constant of $\mu$ cancels out from the numerator and the denominator of the drift term $b$. Therefore, SFS can sample from unnormalized distributions.

 \item It is generally intractable to calculate the drift term $b$  analytically when the target distribution $\mu$ has a complex structure.  Also, it involves the derivative $\nabla f$, which can have a complicated form and be difficult to compute.
\end{enumerate}
By Stein's lemma
\citep{stein1972, stein1986,
landsman2008stein},
we have
\[
\Ebb_{Z}[\nabla f(Y_{t_{k}}+\sqrt{1-t_{k}}Z)]=\frac{1}{\sqrt{1-t_{k}}}
\Ebb_{Z}[Z f(Y_{t_{k}}+\sqrt{1-t_{k}}Z)].
\]
This identity enables us to avoid the calculation of $\nabla f.$
The drift term can be rewritten as
\begin{align*}
b(Y_{t_{k}},t_{k})=
\frac{\Ebb_{Z}[Z f(Y_{t_{k}}+\sqrt{1-t_{k}}Z)]}{\Ebb_{Z}[f(Y_{t_{k}}+\sqrt{1-t_{k}} Z)]\cdot\sqrt{1-t_k}}.
\end{align*}
Since $g$ in \eqref{gf} is proportional to $f$ up to a multiplicative constant independent of $x$, we can express $b$ in terms of $g$ as
\begin{equation}\label{driftd1}
b(Y_{t_{k}},t_{k})=
\frac{\Ebb_{Z}[\nabla g(Y_{t_{k}}+\sqrt{1-t_{k}}Z)]}{\Ebb_{Z}[g(Y_{t_{k}}+\sqrt{1-t_{k}} Z)]},
\end{equation}
or
\begin{equation}\label{driftd2}
b(Y_{t_{k}},t_{k})=
\frac{\Ebb_{Z}[Z g(Y_{t_{k}}+\sqrt{1-t_{k}}Z)]}{\Ebb_{Z}[g(Y_{t_{k}}+\sqrt{1-t_{k}} Z)]\cdot\sqrt{1-t_k}}.
\end{equation}
This expression no longer involves any unknown constants.
The pseudocode for implementing \eqref{emd} is presented in Algorithm \ref{alg:1}.
\begin{algorithm}[H]
\caption{SFS  $\mu = \exp (-V(x))/C$.}
\label{alg:1}
	\begin{algorithmic}[1]
\STATE Input:  $V(x)$, $K$.  Initialize $s=1/K$, $Y_{t_0}=0$.
\FOR{$k= 0,1,\ldots, K-1$ }
\STATE Sample $\epsilon_{k+1}\sim N(0,\bI_{p})$,
\STATE Compute the drift term $b(Y_{t_{k}}, t_{k})$ by \eqref{driftd1} or \eqref{driftd2},
\STATE Update
$Y_{t_{k+1}}=Y_{t_{k}}+sb(Y_{t_{k}}, t_{k})+\sqrt{s} \epsilon_{k+1}$.
\ENDFOR
\STATE Output:  $\{Y_{t_k}\}_{k=1}^{K}$.
\end{algorithmic}
\end{algorithm}

However, unlike the case of Gaussian mixture distributions \eqref{gaumix} discussed earlier, the drift term $b$ generally does not have a closed form expression.
Fortunately, it can be easily calculated approximately up to any desired precision via Monte Carlo. Let $Z_1, \ldots, Z_{m}$ be i.i.d. $N(0, \bI_p)$, where $m\ge 1$ is sufficiently large. Based on \eqref{driftd1} and \eqref{driftd2}, we can approximate $b$ by
\begin{align}\label{drifte1}
\tilde{b}_m(Y_{t_{k}},t_{k})=
\frac{\frac{1}{m}\sum_{j=1}^m[\nabla g(Y_{t_{k}}+\sqrt{1-t_{k}}Z_j)]}{\frac{1}{m}\sum_{j=1}^{m} [g(Y_{t_{k}}+\sqrt{1-t_{k}} Z_j)]}, \ k=0, \ldots, K-1,
\end{align}
or
\begin{align}\label{drifte2}
\tilde{b}_m(Y_{t_{k}},t_{k})=
\frac{\frac{1}{m}\sum_{j=1}^m[Z_j g(Y_{t_{k}}+\sqrt{1-t_{k}}Z_j)]}{\frac{1}{m}\sum_{j=1}^{m} [g(Y_{t_{k}}+\sqrt{1-t_{k}} Z_j)]\cdot\sqrt{1-t_k}}, \ k=0, \ldots, K-1.
\end{align}
Then, the Euler-Maruyama discretization \eqref{emd} becomes
$$
\widetilde{Y}_{t_{k+1}}=\widetilde{Y}_{t_{k}}+s\tilde{b}_{m}(\widetilde{Y}_{t_{k}}, t_{k})
+\sqrt{s} \, \epsilon_{k+1}, k=0,1, \ldots, K-1,
$$
where $\{\epsilon_{k}\}_{k=1}^{K}$ are i.i.d. $N\left(0,\bI_{p}\right)$.
We present the pseudocode of SFS in Algorithm \ref{alg:2} below.
\begin{algorithm}[H]
	\caption{SFS for $\mu = \exp (-V(x))/C$  with  Monte Carlo estimation of the drift term}
    \label{alg:2}
	\begin{algorithmic}[1]
\STATE Input: $V(x)$, $m$, $K$.  Initialize $s=1/K$, $\widetilde{Y}_{t_0}=0$.
\FOR{$k= 0,1,\ldots, K-1$ }
\STATE Sample $\epsilon_{k}\sim N(0,\bI_{p})$,
\STATE Compute $\tilde{b}_{m}$ according to \eqref{drifte1} or \eqref{drifte2},
\STATE Update
$\widetilde{Y}_{t_{k+1}}=\widetilde{Y}_{t_{k}}+s\tilde{b}_{m}\left(\widetilde{Y}_{t_{k}}, t_{k}\right)+\sqrt{s}\, \epsilon_{k+1}$.
\ENDFOR
\STATE Output:  $\{\widetilde{Y}_{t_k}\}_{k=1}^{K}$
\end{algorithmic}
\end{algorithm}
\section{Comparison with sampling via Langevin diffusion }\label{compare}
There are several important differences between the Schr{\"o}dinger-F{\"o}llman  diffusion and the Langevin diffusion. First, the SDE \eqref{sch-equation} is defined on the finite time interval $[0,1]$, while the Langevin diffusion in \eqref{lange} is defined on the infinite time interval $[0,\infty)$. Second,  the drift term $b$ is determined by the Radon-Nikydom derivative of the target distribution with respect to the Gaussian distribution in \eqref{sch-equation} and  is time-varying; in comparison, the drift term $-\nabla V(\cdot)$ in \eqref{lange} is the gradient of the log target density and independent of time. Third and most importantly, $X_1$ of the Schr{\"o}dinger-F{\"o}llman diffusion in \eqref{sch-equation} is exactly distributed as the target distribution $\mu$, but the law of $L_t$ only converges to the target distribution as $t$ goes to infinity.

To further illustrate the difference between the Schr{\"o}dinger-F{\"o}llman diffusion and the Langevin diffusion, we consider the canonical case when the target distribution is the standard Gaussian $N(0,\bI_p)$.
\subsection{Standard normal  target: continuous solution}
 In this case,  $b(x,t)\equiv 0$ in the Schr{\"o}dinger-F{\"o}llman diffusion \eqref{sch-equation}.
Hence,  $X_t=B_t$ for $t \in [0, 1]$.
Therefore, $X_1$ is exactly Gaussian, i.e., $X_1\sim N(0,\bI_p)$.
In this scenario, the Langevin diffusion \eqref{lange} becomes
\begin{align}\label{ou}
dL_t=-L_t dt+\sqrt{2}dB_t,~t\geq 0.
\end{align}
The SDE \eqref{ou} defines an Ornstein-Uhlenbeck process with the transition probability density
$$
p_t(y)=\frac{1}{\sqrt{2\pi(1-\exp(-2t))}}\exp\left(-\frac{(y-y_0\exp(-t))^2}{2(1-\exp(-2t))}\right),
$$
if $L_0=y_0$, see \cite{pavliotis2014stochastic} for details.
The law of $L_t$ is not exactly normal but will converge to the standard Gaussian
distribution as $t \to \infty$, irrespective of the initial position $y_0$.

Therefore, in the case of standard Gaussian, the distribution of $X_1$ in the  Schr{\"o}dinger-F{\"o}llman SDE  \eqref{sch-equation} is exactly the same as the target distribution. In comparison, the distribution of $L_t$ in the Langevin diffusion only converges to the target distribution as $t\rightarrow \infty$.

\subsection{Standard normal target: Euler-Maruyama discretization}
Next, we compare Euler-Maruyama discretizations of  Schr{\"o}dinger-F{\"o}llman  SDE \eqref{sch-equation} and the Langevin SDE \eqref{ou} when the target distribution $\mu=N(0,\bI_p)$.
In this case, the Euler-Maruyama discretization \eqref{emd} of  Schr{\"o}dinger-F{\"o}llman  SDE \eqref{sch-equation} yields
 \begin{align*}
Y_{t_{k+1}}=Y_{t_k}+\sqrt{s}\epsilon_{k+1},
~ k=0,1,\ldots, K-1.
\end{align*}
Since $Y_{t_0}=0$, then $Y_{t_K}=\sqrt{s}\sum_{k=0}^{K-1}\epsilon_{k+1}$ is exactly distributed as  $N(0,\bI_p)$. That is, SFS is an exact sampler in finite  steps for any fixed step size.

In comparison, the Euler-Maruyama iterative sequence of \eqref{ou} is
\begin{align}\label{oulange}
\widetilde{L}_{t_{k+1}}=(1-h_k)\widetilde{L}_{t_{k}}+\sqrt{2h_k}\epsilon_{k+1}, ~ k=0,1,\ldots,
\end{align}
where $h_k=t_{k+1}-t_{k}$ is the step size. If we set $h_k=h$ to be the fixed step size and $\widetilde{L}_{t_0}=0$,
then $\widetilde{L}_{t_{k}}=\sqrt{2h}\sum_{i=0}^{k-1}(1-h)^{i}\epsilon_{k-i}$ is distributed as $N\left(0,\frac{2(1-(1-h)^{2k})}{2-h}\bI_p\right)$.
So the law of $\widetilde{L}_{t_{k}}$ will converge to $N\left(0,\frac{2}{2-h}\bI_p\right)$ as $k \to \infty$ for any given $0<h<1$. But this limit distribution $N\left(0,\frac{2}{2-h}\bI_p\right)$ is still not equal to $N(0,\bI_p)$, and it approximates $N(0,\bI_p)$ only when $h$ is small. Therefore, when the target is the standard Gaussian, the discretized Langevin method only samples from an approximate target distribution.

The above calculation demonstrates that SFS performs better than the Euler-Maruyama discretization of the Langevin SDE in the canonical case when the target distribution is standard Gaussian. This suggests that SFS can be a more accurate and efficient sampler.

\section{Theoretical properties}\label{Theorey}
In this section, we establish the non-asymptotic bounds on the Wasserstein distance
between the law of the samples generated via SFS and the target distribution using Algorithms \ref{alg:1} or \ref{alg:2}. To this end, we further assume that the drift term $b(x,t)$ is Lipschitz continuous in $x$ and $\frac{1}{2}$-H{\"o}lder continuous in $t$, that is,
\begin{align}\label{cond3}
\|b(x,t)-b(y,s)\|_2\leq  C_1 \left(\|x-y\|_2+|t-s|^{\frac{1}{2}}\right), \
x, y\in \mathbb{R}^p \text{ and } t, s\in [0,1],
\tag{C3}
\end{align}
where $C_1>0$ is a finite constant. Obviously,  \eqref{cond3} implies  \eqref{cond2}  by setting $t=s$ in \eqref{cond3}.

\begin{remark}\label{R2}
Since $f(x) \varpropto g(x)$ defined in \eqref{gf},
 $g$ and $\nabla g$ are Lipschitz continuous and $g$ has a lower bound strictly greater than 0 imply \eqref{cond1} and \eqref{cond3},
 see Appendix \ref{append} for details.
\end{remark}

Let $\nu_1$ and $\nu_2$  be two probability measures defined on $\left(\mathbb{R}^{p},\mathcal{B}(\mathbb{R}^{p})\right)$, and denote $\mathcal{D}(\nu_1, \nu_2)$ as the collection of couplings $\nu$ on $\left(\mathbb{R}^{2p},\mathcal{B}(\mathbb{R}^{2p})\right)$ whose first and second marginal distributions are $\nu_1$ and $\nu_2$, respectively.  The Wasserstein distance of order $d \geq 1$ is defined as
\begin{align*}
W_{d}(\nu_1, \nu_2)=\inf _{\nu \in \mathcal{D}(\nu_1, \nu_2)}\left(\int_{\mathbb{R}^{p}} \int_{\mathbb{R}^{p}}\left\|\theta_1-\theta_2\right\|_2^{d} \mathrm{d} \nu\left(\theta_1,\theta_2\right)\right)^{1/d}.
\end{align*}

\subsection{Error bounds for SFS in Algorithm \ref{alg:1}}
 Let $Y_{t_K}$ be the value from the last iteration in Algorithm \ref{alg:1},  where
 we assume that the exact values of the drift term $b$ can be computed.

\begin{theorem}\label{th1}
Under the conditions \eqref{cond1} and \eqref{cond3}, we have
\begin{align}
\label{th1bound}
W_2(\mbox{Law} (Y_{t_K}),\mu)\leq
\mathcal{O}(\sqrt{ps}),
\end{align}
where $s=1/K$ is the step size.
\end{theorem}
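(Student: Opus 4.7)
The plan is to execute the standard synchronous-coupling strong-error analysis for Euler--Maruyama, carefully tracking the $p$-dependence. Since Proposition \ref{SBP} gives $X_1\sim\mu$, it suffices to show $\Ebb\|X_1-Y_{t_K}\|_2^2=\mathcal{O}(ps)$, because for any coupling $W_2(\mbox{Law}(Y_{t_K}),\mu)\leq \|X_1-Y_{t_K}\|_{L_2}$. The key construction is to promote the discrete scheme \eqref{emd} to a continuous-time interpolation driven by the \emph{same} Brownian motion $\{B_t\}$ as in \eqref{sch-equation}. Writing $\tau(u)=t_k$ for $u\in[t_k,t_{k+1})$, define
$$\bar{Y}_t=Y_{t_k}+(t-t_k)\,b(Y_{t_k},t_k)+(B_t-B_{t_k}),\qquad t\in[t_k,t_{k+1}],$$
so that $\bar{Y}_{t_k}=Y_{t_k}$ at grid points and $d\bar{Y}_t=b(Y_{\tau(t)},\tau(t))\,dt+dB_t$. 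Subtracting from \eqref{sch-equation} and using the common Brownian term kills the martingale part, yielding
$$X_t-\bar{Y}_t=\int_0^t\bigl[b(X_u,u)-b(Y_{\tau(u)},\tau(u))\bigr]\,du.$$

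Next I would decompose the integrand as $[b(X_u,u)-b(\bar{Y}_u,u)]+[b(\bar{Y}_u,u)-b(Y_{\tau(u)},\tau(u))]$. Condition \eqref{cond3} dominates the first summand by $C_1\|X_u-\bar{Y}_u\|_2$ and the second by $C_1\bigl(\|\bar{Y}_u-Y_{\tau(u)}\|_2+|u-\tau(u)|^{1/2}\bigr)$. To control the one-step displacement I would first establish the a priori moment bound $\max_{0\le k\le K}\Ebb\|Y_{t_k}\|_2^2\lesssim p$; this follows from the recursion \eqref{emd} by taking squared norms, using independence of $\epsilon_{k+1}$ from $Y_{t_k}$ (which contributes $sp$), bounding the drift via \eqref{cond1}, and applying discrete Gr\"onwall over $K=1/s$ steps on the unit interval. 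Combining this with $\Ebb\|B_u-B_{\tau(u)}\|_2^2=p(u-\tau(u))$ and the identity $\bar{Y}_u-Y_{\tau(u)}=(u-\tau(u))b(Y_{\tau(u)},\tau(u))+(B_u-B_{\tau(u)})$ gives $\Ebb\|\bar{Y}_u-Y_{\tau(u)}\|_2^2\lesssim ps$.

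Feeding these pieces into Cauchy--Schwarz applied to the integral representation yields
$$\Ebb\|X_t-\bar{Y}_t\|_2^2\leq C\int_0^t\Ebb\|X_u-\bar{Y}_u\|_2^2\,du+C\,ps,\qquad t\in[0,1],$$
where the constant $C$ depends only on $C_0$ and $C_1$; the $|u-\tau(u)|$ term from the H\"older-in-time part contributes $\mathcal{O}(s)\leq\mathcal{O}(ps)$ and is absorbed. Gr\"onwall's inequality on $[0,1]$ then gives $\Ebb\|X_1-Y_{t_K}\|_2^2\lesssim ps$, which is exactly \eqref{th1bound}. The main obstacle I anticipate is keeping the dimension-dependence clean: the a priori bound $\Ebb\|Y_{t_k}\|_2^2\lesssim p$ must be uniform in $k$ so that the $p$ on the drift side does not amplify across the $K$ steps, and in the final Gr\"onwall step the additive $ps$ must not be re-inflated by a factor of $1/s$. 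A secondary technical point is noting that the Hölder-in-time term enters squared (hence linearly in $|u-\tau(u)|$), so it integrates to $\mathcal{O}(s)$ rather than $\mathcal{O}(\sqrt{s})$ and does not worsen the rate.
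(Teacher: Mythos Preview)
Your proposal is correct and follows the same overall strategy as the paper: synchronous coupling (same Brownian motion for $X$ and the Euler scheme), the integral identity that kills the martingale part, condition \eqref{cond3} to Lipschitz-bound the drift difference, and a Gr\"onwall argument on $[0,1]$ to conclude $\Ebb\|X_1-Y_{t_K}\|_2^2=\mathcal{O}(ps)$.

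The one substantive difference is where the ``one-step displacement'' is measured. You interpolate the Euler scheme to $\bar{Y}_t$, split via the intermediate point $\bar{Y}_u$, and control $\Ebb\|\bar{Y}_u-Y_{\tau(u)}\|_2^2\lesssim ps$ using an a priori bound on the \emph{discrete} iterates $\Ebb\|Y_{t_k}\|_2^2\lesssim p$, followed by continuous Gr\"onwall. The paper instead works directly at grid points, splits via $X_{t_{k-1}}$, and controls $\Ebb\|X_u-X_{t_{k-1}}\|_2^2\lesssim ps$ using an a priori bound on the \emph{continuous} process (their Lemmas \ref{lemma1o}--\ref{lemma2o}), followed by a discrete recursion $\Ebb\|Y_{t_k}-X_{t_k}\|_2^2\leq (1+\mathcal{O}(s))\Ebb\|Y_{t_{k-1}}-X_{t_{k-1}}\|_2^2+\mathcal{O}(ps^2)$. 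Both variants are standard and yield the same $\mathcal{O}(ps)$ squared error; your continuous-interpolation version is arguably slightly cleaner because the Gr\"onwall step is one line, while the paper's discrete recursion makes the constants more explicit.
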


\begin{remark}
The error bound in (\ref{th1bound}) is non-asymptotic in the sense that it holds for
any given values of the dimension $p$ and the step size $s$. The $O(1)$ factor in the bound only depends on the constants in the conditions \eqref{cond1} and \eqref{cond3}. This can be seen in the proof of Theorem \ref{th1} given in the appendix. Similar comments apply to Theorems \ref{th2}-\ref{th4} given below.
\end{remark}

\begin{remark}
For the Gaussian mixture distribution \eqref{gaumix}, \eqref{cond1} and \eqref{cond3}
 hold  if  $\bI_p>\Sigma_i, i=1,\ldots,\kappa$ and $f$ is bounded away from zero.
\end{remark}

\begin{remark}
The convergence rate $\sqrt{s}$ is the optimal strong convergence rate when using
the Euler-Maruyama discretization method for solving  SDE \citep{kloeden1992stochastic, weinan2019applied}.  The error rate only depends on the square root of the ambient dimension $p$,
but not on $p$ exponentially.    In the high-dimensional settings with large $p$, to ensure the error converges to zero, we can set the step size $s = o(1/p)$. In other words, he number of iterations $K=1/s$ of SFS only depends on $p$ super-linearly, but not exponentially.
Therefore,  SFS does not suffer from the curse of dimensionality.
\end{remark}

\subsection{Error bounds for SFS in Algorithm \ref{alg:2}}
Algorithm \ref{alg:2} deals with the case that when  the exact values of the drift term $b$ cannot be computed, only Monte Carlo approximations to $b$ are available.
To  establish the  non-asymptotic error bounds, we further assume that the potential $U(x,t)$ is strongly convex in $x$, i.e., there exists a finite  constant $M> 0$ such that for all $x,y \in \mathbb{R}^p$ and $t \in [0,1]$,
\begin{align}\label{cond4}
U(x,t)-U(y,t)-\nabla U(y,t)^{\top}(x-y)\geq (M/2)\left\|x-y\right\|^2_2. \tag{C4}
\end{align}
Without loss of generality, we can assume that $M <C_1$, where $C_1$ is given in \eqref{cond3}.
Condition \eqref{cond4}
is  similar to the condition H2 of \cite{chau2019stochastic} and the Assumption 3.2 of \cite{barkhagen2018stochastic}, which are used in their analysis of stochastic gradient
Langevin dynamics.
The strong convexity condition on the potential function $V$ is commonly assumed in the convergence analysis of Langevin algorithms
\citep{durmus2016high-dimensional,
durmus2016sampling,
durmus2017nonasymptotic,
dalalyan2017further,
dalalyan2017theoretical,
cheng2018convergence,
dalalyan2019user-friendly}.
These works focused on the Euler-Maruyama discretization of Langevin SDE \eqref{lange} and established non-asymptotic error bounds in Wasserstein distance, Kullback-Leibler divergence and total variation distance.

\begin{theorem}\label{th2}
Assume \eqref{cond1}, \eqref{cond3} and
\eqref{cond4} hold,  $g$ and $\nabla g$ are Lipschitz continuous, and $g$
is bounded below away from 0.
Then, for  $s<\frac{2}{C_1+M}<1$,
\begin{align*}
W_2(\mbox{Law}(\widetilde{Y}_{t_K}),\mu)\leq \mathcal{O}(\sqrt{ps})+\mathcal{O}\left(\sqrt{\frac{p}{\log(m)}}\right),
\end{align*}
 where $m$ is the size of the normal random sample used in approximating the drift term $b$ in (\ref{drifte1}) or (\ref{drifte2}).
\end{theorem}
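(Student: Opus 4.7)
The plan is to bound the total error by the triangle inequality
\[
W_2(\mathrm{Law}(\widetilde{Y}_{t_K}),\mu) \leq W_2(\mathrm{Law}(\widetilde{Y}_{t_K}),\mathrm{Law}(Y_{t_K})) + W_2(\mathrm{Law}(Y_{t_K}),\mu),
\]
where $Y_{t_K}$ is the Algorithm \ref{alg:1} iterate driven by the \emph{same} Brownian increments $\{\epsilon_{k+1}\}$ as $\widetilde{Y}_{t_K}$. Theorem \ref{th1} already handles the second term, giving $\mathcal{O}(\sqrt{ps})$, so the task reduces to controlling the synchronously-coupled pathwise deviation $e_K := \|\widetilde{Y}_{t_K} - Y_{t_K}\|_{L_2}$ induced purely by the Monte Carlo approximation of the drift.

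For the one-step analysis I would write
\[
\widetilde{Y}_{t_{k+1}} - Y_{t_{k+1}} = \bigl[(\widetilde{Y}_{t_k}-Y_{t_k}) + s(b(\widetilde{Y}_{t_k},t_k) - b(Y_{t_k},t_k))\bigr] + s\bigl[\widetilde{b}_m(\widetilde{Y}_{t_k},t_k) - b(\widetilde{Y}_{t_k},t_k)\bigr],
\]
viewing the first bracket as a deterministic explicit Euler step on $U(\cdot,t_k)$ applied to the gap and the second as the Monte-Carlo perturbation. Condition \eqref{cond4} makes $U(\cdot,t)$ $M$-strongly convex, and \eqref{cond3} makes its gradient $C_1$-Lipschitz, so the standard co-coercivity estimate for smooth strongly convex functions yields, for $s<2/(C_1+M)$,
\[
\|x - y + s(b(x,t)-b(y,t))\|_2^2 \leq \Bigl(1-\tfrac{2sMC_1}{M+C_1}\Bigr)\|x-y\|_2^2.
\]
Combining this contraction with Minkowski's inequality, and using that the Monte Carlo sample drawn at step $k$ is independent of $\widetilde{Y}_{t_k}$, gives the recursion $e_{k+1} \leq r\,e_k + s\,\eta_m$ with $r := \sqrt{1-2sMC_1/(M+C_1)}$ and $\eta_m := \sup_{x,t}\|\widetilde{b}_m(x,t) - b(x,t)\|_{L_2}$. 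Starting from $e_0=0$ and summing the geometric series gives $e_K \leq s\eta_m/(1-r) = \mathcal{O}(\eta_m)$, since $1-r$ scales like $sMC_1/(M+C_1)$ for small $s$.

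The main obstacle is establishing $\eta_m = \mathcal{O}(\sqrt{p/\log m})$, and this is where the Lipschitz and lower-bound assumptions on $g$ enter. I would decompose
\[
\widetilde{b}_m(x,t) - b(x,t) = \frac{\bigl(A_m(x,t)-A(x,t)\bigr) - b(x,t)\bigl(B_m(x,t)-B(x,t)\bigr)}{B_m(x,t)},
\]
with $A(x,t) := \Ebb_Z[\nabla g(x+\sqrt{1-t}Z)]$, $B(x,t) := \Ebb_Z[g(x+\sqrt{1-t}Z)]$, and $A_m,B_m$ their $m$-sample averages. Lipschitz continuity of $g$ and $\nabla g$ in $Z$ combined with Gaussian concentration (Borell--TIS / Gaussian Poincar\'e) produces sub-Gaussian tails for the numerator at scale $\sqrt{p/m}$, while the lower bound $g\geq g_{\min}>0$ keeps $B(x,t)$ uniformly bounded below and, by concentration, forces $B_m$ to stay of order $g_{\min}$ with high probability. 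The reason the final rate is $\sqrt{p/\log m}$ rather than $\sqrt{p/m}$ is that on the complementary bad event where $B_m$ is close to zero the ratio estimator can blow up, so one must control the resulting $L_2$ contribution by truncating at a level scaling like $\sqrt{\log m}$ and integrating the sub-Gaussian tail; this truncation-versus-tail trade-off balances at the claimed logarithmic rate. Plugging $\eta_m = \mathcal{O}(\sqrt{p/\log m})$ into $e_K = \mathcal{O}(\eta_m)$ and adding the discretization error from Theorem \ref{th1} completes the proof.
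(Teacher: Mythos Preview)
Your high-level architecture differs from the paper's: you route $W_2(\mathrm{Law}(\widetilde{Y}_{t_K}),\mu)$ through the \emph{discretized} iterate $Y_{t_K}$ and invoke Theorem~\ref{th1}, whereas the paper compares $\widetilde{Y}_{t_K}$ directly with the \emph{continuous} process $X_{t_K}$, absorbing the discretization error into the same recursion. Both routes are legitimate; yours cleanly separates the two error sources, while the paper's avoids a second appeal to Theorem~\ref{th1}. Your one-step contraction is essentially the content of the paper's Lemma~\ref{lemma3}.

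Your explanation of the $\sqrt{p/\log m}$ rate, however, is wrong. Since $g\geq g_{\min}>0$ pointwise, the empirical denominator satisfies $B_m(x,t)\geq g_{\min}$ \emph{surely} for every realization of $Z_1,\ldots,Z_m$; there is no ``bad event where $B_m$ is close to zero'' and no blow-up of the ratio estimator to truncate against. In fact, your own decomposition
\[
\widetilde{b}_m - b \;=\; \frac{(A_m-A)-b\,(B_m-B)}{B_m},
\]
combined with $\|b\|_2\leq \gamma/g_{\min}$, $B_m\geq g_{\min}$, and the elementary variance bounds $\Ebb\|A_m-A\|_2^2,\,\Ebb|B_m-B|^2=\mathcal{O}(p/m)$ (immediate from Lipschitzness of $g$ and $\nabla g$; no Gaussian concentration machinery is needed), already yields the \emph{uniform} estimate $\sup_{x,t}\Ebb\|\widetilde{b}_m(x,t)-b(x,t)\|_2^2=\mathcal{O}(p/m)$. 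Plugged into your recursion this gives $e_K=\mathcal{O}(\sqrt{p/m})$, which is strictly stronger than Theorem~\ref{th2} and in fact recovers the conclusion of Theorem~\ref{th3} without assuming $g$ bounded above.

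For comparison, the paper's $\log m$ rate comes from a different mechanism. Its Lemma~\ref{lemma4} uses the cruder rearrangement $\|b-\widetilde{b}_m\|_2\leq (\gamma|e_m-e|+|e|\,\|d-d_m\|_2)/\xi^2$, which carries the factor $|e|=Q_{1-t}g(x)$ that is \emph{not} uniformly bounded in $x$. This forces a truncation in $x$-space (Lemma~\ref{lemma6}): on $\{\|\widetilde{Y}_{t_k}\|_2\leq R\}$ one controls $|e|=\mathcal{O}(e^{R^2/2})$, on the complement one uses the a~priori bound $\|b\|_2,\|\widetilde{b}_m\|_2\leq\gamma/\xi$ together with a second-moment bound on $\widetilde{Y}_{t_k}$ (Lemma~\ref{lemma5}) and Markov's inequality; optimizing $R\sim\sqrt{\log m}$ balances the two pieces at $\mathcal{O}(p/\log m)$. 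So the logarithm is an artifact of that particular algebraic decomposition, not of the denominator instability you invoke.
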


\begin{remark}
This theorem provides some guidance on the selection of $s$ and $m$. To ensure convergence of the distribution of $\widetilde{Y}_{t_K}$, we should set the step size $s = o(1/p)$ and
$m=\exp(p/o(1))$.
In high-dimensional models with a large $p$, we need to generate a large number of random vectors from $N(0, \bI_p)$ to obtain an accurate estimate of the drift term $b$.
\end{remark}

If  we assume that $g$ is bounded above 
we can improve the non-asymptotic error bound, in which
$\mathcal{O}\left(\sqrt{p/\log(m)}\right)$
can be improved to be
$\mathcal{O}\left(\sqrt{p/m}\right)$.
\begin{theorem}\label{th3}
 Assume that, in addition to the conditions of Theorem  \ref{th2},
 $g$ is bounded above.
 Then, for  $s<\frac{2}{C_1+M}<1$,
\begin{align*}
W_2(\mbox{Law}(\widetilde{Y}_{t_K}),\mu)\leq \mathcal{O}(\sqrt{ps})+\mathcal{O}\left(\sqrt{\frac{p}{m}}\right).
\end{align*}
\end{theorem}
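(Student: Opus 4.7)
The plan is to split via the triangle inequality,
\[
W_2(\mbox{Law}(\widetilde Y_{t_K}),\mu)\le \|\widetilde Y_{t_K}-Y_{t_K}\|_{L_2}+W_2(\mbox{Law}(Y_{t_K}),\mu),
\]
where $Y_{t_k}$ (the exact-drift Algorithm \ref{alg:1} iterates) and $\widetilde Y_{t_k}$ (the Algorithm \ref{alg:2} iterates) are coupled synchronously, sharing the Gaussian increments $\epsilon_{k+1}$ and the common start at $0$. The second piece is $\mathcal O(\sqrt{ps})$ by Theorem \ref{th1}, so the entire problem reduces to proving $\|\widetilde Y_{t_K}-Y_{t_K}\|_{L_2}=\mathcal O(\sqrt{p/m})$, which is precisely where the improvement over Theorem \ref{th2} must come from.

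Set $\Delta_k=\widetilde Y_{t_k}-Y_{t_k}$ and $\xi_k=\widetilde b_m(\widetilde Y_{t_k},t_k)-b(\widetilde Y_{t_k},t_k)$. The synchronous-coupling recursion is $\Delta_{k+1}=\Delta_k+s[b(\widetilde Y_{t_k},t_k)-b(Y_{t_k},t_k)]+s\,\xi_k$. Expanding $\|\Delta_{k+1}\|_2^2$, invoking the monotonicity $\langle b(x,t)-b(y,t),x-y\rangle\le -M\|x-y\|_2^2$ that follows from strong convexity \eqref{cond4}, the Lipschitz bound \eqref{cond3}, and Young's inequality on the two cross terms $\Delta_k^\top\xi_k$ and $[b(\widetilde Y_{t_k},t_k)-b(Y_{t_k},t_k)]^\top\xi_k$, I expect, for $s<2/(C_1+M)$, a contractive recursion of the form
\[
\Ebb\|\Delta_{k+1}\|_2^2\le (1-cMs)\,\Ebb\|\Delta_k\|_2^2+C\,s\,\sup_{y,t}\Ebb\|\widetilde b_m(y,t)-b(y,t)\|_2^2,
\]
for some $c>0$. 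Iterating from $\Delta_0=0$ and summing the geometric series then yields $\Ebb\|\Delta_K\|_2^2\le (C/M)\sup_{y,t}\Ebb\|\widetilde b_m(y,t)-b(y,t)\|_2^2$, so the per-step Monte Carlo errors do not accumulate destructively.

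The crux, and where the upper bound $g\le G^+$ is actually used, is proving the uniform estimate $\sup_{y,t}\Ebb\|\widetilde b_m(y,t)-b(y,t)\|_2^2\le Cp/m$. With $A(y,t)=\Ebb_Z[\nabla g(y+\sqrt{1-t}Z)]$, $B(y,t)=\Ebb_Z[g(y+\sqrt{1-t}Z)]$, and $A_m,B_m$ their empirical counterparts, I would start from
\[
\widetilde b_m-b=\frac{(A_m-A)\,B-A\,(B_m-B)}{B_m\,B}.
\]
The new hypothesis pins $B_mB\in[(G^-)^2,(G^+)^2]$ almost surely, so no high-probability truncation of $B_m$ is needed, and Lipschitz continuity of $g$ gives $\|A\|_2\le L_g$. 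Together these produce $\|\widetilde b_m-b\|_2^2\le C\,\|A_m-A\|_2^2+C\,|B_m-B|^2$ pointwise. Since $\nabla g$ is Lipschitz, each coordinate of the map $Z\mapsto\nabla g(y+\sqrt{1-t}Z)$ is Lipschitz, and applying the Gaussian Poincar\'e inequality coordinate by coordinate and summing delivers $\Ebb\|A_m-A\|_2^2\le C_\nabla\,p/m$ uniformly in $(y,t)$; the scalar analogue gives $\Ebb|B_m-B|^2\le C_g/m$. The main obstacle is precisely this coordinate-by-coordinate dimensional accounting: the upper bound on $g$ is what lets the ratio analysis proceed in $L_2$ with a clean linear-in-$p$ variance, rather than through the sub-Gaussian tail argument that produces the $1/\sqrt{\log m}$ rate of Theorem \ref{th2}. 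Substituting this $Cp/m$ bound into the recursion gives $\|\widetilde Y_{t_K}-Y_{t_K}\|_{L_2}=\mathcal O(\sqrt{p/m})$, and combining with the Theorem \ref{th1} contribution completes the proof.
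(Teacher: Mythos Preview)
Your argument is correct, and it reaches the same conclusion via a slightly different but legitimate route. The paper does \emph{not} pass through the exact-drift iterates $Y_{t_k}$: it sets $\Delta_k=X_{t_k}-\widetilde Y_{t_k}$ (comparing the approximate scheme directly to the \emph{continuous} process at grid points) and writes
\[
\Delta_{k+1}=\Delta_k-s\bigl[\tilde b_m(\widetilde Y_{t_k},t_k)-\tilde b_m(X_{t_k},t_k)\bigr]+\int_{t_k}^{t_{k+1}}\bigl[b(X_t,t)-\tilde b_m(X_{t_k},t_k)\bigr]\,\mathrm dt,
\]
then applies the contraction of Lemma~\ref{lemma3} to the first bracket (after adding and subtracting $b$), and controls the integral term via Lemma~\ref{lemma2o} together with the uniform Monte Carlo bound of Lemma~\ref{lemma4}/\ref{lemma6}. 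Your decomposition instead isolates the Monte Carlo error in the purely discrete comparison $\widetilde Y_{t_k}$ versus $Y_{t_k}$ and offloads the discretization error entirely to Theorem~\ref{th1}; this is cleaner and avoids the time-integral bookkeeping. Both arguments rest on the same two pillars---the one-step $L_2$ contraction under \eqref{cond3}--\eqref{cond4} for $s<2/(C_1+M)$, and the uniform-in-$(y,t)$ bound $\Ebb\|\tilde b_m(y,t)-b(y,t)\|_2^2\le Cp/m$, for which the hypothesis $g\le \zeta$ is used exactly as you describe (it caps the factor $|e|$ multiplying $\|d_m-d\|_2$ in the ratio expansion). One minor point: the paper obtains that variance bound not via the Gaussian Poincar\'e inequality but by a direct symmetrization using an independent copy $\bZ'$ together with the Lipschitz continuity of $g$ and $\nabla g$ (see Lemma~\ref{lemma4}); your Poincar\'e route is an equally valid alternative. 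Also, the $\sqrt{p/\log m}$ rate in Theorem~\ref{th2} arises from a truncation-in-$\|x\|$ argument balancing the $\exp(R^2)$ growth of $e(x,t)$ against a Markov tail bound, rather than a sub-Gaussian tail argument, but your identification of \emph{where} the upper bound on $g$ enters is exactly right.
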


\begin{remark}
With the boundedness condition on $g$, to ensure convergence of the sampling distribution, we can set the step size $s = o(1/p)$ and
$m=p/o(1)$.
Note that the sample size requirement for approximating the drift term is significantly less
stringent than that in Theorem \ref{th2}.
\end{remark}

\subsection{Regularization to improve  the lower bound on $f$}
Theorem \ref{th1} is based on assumptions \eqref{cond1} and \eqref{cond3}, which hold
{if $g$ and $\nabla g$ are Lipschitz continuous and $g$ ($f$) has a lower bound strictly greater than 0.
}
Theorems \ref{th2}-\ref{th3}   also require that $g$ ($f$) is bounded away from zero. However, this  requirement  {\color{black}does} not hold  if the target distribution admits  compact support.
 To fix this pity,
we introduce an regularization on $\mu$ by mixing  $\mu$ and $N(0, \bI_p)$ together, i.e., considering
\begin{align}\label{muesp}
\mu_{\varepsilon}=(1-\varepsilon)\mu+\varepsilon N(0, \bI_p),
\end{align}
where $0<\varepsilon<1$.
Since the density ratio of $\mu_{\varepsilon}$ over $N(0, \bI_p)$ is
\begin{align}\label{fesp}
f_{\varepsilon}=\frac{d \mu_{\varepsilon}}{ dN(0,\bI_p)}=(1-\varepsilon)f+\varepsilon,
\end{align}
it is easy to deduce that $f_{\varepsilon}\geq \varepsilon>0$, and  $f_{\varepsilon}$ and $\nabla f_{\varepsilon}$ are Lipschitz continuous as long as $f$ and $\nabla f$ are, and $f_{\varepsilon}$ is bounded above if and only if  $f$ is bounded above.
 Obviously, $\mu_{\varepsilon}$  is a good approximation  to $\mu$ when $\varepsilon$ is small. We can sample from $\mu_{\varepsilon}$ using the the Euler-Maruyama disretization of SDE \eqref{sch-equation}
with the drift term $b(x, t)=\nabla \log Q_{1-t} f_{\varepsilon}(x)$.
For a given $\varepsilon \in (0, 1)$, we denote the samples generated using Algorithm \ref{alg:1} and Algorithm \ref{alg:2} with
$\mu_{\varepsilon}$ as the target distribution by
$\{Y_{t_k}(\varepsilon)\}_{k=0}^K$ and $\{\widetilde{Y}_{t_k}(\varepsilon)\}_{k=0}^K$, respectively.
We can prove the following  consistency  results for both  Algorithm \ref{alg:1} and Algorithm \ref{alg:2}.
\begin{theorem}\label{th4}
Under the conditions \eqref{cond1} and \eqref{cond3}, we have
\begin{align}
\label{th4a}
\underset{K \rightarrow \infty,\varepsilon \rightarrow 0}{\lim} W_2(\mbox{Law}(Y_{t_K}(\varepsilon)),\mu)= 0.
\end{align}
Assume \eqref{cond1}, \eqref{cond3} and
\eqref{cond4} hold, $g$ and that $\nabla g$ is Lipschitz continuous, we have
\begin{align}
\label{th4b}
\underset{m,K \rightarrow \infty, \varepsilon \rightarrow 0}{\lim} W_2(\mbox{Law}(\widetilde{Y}_{t_K}(\varepsilon)),\mu)=0.
\end{align}
\end{theorem}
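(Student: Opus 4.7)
The plan is to decompose the error via the triangle inequality
$$W_2(\mbox{Law}(Y_{t_K}(\varepsilon)),\mu)\;\le\;W_2(\mbox{Law}(Y_{t_K}(\varepsilon)),\mu_\varepsilon)\;+\;W_2(\mu_\varepsilon,\mu),$$
and treat each piece separately. For the approximation term I would construct an explicit coupling: let $B\sim\text{Bernoulli}(1-\varepsilon)$ be independent of $X\sim\mu$ and $Z\sim N(0,\bI_p)$, and set $(\theta_1,\theta_2)=(X,\,BX+(1-B)Z)$. Then $\theta_1\sim\mu$, $\theta_2\sim\mu_\varepsilon$, and
$$\Ebb\|\theta_1-\theta_2\|_2^2\;=\;\varepsilon\,\Ebb\|X-Z\|_2^2\;\le\;\varepsilon\bigl(\Ebb_{X\sim\mu}\|X\|_2^2+p\bigr),$$
so $W_2^2(\mu_\varepsilon,\mu)\to 0$ as $\varepsilon\to 0$, provided $\mu$ has finite second moment (which follows from the linear-growth bound \eqref{cond1} applied to $X_1\sim\mu$). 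This part of the estimate is independent of $K$ and $m$.

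For part \eqref{th4a}, since $f_\varepsilon=(1-\varepsilon)f+\varepsilon\ge\varepsilon>0$ and Lipschitz continuity of $f,\nabla f$ is inherited by $f_\varepsilon,\nabla f_\varepsilon$, Remark~\ref{R2} shows that the drift $b_\varepsilon(x,t)=\nabla_x\log Q_{1-t}f_\varepsilon(x)$ associated with the regularized target satisfies \eqref{cond1} and \eqref{cond3}, albeit with constants $C_0(\varepsilon),C_1(\varepsilon)$ that may blow up as $\varepsilon\downarrow 0$ because the uniform lower bound on $f_\varepsilon$ degenerates. Applying Theorem~\ref{th1} to this regularized target then gives
$$W_2(\mbox{Law}(Y_{t_K}(\varepsilon)),\mu_\varepsilon)\;\le\;C(\varepsilon)\sqrt{p/K},$$
with $C(\varepsilon)<\infty$ for every fixed $\varepsilon>0$. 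Combining with the first paragraph and iterating the limits (first $K\to\infty$ at each fixed $\varepsilon$, then $\varepsilon\to 0$; equivalently, choosing any schedule $K=K(\varepsilon)$ with $K(\varepsilon)/C(\varepsilon)^2\to\infty$) yields \eqref{th4a}.

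For part \eqref{th4b} I would proceed analogously, but replacing Theorem~\ref{th1} by Theorem~\ref{th2} to obtain
$$W_2(\mbox{Law}(\widetilde Y_{t_K}(\varepsilon)),\mu_\varepsilon)\;\le\;C_1(\varepsilon)\sqrt{p/K}+C_2(\varepsilon)\sqrt{p/\log m},$$
and then let $m,K\to\infty$ fast enough relative to $\varepsilon\to 0$ that the $\varepsilon$-dependent prefactors are killed. The main technical obstacle is checking that the strong-convexity condition \eqref{cond4} transfers to the regularized potential $U_\varepsilon(x,t)=-\log[(1-\varepsilon)Q_{1-t}f(x)+\varepsilon]$, because the log-mixture operation does not automatically preserve strong convexity. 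Two routes are viable: (a) compute $\nabla_x^2 U_\varepsilon$ directly and exhibit a (possibly degenerating) constant $M(\varepsilon)>0$, using the Lipschitz and convexity bounds on $g$ together with the positivity of the regularization; or (b) retool the contraction step in the proof of Theorem~\ref{th2} so that it relies only on the \eqref{cond1}--\eqref{cond3} bounds that $b_\varepsilon$ automatically inherits. Either route suffices for a purely qualitative convergence statement --- only finiteness of the $\varepsilon$-dependent constants is required --- though obtaining explicit rates would demand tracking how $C(\varepsilon)$ and $M(\varepsilon)$ degrade as $\varepsilon\to 0$.
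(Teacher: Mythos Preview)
Your proposal is correct and follows essentially the same route as the paper: the triangle inequality $W_2(\mathrm{Law}(Y_{t_K}(\varepsilon)),\mu)\le W_2(\mathrm{Law}(Y_{t_K}(\varepsilon)),\mu_\varepsilon)+W_2(\mu_\varepsilon,\mu)$, the identical Bernoulli-mixture coupling to control $W_2(\mu_\varepsilon,\mu)$, and an appeal to Theorems~\ref{th1}--\ref{th2} for the discretization term. In fact you are more careful than the paper on one point: the paper simply writes ``similar to the proof of Theorems~\ref{th1}--\ref{th2}'' without verifying that the strong-convexity condition \eqref{cond4} transfers to $U_\varepsilon$, whereas you correctly flag this as a nontrivial step.
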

\begin{remark}
As in Theorem \ref{th1}, the requirement on $s=1/K$ in both  (\ref{th4a}) and (\ref{th4b}) is $s=o(1/p)$. The requirement on $m$ is similar to that in Theorem \ref{th3}.
As can be seen in the proof of Theorem \ref{th4} in the appendix, to control the approximation error, we should take $\varepsilon = o(1/p)$.
\end{remark}

\section{Related work }
\label{related}
There is a large body of work on the MCMC
sampling algorithms based on the Langevin diffusion.
Convergence properties of the Langevin sampling algorithms have been established under three types of different assumptions: (a) the (strongly) convex potential assumption
\citep{durmus2016sampling,durmus2017nonasymptotic,dalalyan2017further,
dalalyan2017theoretical,cheng2018convergence,dalalyan2019user-friendly,
durmus2016high-dimensional}; (b)
the dissipativity condition for the drift term \citep{raginsky2017non,mou2019improved,zhang2019nonasymptotic}; (c)
  the local convexity condition for the potential function outside a ball \citep{durmus2017nonasymptotic,cheng2018sharp,ma2019sampling,bou2020coupling}.
However, these conditions may not hold for models with multiple modes, for example,   Gaussian mixtures, where their potentials are not convex and the Sobolev inequality may  not be satisfied.
Moreover, the constant in the log Sobolev inequality depends on the dimensionality exponentially \citep{menz2014poincare,wang2009log,hale2010asymptotic,raginsky2017non}, implying that the efficiency of Langevin samplers may suffer from the curse of dimensionality.
 SFS does not require the underlying Markov process to be ergodic, therefore, our results in Theorem \ref{th1}  do not  need the  conditions used in the analysis of Langevin samplers, and the error bounds only depend on the square root of the ambient dimension.
In particular, these two convergence results are applicable to Gaussian mixtures, in which the drift term
of the Schr\"{o}dinger-F\"{o}llmer diffusion can be calculated analytically. However, in Theorems \ref{th2} to \ref{th3}, where only Monte Carlo approximation to the drift  are available,  we need the strongly  convexity condition for the potential of the Schr\"{o}dinger-F\"{o}llmer diffusion. We believe that the strongly convexity condition on $U(x,t)$ with respect to  $x$ is not essential for SFS and we assume this condition for just technical purpose in the proofs.


The Schr\"{o}edinger bridge has been shown to have close connections with a number of problems in statistical physics,  optimal transport  and optimal control \citep{leonard2014survey}. Hoverer,  only a few recent works  use this tool for statistical sampling.
A Schr\"{o}edinger bridge sampler was recently proposed in \citep{bernton2019schrodinger}.
 For a given distribution $\mu$,
 the authors propose to iteratively modify the transition kernels of the reference Markov chain to obtain a process whose marginal distribution at the terminal time is approximately $\mu$.
A second recent work considered the Schr\"{o}dinger bridge problem when only samples from the initial and the target distributions are available \citep{pavon2018datadriven}. The authors proposed an iterative
procedure that uses constrained maximum likelihood estimation and importance sampling to estimate the functions solving the Schr\"{o}dinger system.
The algorithms developed in these papers are inspired by the iterative proportional fitting  procedure, or the Sinkhorn algorithm \citep{sinkhorn1964, peyre2020computational}.
\citet{tzen2019theoretical,sbg21,de2021diffusion} considered the problem of learning a generative model from samples based on the Schr\"{o}dinger-F\"{o}llmer diffusion with an unknown drift term and estimate the drift via deep neural networks.
The problem settings and the tools used in the aforementioned works are different from
the present work.
We use Schr\"{o}dinger-F\"{o}llmer diffusion  (\ref{sde}) as a  sampler for unnormalized distributions.

\section{Numerical studies}\label{simulation}
We conduct numerical experiments to evaluate the effectiveness of SFS.
We consider sampling from some one-dimensional, two-dimensional Gaussian mixture distributions and Bayesian Logistic regression.
The R code and the Python code of SFS are available at
\url{https://github.com/Liao-Xu/SFS_R}
and \url{https://github.com/Liao-Xu/SFS_py}, respectively.

We compare SFS with the MH algorithm  \citep{metropolis1953equation,hastings1970monte,chib1995understanding,robert1999metropolis},
Hamiltonian Monte Carlo (HMC) \citep{duane1987hybrid,neal2011mcmc},
Stochastic Gradient Hamiltonian Monte Carlo (SGHMC) \citep{chen2014stochastic}, Unadjusted Langevin Algorithm (ULA) \citep{durmus2016high-dimensional,durmus2016sampling,dalalyan2017theoretical, durmus2017nonasymptotic,cheng2018convergence}, Stochastic Gradient Langevin Dynamics (SGLD) \citep{welling2011bayesian,ahn2012bayesian,patterson2013stochastic},
cyclical Stochastic Gradient Langevin Dynamics (cSGLD) \citep{zhang2019cyclical}, No U-Turn Sampler (NUTS) \citep{hoffman2014no, betancourt2017conceptual}, and Haario Bardenet Adaptive Metropolis MCMC (ACMC) \citep{johnstone2016uncertainty, haario2001adaptive}.

 In our experiments,
 we use the R package \texttt{mcmc} \citep{brooks2011handbook, tierney1994markov} for the MH algorithm and the R packages \texttt{sde} \citep{iacus2009simulation} and \texttt{yuima} \citep{iacus2018simulation}  for the ULA.
Moreover, we use the code from \citep{chen2014stochastic} for  HMC and SGHMC and the code from \citep{zhang2019cyclical} for SGLD and cSGLD in our numerical studies.
We use the Python library PINTS \citep{Clerx2019Pints} for implementing NUTS and ACMC.
\subsection{One-dimensional Gaussian mixture distribution}
In this section, we consider three one-dimensional Gaussian mixture distributions,
\begin{align}\label{onegaussion1}
f_1(x)&=0.5 N(x;-2,0.5^2)+0.5 N(x;2,0.5^2), \\
f_2(x)&=0.5N(x;-4,0.5^2)+0.5 N(x;4,0.5^2),
\label{onegaussion2} \\
f_3(x)&=0.5 N(x;-8,0.5^2)+0.5N(x;8,0.5^2). \label{onegaussion3}
\end{align}
The variances for each component of these three distributions are fixed, but the centroids are further apart from models \eqref{onegaussion1} to \eqref{onegaussion3}.
Therefore, sampling becomes more difficult from models \eqref{onegaussion1} to \eqref{onegaussion3}.
We use the proposed SFS and other methods to generate samples for these three mixture distributions.
We set the sample size $N$ to be 5,000 and the grid hyperparameter
$K$ to be 100 in Algorithm \ref{alg:1}.

In Fig. \ref{fig:1d},
for all models \eqref{onegaussion1} to \eqref{onegaussion3}, we show the curves of kernel density estimation from all methods using different colors and line types while the target density functions are shaded in grey.
When the centroids of Gaussians are close, the proposed SFS, MH and SGHMC perform comparably well but samples from other methods collapse on one mode as shown in Fig. \ref{fig:1d} (a). In the case that the centroids of Gaussians move apart from each other, only samples from SFS can accurately represent the underlying target distribution while all other methods collapse on one mode, see Fig. \ref{fig:1d} (b) and (c).
\begin{figure}[H] 
	\centering
	 \begin{minipage}[b]{0.32\textwidth}
        \centering
         \includegraphics[width=\textwidth]{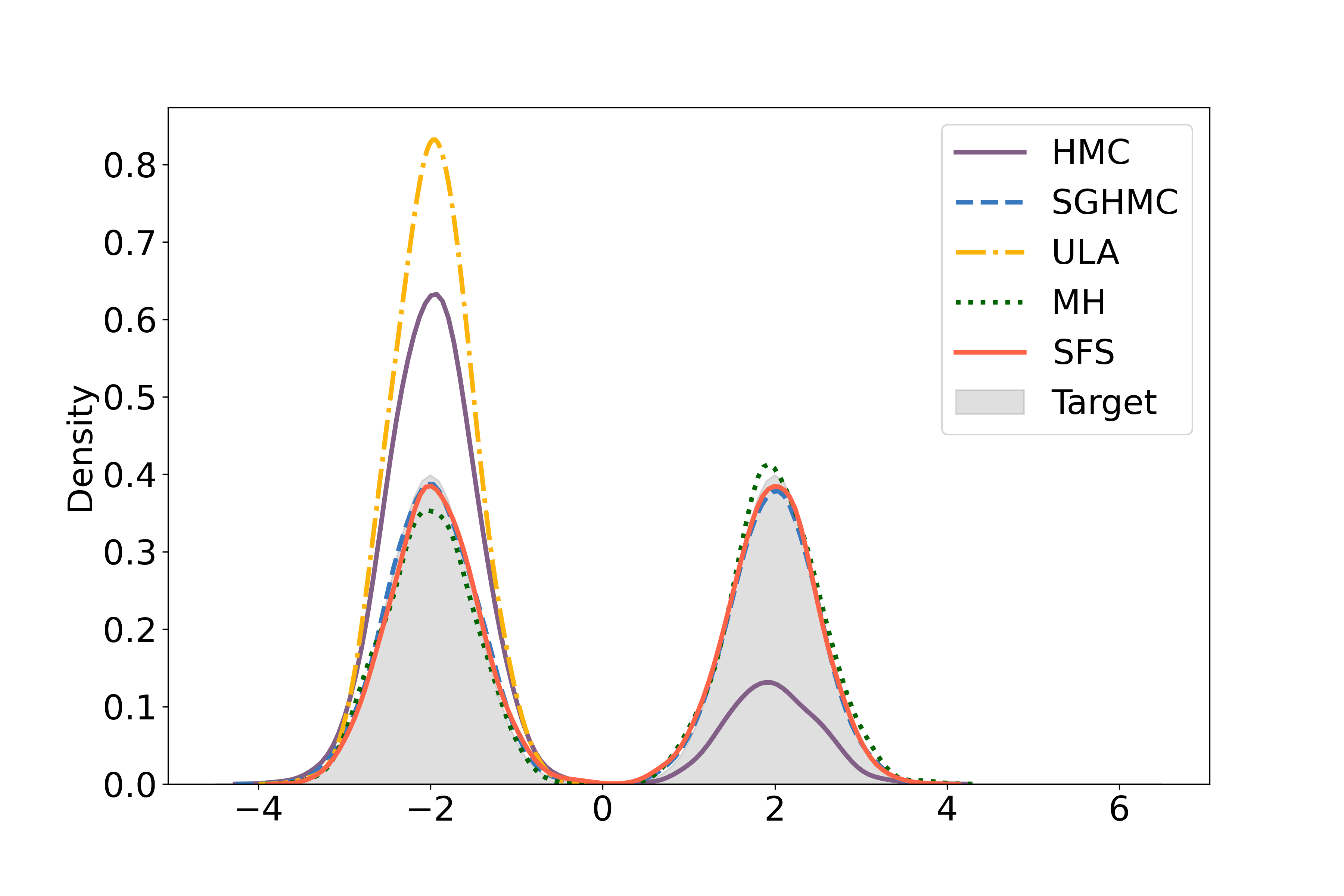}\\
         (a) $f_1(x)$
     	\end{minipage}
	  \begin{minipage}[b]{0.32\textwidth}
        \centering
         \includegraphics[width=\textwidth]{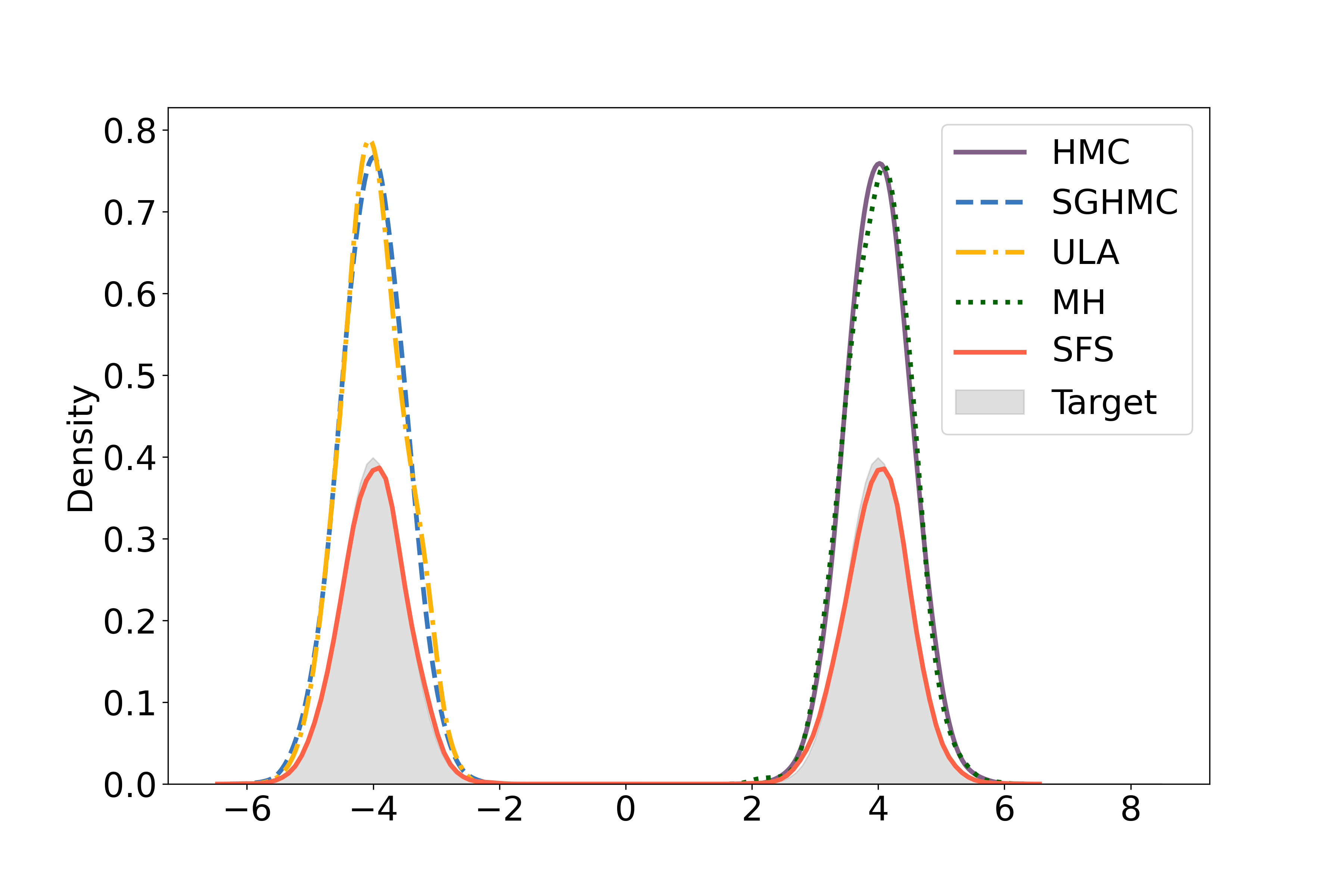}\\
         (b) $f_2(x)$
     	\end{minipage}
	 \begin{minipage}[b]{0.32\textwidth}
        \centering
         \includegraphics[width=\textwidth]{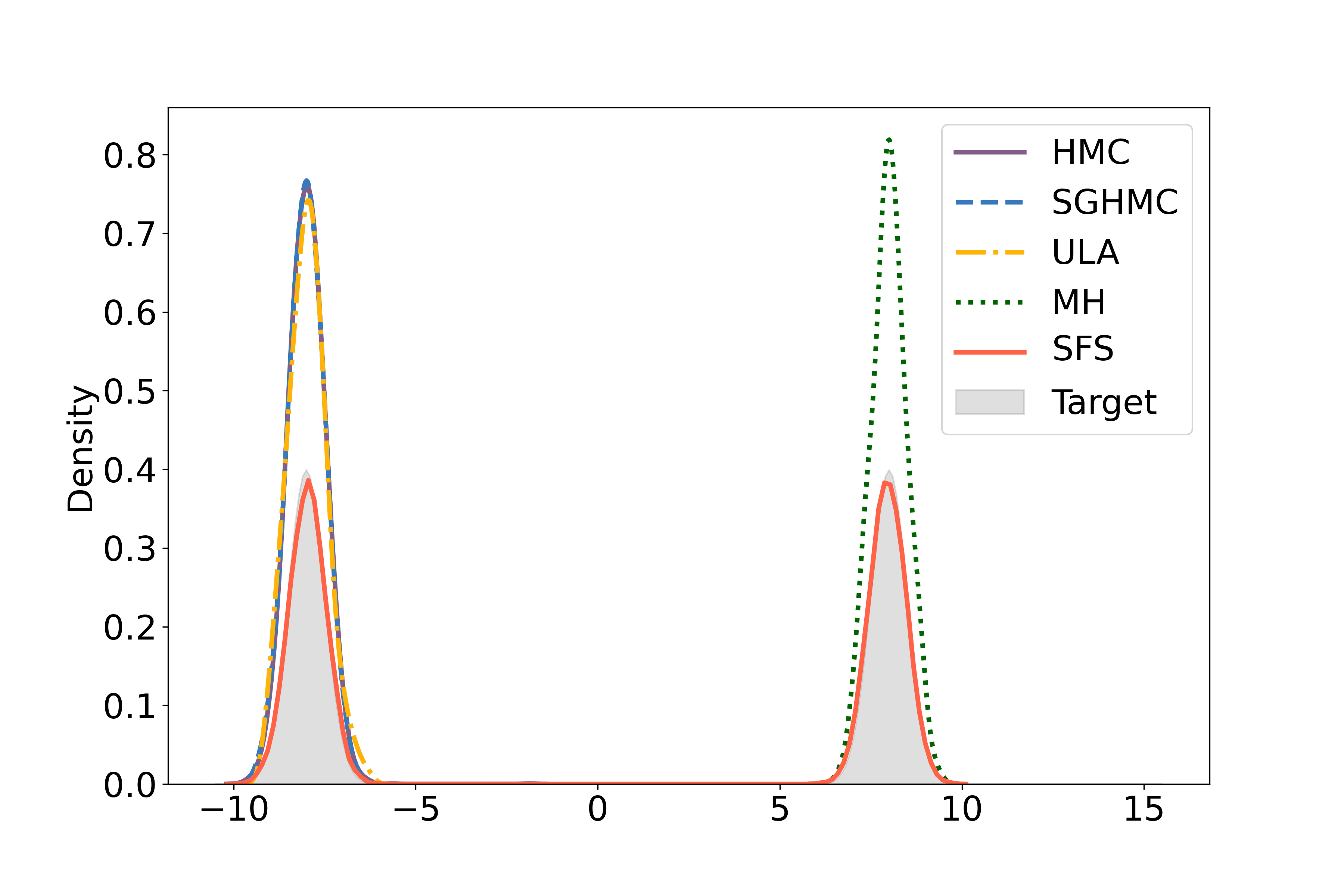}\\
        (c) $f_3(x)$
     	\end{minipage}

	\caption{Kernel density estimation of one-dimensional Gaussian mixture distributions. The target distribution is shaded in grey.}
	\label{fig:1d}
\end{figure}
\subsection{Two-dimensional Gaussian mixture distribution}
In this section, we consider four sets of two-dimensional Gaussian  mixture distributions,
\begin{align}\label{gaussian_circle}
\kappa&=4, 8, 16,~
\alpha_i=\lambda_1 (\sin (2(i-1) \pi / \kappa), \cos (2(i-1) \pi / \kappa)),~
i = 1,\cdots, \kappa,~
\lambda_1 = 2,4,8,\\
\kappa&=16,~
\balpha=(\lambda_2 \{-3,-1,1,3\})^{\top} \times(\lambda_2 \{-3,-1,1,3\}),~
\lambda_2 = 1,1.5,2, \label{gaussian16}\\
\kappa&=25,~
\balpha=(\lambda_3 \{-2,-1,0,1,2\})^{\top} \times(\lambda_3 \{-2,-1,0,1,2\}),~
\lambda_3 = 2,3,  \label{gaussian25}\\
\kappa&=49,~
\balpha=(\lambda_4 \{-3,-2,-1,0,1,2,3\})^{\top} \times(\lambda_4 \{-3,-2,-1,0,1,2,3\}),~
\lambda_4 = 2,3. \label{gaussian49}
\end{align}
We set the proportions $\theta_i=1/\kappa$
and the covariance matrices $\Sigma_i= 0.03\cdot \bI_2$ across these settings.
The centroids of Gaussian components gradually become farther apart across these scenarios for each setting.
We set $N=20,000$, $K=100$ for model \eqref{gaussian_circle} and $K=200$ for the rest of the settings in Algorithm \ref{alg:1}.
The centroids of the Gaussian components in model \eqref{gaussian_circle} form a circle while those in \eqref{gaussian16}-\eqref{gaussian49} form a square matrix.
For all models from \eqref{gaussian_circle} to \eqref{gaussian49}, we employ the proposed SFS and other methods to generate samples and then visualize the kernel density estimation in Fig. \ref{fig:circle}-\ref{fig:49}, respectively.

As shown in Fig. \ref{fig:circle}-\ref{fig:49} for models  \eqref{gaussian_circle}-\eqref{gaussian49}, only the samples from SFS succeed in estimating the underlying target distribution while all other methods collapse on one or a few modes when the models becomes more difficult to sample from.
In model \eqref{gaussian_circle}, although samples from MH, SGLD, SGHMC can give a density estimation that matches the underlying target in the simple case with four centroids in the first row of Fig. \ref{fig:circle}, they all fail to accurately sample from the target distribution when the numbers of components are bigger.

When the centroids from the mixture components form a square matrix shape, only the MH algorithm can perform as well as SFS in a simpler case (the first row of Fig. \ref{fig:16} to \ref{fig:49}), all the other methods collapse on one or few modes. When the models become more difficult to sample from, mode collapsing for other methods becomes even worse. In all the simulations, we observe that only samples generated via SFS can accurately produce a density estimation that matches the underlying mixture distribution.

In general, we can draw the conclusion that SFS outperforms other algorithms, including MH, ULA, SGLD, SGHMC, cSGLD, NUTS and ACMC, through the visualization of two-dimensional Gaussian sampling.

\begin{figure}[H]
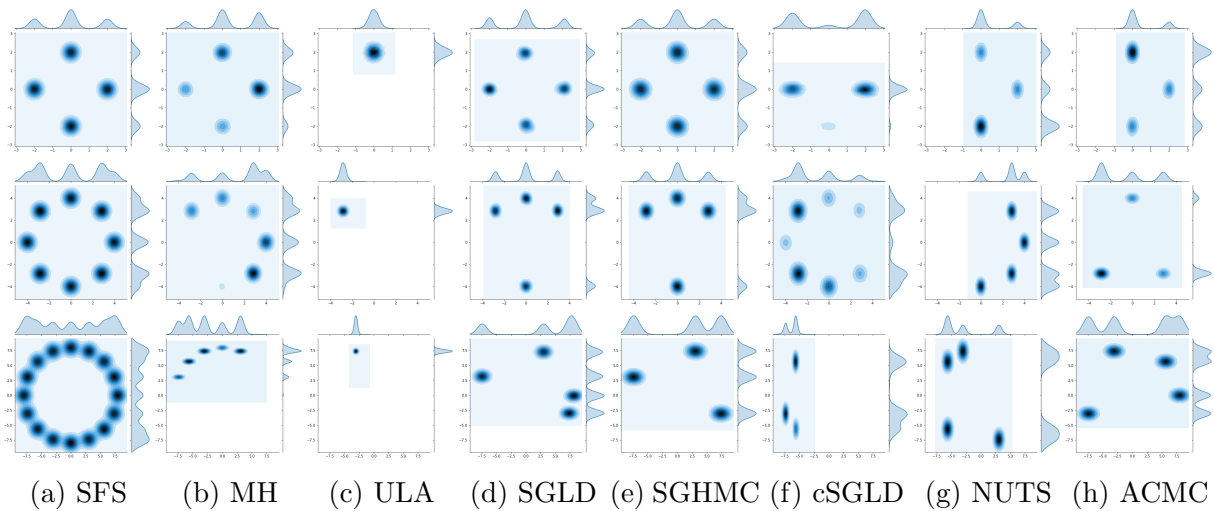

   \foreach \x in { SFS, MH,ULA,SGLD,SGHMC,cSGLD,NUTS, ACMC}{
     \begin{subfigure}[b]{0.125\textwidth}
        \centering
         \includegraphics[width=\textwidth]{Figures/\x _kappa_4_RA_margin}
     \end{subfigure}
     \hspace*{-0.9em}
     }
\\
      \foreach \x in { SFS, MH,ULA,SGLD,SGHMC,cSGLD,NUTS, ACMC}{
     \begin{subfigure}[b]{0.125\textwidth}
         \centering
         \includegraphics[width=\textwidth]{Figures/\x _kappa_8_RB_margin}
     \end{subfigure}
     \hspace*{-0.9em}
     }
    \\
    \foreach \x in { SFS, MH,ULA,SGLD,SGHMC,cSGLD,NUTS, ACMC}{
     \begin{subfigure}[b]{0.125\textwidth}
         \centering
         \includegraphics[width=\textwidth]{Figures/\x _kappa_16_RC_margin}
          \caption{\x}
     \end{subfigure}
     \hspace*{-0.9em}
     }
           \caption{Kernel density estimation with marginal distribution plots for the circle shaped Gaussian mixture distributions. The number of Gaussian components $\kappa=4, 8, 16$ and the radius of the circle gradually increases from the top row to the bottom row.}
        \label{fig:circle}
\end{figure}
\begin{figure}[H]
   \foreach \x in { SFS, MH,ULA,SGLD,SGHMC,cSGLD,NUTS, ACMC}{
     \begin{subfigure}[b]{0.125\textwidth}
        \centering
         \includegraphics[width=\textwidth]{Figures/\x _kappa_16_A_margin}
     \end{subfigure}
     \hspace*{-0.9em}
     }
\\
      \foreach \x in { SFS, MH,ULA,SGLD,SGHMC,cSGLD,NUTS, ACMC}{
     \begin{subfigure}[b]{0.125\textwidth}
         \centering
         \includegraphics[width=\textwidth]{Figures/\x _kappa_16_B_margin}
     \end{subfigure}
     \hspace*{-0.9em}
     }
    \\
    \foreach \x in { SFS, MH,ULA,SGLD,SGHMC,cSGLD,NUTS, ACMC}{
     \begin{subfigure}[b]{0.125\textwidth}
         \centering
         \includegraphics[width=\textwidth]{Figures/\x _kappa_16_C_margin}
          \caption{\x}
     \end{subfigure}
     \hspace*{-0.9em}
     }
\caption{Kernel density estimation with marginal distribution plots for the 16-mode Gaussian mixture distributions. The distance between the neighboring modes gradually increases from the top row to the bottom row.}
        \label{fig:16}
\end{figure}
\begin{figure}[H]
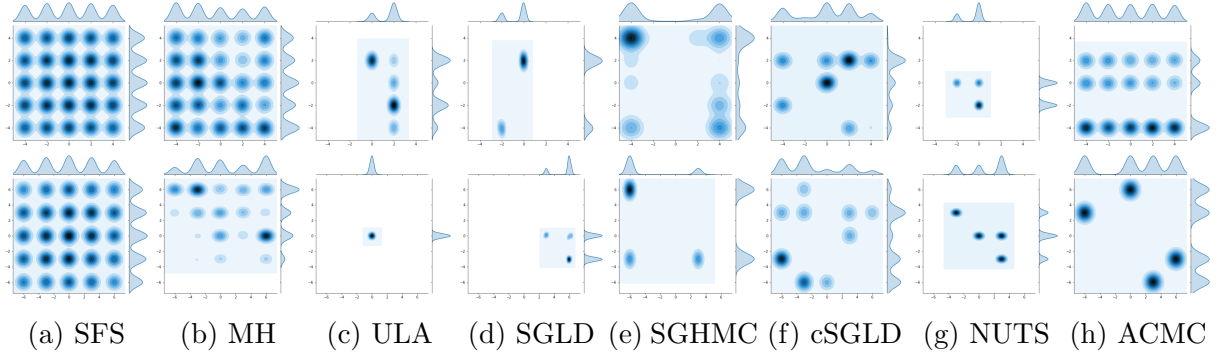

   \foreach \x in { SFS, MH,ULA,SGLD,SGHMC,cSGLD,NUTS, ACMC}{
     \begin{subfigure}[b]{0.125\textwidth}
        \centering
         \includegraphics[width=\textwidth]{Figures/\x _kappa_25_A_margin}
     \end{subfigure}
     \hspace*{-0.9em}
     }
\\
      \foreach \x in { SFS, MH,ULA,SGLD,SGHMC,cSGLD,NUTS, ACMC}{
     \begin{subfigure}[b]{0.125\textwidth}
         \centering
         \includegraphics[width=\textwidth]{Figures/\x _kappa_25_B_margin}
          \caption{\x}
     \end{subfigure}
     \hspace*{-0.9em}
     }
           \caption{Kernel density estimation with marginal distribution plots for the 25-mode Gaussian mixture distributions. From the first row to the second row. The distance between the neighboring modes gradually increases.}
        \label{fig:25}
\end{figure}
\begin{figure}[H]
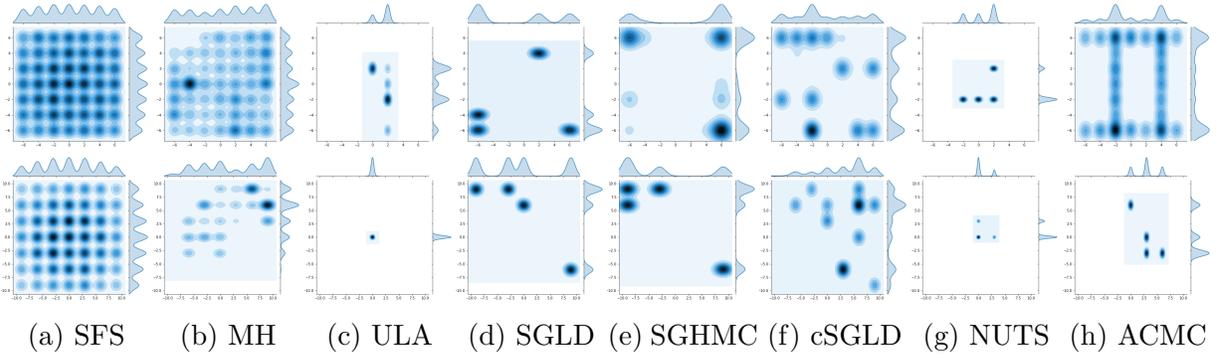

     \foreach \x in { SFS, MH,ULA,SGLD,SGHMC,cSGLD,NUTS, ACMC}{
     \begin{subfigure}[b]{0.125\textwidth}
        \centering
         \includegraphics[width=\textwidth]{Figures/\x _kappa_49_A_margin}
     \end{subfigure}
     \hspace*{-0.9em}
     }
\\
       \foreach \x in { SFS, MH,ULA,SGLD,SGHMC,cSGLD,NUTS, ACMC}{
     \begin{subfigure}[b]{0.125\textwidth}
         \centering
         \includegraphics[width=\textwidth]{Figures/\x _kappa_49_B_margin}
          \caption{\x}
     \end{subfigure}
     \hspace*{-0.9em}
     }
           \caption{Kernel density estimation with marginal distribution plots for 49-mode Gaussian mixture distributions simulation. The distance between the neighboring modes gradually increases from the top row to the bottom row.}
        \label{fig:49}
\end{figure}
\subsection{Evaluation of the balance of modes}
We apply the $k$-means algorithm \citep{macqueen1967some, forgy1965cluster, lloyd1982least} to estimate the proportions of the clusters using samples generated using SFS and the other methods consider here.
In Fig. \ref{fig:circle_mean}, we use dotted red lines for the true proportions in the target mixture distribution and color bars for the estimated proportions of the samples.
Clearly, only the proportions of the components estimated based on the samples from SFS can accurately match the proportions of the components in the target mixture distribution of model \eqref{gaussian_circle}, suggesting that SFS performs balanced sampling.

 \begin{figure}[hbt]
 \begin{center}
     \begin{minipage}[hbt]{0.6\textwidth}
        \centering
         \includegraphics[width=\textwidth]{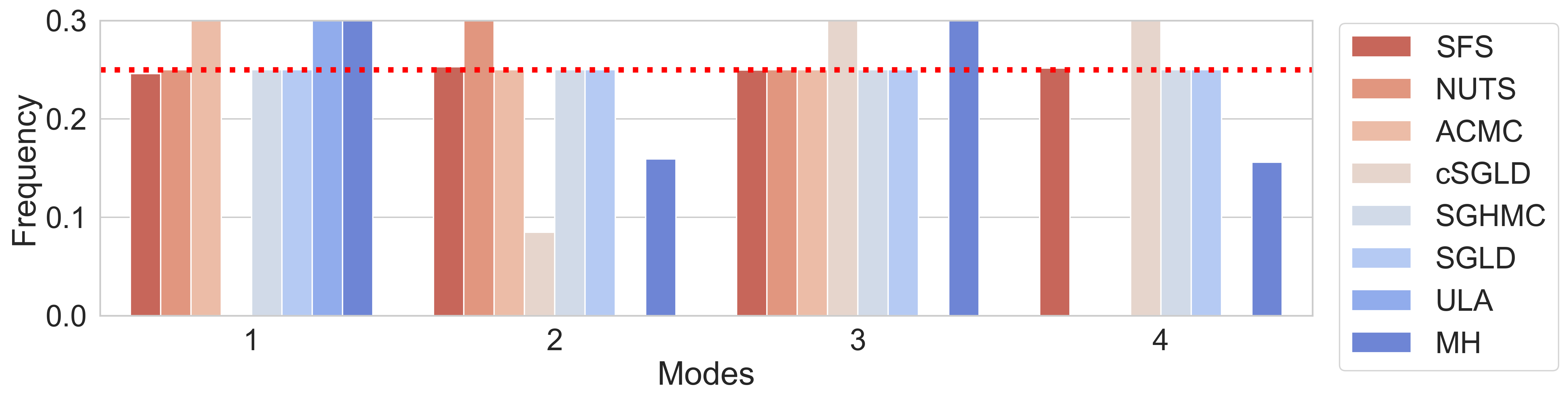}\\
         Scenario 1, $\lambda_1=2$
     \end{minipage}
          \begin{minipage}[hbt]{0.6\textwidth}
        \centering
         \includegraphics[width=\textwidth]{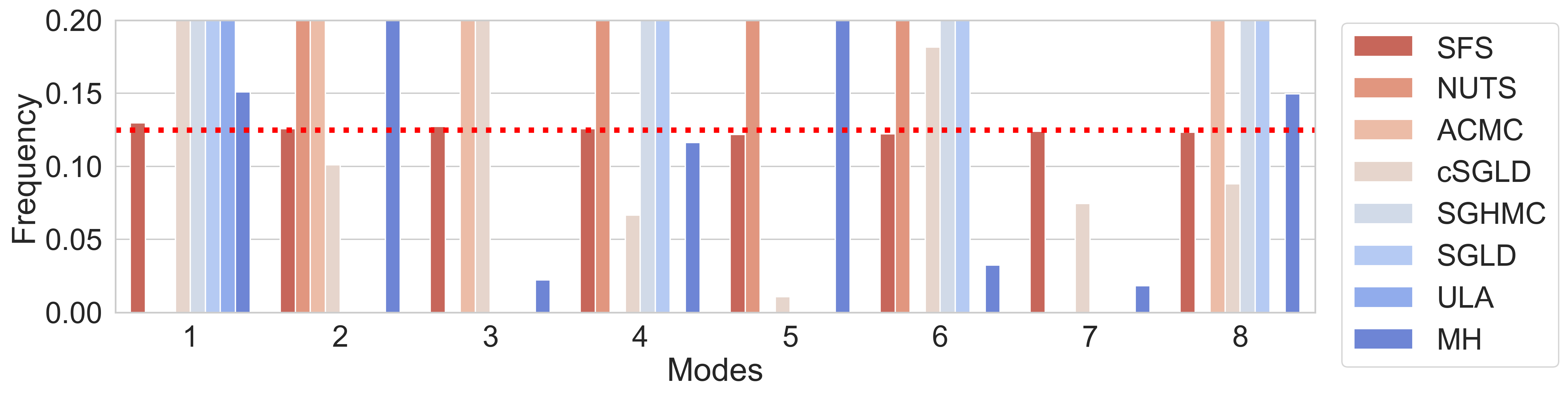}\\
         Scenario 2, $\lambda_1=4$
     \end{minipage}
          \begin{minipage}[hbt]{0.6\textwidth}
        \centering
         \includegraphics[width=\textwidth]{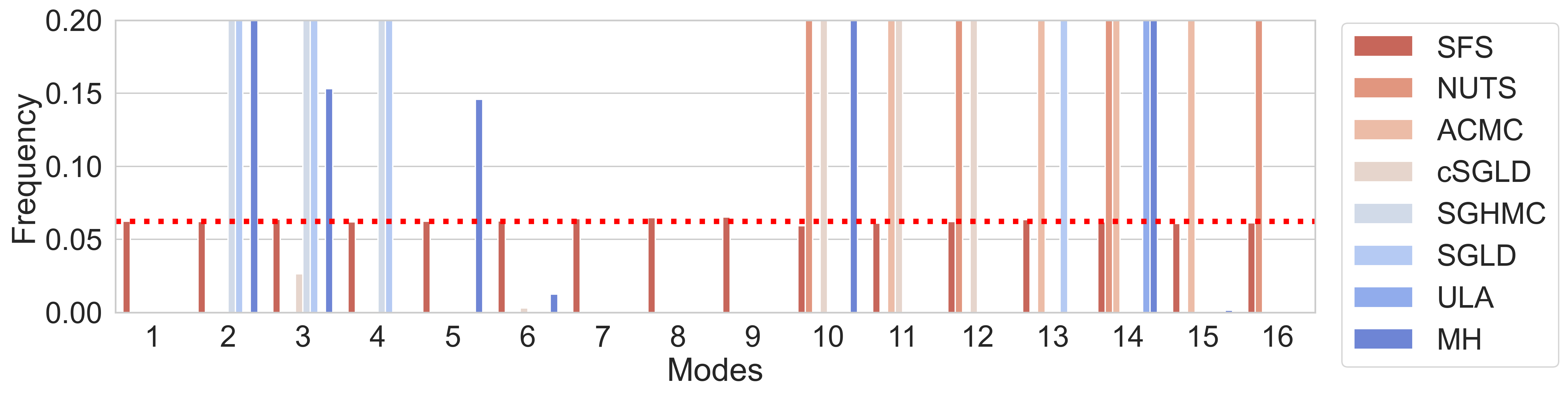}\\
         Scenario 3, $\lambda_1=8$
     \end{minipage}
           \caption{Frequency labeled by $k$-means on the circle shaped Gaussian mixture distributions simulation.}
        \label{fig:circle_mean}
 \end{center}
\end{figure}
\subsection{Bayesian logistic regression}
Consider  the binary logistic regression with an independent and identically distributed sample $\{x_i,y_i\}_{i=1}^{n}$, that is,
\begin{align*}
P(y_i=1|x_i)=\frac{\exp(x^{\top}_i\beta^*)}{1+\exp(x^{\top}_i\beta^*)},~i=1,\ldots,n,
\end{align*}
where  $x_i \in \mathbb{R}^p$  is  the covariate vector, $y_i\in \{0,1\}$ is the response variable and $\beta^*=(\beta^*_1,\ldots,\beta_p^*)^{\top}\in\mathbb{R}^p$  is the vector of underlying regression coefficients.
We draw samples for $\beta^*$ from its posterior distribution.
Following \citep{durmus2016high-dimensional}, \citep{durmus2016sampling} and \citep{dalalyan2017theoretical}, we set the prior distribution of $\beta^*$ to be a Gaussian distribution with zero mean and covariance matrix $\Sigma_{\beta^*}=(\sum_{i=1}^n x_ix_i^{\top}/n)^{-1}$, then the posterior distribution of $\beta^*$ is expressed as
\begin{align}
\label{Blogit}
\mu(\beta^*)\propto \exp\left(\sum_{i=1}^n \left(y _ix_i^{\top}\beta^*-\log(1+\exp(x_i^{\top}\beta^*))\right)-{\beta^*}^{\top}\Sigma_{\beta^*}^{-1}\beta^*/2\right).
\end{align}
Also, as in \citep{durmus2016high-dimensional}, \citep{durmus2016sampling} and \citep{dalalyan2017theoretical},  we  set $n=1000$ and $p=5$.
The covarites vector $\{x_i\}_{i=1}^n$ are independent and identically distributed from $N(0,\Sigma)$, where $\Sigma_{i,j}=0.5^{|i-j|}$ for $1\leq i,j\leq p$.
In Algorithm \ref{alg:2}, we set $K=200$ and $m=1000$.
We generate a random sample with sample size $N=10,000$ using the proposed SFS,  MH algorithm and ULA.
We compare SFS with MH algorithm and ULA in term of
the sample mean (Mean), sample median (Med) and sample variance (Var).
The results are reported in Table \ref{tab:1} and Fig. \ref{bayes_box} and \ref{bayes_density} below.

In Table \ref{tab:1}, the first column describes the methods; the second column shows the criterions
including Mean, Med and Var; the
values in brackets are the true value of  underlying regression coefficients  $\beta^*_i$ with $i=1,\ldots,5$.
For the proposed method,
the values of  Mean and Med are close to the true values of $\beta^*$ except $\beta_2^*$,  while it still takes more accurate values  than MH algorithm and ULA in  $\beta_2^*$,
and the values of Var are slightly larger than other two methods.
Especially,  MH algorithm takes the values trifle away from
$\beta^*_1$, $\beta^*_2$, $\beta^*_4$ and $\beta^*_5$  in Mean and Med.

Fig. \ref{bayes_box} shows the box plots of samples for SFS, MH algorithm and ULA, where the horizontal dotted red lines are the ground truth of the coefficients. We can see that the samples of SFS match the ground truth better than MH algorithm and ULA.

Fig. \ref{bayes_density} depicts the curves of kernel density estimation and true values for each component of $\beta^*$.  In this figure, the red solid lines and the corresponding red shades are the kernel density estimation of SFS, while the blue and green lines correspond to the MH algorithm and ULA respectively, and the
red dotted lines perpendicular to the horizontal axis represent the value of $\beta_i^*$ with $i=1,\dots,5$.
Except for $\beta^*_2$,  the rest $\beta^*_i$ coincide with the peaks of kernel density estimation curves of the proposed method.
But, $\beta_2^*$ and  $\beta_3^*$  keep away from  the peaks of the kernel density estimation curves of ULA, and each component of $\beta^*$ is not close to that of MH algorithm.
  \begin{table}
  \caption{\label{tab:1}Numerical results of the binary Bayesian logistic regression.}
\begin{tabular}{*{7}{c}}
\toprule
Method & Criterions & $\beta^*_1$ (0.220) & $\beta^*_2$ (0.208) & $\beta^*_3$ (-2.027) & $\beta^*_4$ (0.744) & $\beta^*_5$ (1.424) \\ \hline
SFS    & Mean       & 0.206               & 0.307               & -2.033               & 0.716               & 1.372               \\
       & Med        & 0.205               & 0.316               & -2.068               & 0.724               & 1.390               \\
       & Var        & 0.041               & 0.049               & 0.098                & 0.049               & 0.057               \\ \hline
MH     & Mean       & 0.279               & 0.415               & -2.286               & 0.862               & 1.560               \\
       & Med        & 0.336               & 0.469               & -2.321               & 0.965               & 1.556               \\
       & Var        & 0.010               & 0.018               & 0.015                & 0.029               & 0.003               \\ \hline
ULA    & Mean       & 0.219               & 0.372               & -2.214               & 0.780               & 1.461               \\
       & Med        & 0.220               & 0.371               & -2.213               & 0.776               & 1.463               \\
       & Var        & 0.011               & 0.013               & 0.027                & 0.016               & 0.014               \\ \bottomrule

\end{tabular}
\end{table}
\begin{figure}[H]
\begin{center}
   \foreach \x in { 1,2,3,4,5}{
     \begin{subfigure}[b]{0.25\textwidth}
        \centering
         \includegraphics[width=\textwidth]{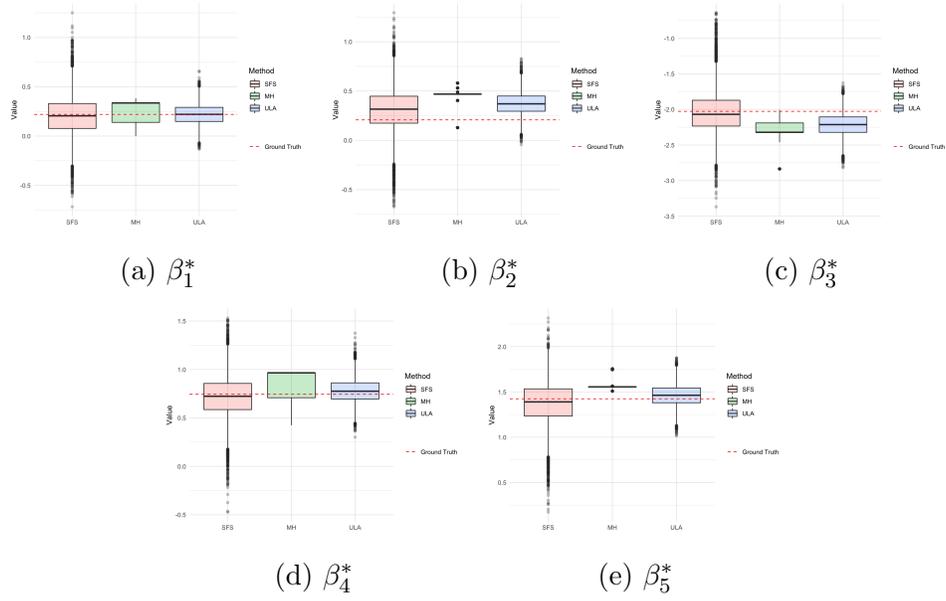}
        \caption{$\beta_\x^*$}
     \end{subfigure}
     }

      \caption{Box plots of the binary Bayesian logistic regression with $N=10,000$ for $\beta_1^*,  \dots, \beta_5^*$. The horizontal dotted red lines are the ground truth of the coefficients.}
        \label{bayes_box}
             \end{center}
\end{figure}

\begin{figure}[H]
\begin{center}
   \foreach \x in { 1,2,3,4,5}{
     \begin{subfigure}[b]{0.25\textwidth}
        \centering
         \includegraphics[width=\textwidth]{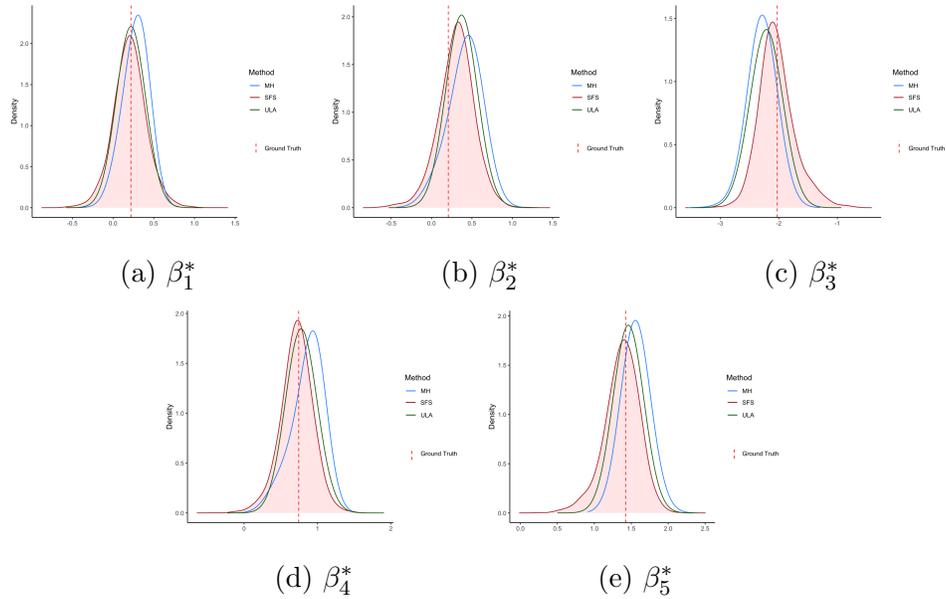}
         \caption{$\beta_\x^*$}
     \end{subfigure}
     }
      \caption{Kernel density estimation of the binary Bayesian logistic regression with $N=10,000$ for $\beta_1^*,  \dots, \beta_5^*$. The red solid lines and the corresponding red shades are the estimated density curves of the proposed method, and the blue and green lines correspond to the MH algorithm and ULA respectively. The vertical dotted red lines are the ground truth of the coefficients.}
        \label{bayes_density}
         \end{center}
\end{figure}

To demonstrate the unbiasedness and coverage rates, we use the same experimental setting for the binary Bayesian logistic regression, except that $p=3$. We generate 100 replicates of  random samples with sample size $N=1,000$ using the proposed SFS,  MH algorithm and ULA.

The absolute errors and ordinary errors are respectively reported with the box plots in the Figure \ref{abs_box} and \ref{box}.
As Figure \ref{abs_box} shows, the performance of SFS and other algorithms is comparable in terms of estimation accuracy.
We can conclude from the box plots of errors in Figure \ref{box} that the estimation of SFS is unbiased based on the experimental simulation.
We demonstrate the coverage rate of the 95\% confidence interval for the true coefficient in Table \ref{suptab:1}, which is based on the random samples we generate from the posterior distribution.
We can see that the coverage rate of the confidence interval based on SFS approaches 95\%, which is generally comparable to that of ULA and MH algorithm.

We apply SFS to the real data classification tasks to compare ULA and MH algorithm.
The dataset Diabetes \citep{smith1988using} aims to predict whether a person has diabetes according to health measurements.
The objective of the dataset German \citep{Dua:2019} is to classify good or bad credit risks.
We need to classify the survival status in the dataset Haberman \citep{haberman1976generalized, Dua:2019} based on the descriptions of patients.
The aim of the dataset Vc \citep{berthonnaud2005analysis, Dua:2019} is to predict orthopaedic patients into normal or abnormal.
The sample size, attribute number and URL of these data are shown in Table \ref{suptab:2}.
We can conclude from the prediction accuracy shown in Figure \ref{real} that SFS outperforms the other two methods.

\section{Conclusion}\label{conclusion}
We propose SFS method for sampling from  distributions
via the Euler-Maruyama discretization of Schr{\"o}dinger-F{\"o}llmer diffusion defined on the unite time interval $[0,1]$.
{ A prominent feature of SFS  is that it does not require the underlying Markov process to be ergodic to generate samples from the target distribution while ergodicity is needed   for other MCMC samplers.}
We establish non-asymptotic error bounds for the sampling distribution of the SFS in the Wasserstein distance under appropriate conditions.
{In particular, when the drift term can be evaluated analytically, we only need some smoothness conditions on the target distribution to ensure the error bounds for samples generated in Algorithm \ref{alg:1}. For the drift term without analytical expression, we further propose to generate samples in Algorithm \ref{alg:2} using the Monte Carlo estimation and establish the consistency of Algorithm \ref{alg:2}.}
Our numerical experiments demonstrate that SFS can enlarge the scope of existing samplers for sampling in multi-mode distributions possibly without normalization.
Therefore, the proposed SFS can be a useful addition to the existing methods for sampling from possibly unnormalized distributions.

Several problems deserve further study. For example, for the samples generated from Algorithm \ref{alg:2}, we show its convergence under a strong convexity condition on the potential. It would be interesting to weaken or even remove this condition. This is an interesting and challenging technical problem.
Applying  SFS to more Bayesian inference with more complex real data  is also of immense interest.

\begin{table}
\begin{center}
  \caption{\label{suptab:1}Coverage rates of the binary Bayesian logistic regression (100 replicates).}
\begin{tabular}{@{}ccccc@{}}
\toprule
Components & 1  & 2   & 3  \\ \midrule
SFS        & 99 \% & 98 \% & 99 \%\\
MH         & 93 \%& 95  \%& 96 \%\\
ULA        & 98 \%& 98 \%& 100 \%\\ \bottomrule
\end{tabular}
\end{center}
\end{table}
\begin{figure}[H]
\begin{center}
   \foreach \x in { 1,2,3}{
     \begin{subfigure}[b]{0.3\textwidth}
        \centering
         \includegraphics[width=\textwidth]{fig/abs_box_beta_\x}
        \caption{$\beta_\x^*$}
     \end{subfigure}
     }
      \caption{Box plots of absolute errors on the binary Bayesian logistic regression with $N=1,000$ for $\beta_1^*,  \dots, \beta_3^*$.}
        \label{abs_box}
             \end{center}
\end{figure}

\begin{figure}[H]
\begin{center}
   \foreach \x in { 1,2,3}{
     \begin{subfigure}[b]{0.3\textwidth}
        \centering
         \includegraphics[width=\textwidth]{fig/box_beta_\x}
        \caption{$\beta_\x^*$}
     \end{subfigure}
     }

      \caption{Box plots of errors on the binary Bayesian logistic regression with $N=1,000$ for $\beta_1^*,  \dots, \beta_3^*$.}
        \label{box}
             \end{center}
\end{figure}

  \begin{table}
  \caption{\label{suptab:2}Descriptions on real datasets.}
  \begin{center}
\tiny
\begin{tabular}{ccccl}
\toprule
Dataset  &\# Sample & \# Attributes & URL                                                                                           \\ \midrule
Diabetes & 768         & 8                    & \url{https://www.kaggle.com/uciml/pima-indians-diabetes-database}           \\
German   & 1000        & 20                   & \url{https://archive.ics.uci.edu/ml/ datasets/statlog+(german+credit+data)} \\
Haberman & 306         & 3                    & \url{https://archive.ics.uci.edu/ml/datasets/haberman's+survival}           \\
Vc       & 310         & 6                    & \url{https://archive.ics.uci.edu/ml/datasets/Vertebral+Column}              \\ \bottomrule
\end{tabular}
\end{center}
\end{table}

\begin{figure}[H]
\begin{center}
   \foreach \x in {Diabetes, German,  Haberman,  Vc}{
     \begin{subfigure}[b]{0.4\textwidth}
        \centering
         \includegraphics[width=\textwidth]{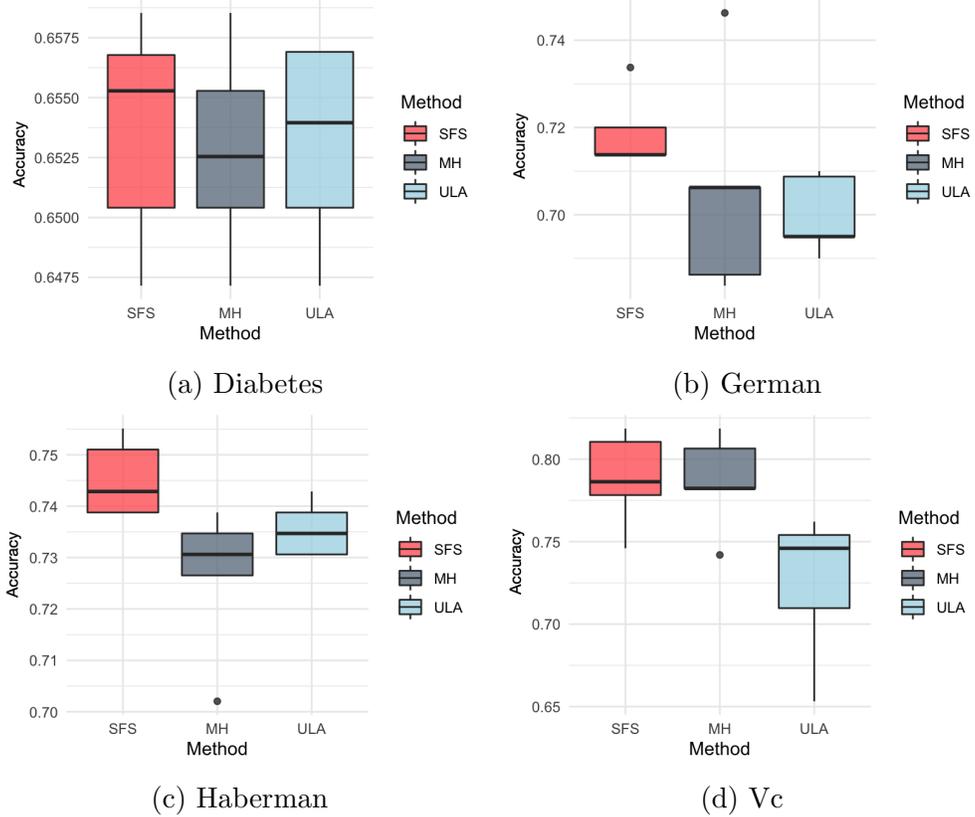}
         \caption{\x}
     \end{subfigure}
     }
      \caption{Prediction accuracies on real data with SFS, ULA and MH algorithm; for each dataset, we evaluate these methods by selecting training and validation sets randomly for five times.}
        \label{real}
         \end{center}
\end{figure}



\section{Acknowledgments}
The authors would like to thank Dr. Francisco Vargas at Department of Computer Science, Cambridge University for helpful  discussion.

J. Huang is partially supported by the U.S. NSF grant DMS-1916199. Y. Jiao is supported in part
by the National Science Foundation of China under Grant 11871474 and by the research fund of
KLATASDSMOE of China. J. Liu is supported by the grants MOE2018-T2-1-046 and MOE2018-T2-2-006 from the Ministry of Eduction, Singapore.  The numerical studies of this work were done on the supercomputing system in the
Supercomputing Center of Wuhan University.

\begin{appendix}
\section{Appendix}\label{append}
\setcounter{equation}{0}
\def\theequation{A.\arabic{equation}}

In the appendix, we prove
Proposition \ref{SBP}, Remark \ref{R2} and  Theorems \ref{th1}-\ref{th4}.
We note that Proposition \ref{SBP} is a known result
(see, e.g., \cite{follmer1985, follmer1986,tzen2019theoretical}).
We include a proof here for ease of reference.
\subsection{Proof of Proposition \ref{SBP}}
\begin{proof}
It is well-known that
the transition probability density of a standard $p$-dimensional Brownian motion is given by
\begin{align*}
\widetilde{p}_{s, t}(x, y)=\frac{1}{(2 \pi(t-s))^{p / 2}} \exp \left(-\frac{1}{2(t-s)}\|x-y\|^{2}_2\right).
\end{align*}
This implies that the diffusion process $\{X_t\}_{t\in [0,1]}$ defined in \eqref{SBP} admits the transition probability density
\begin{align*}
p_{s, t}(x, y)=\widetilde{p}_{s, t}(x, y) \frac{Q_{1-t} f(y)}{Q_{1-s} f(x)}.
\end{align*}
It follows that for any measurable set $A \in \mathcal{B}(\mathbb{R}^p)$,
\begin{align*}
P(X_1\in A)&=\int_A p_{0,1}(0,y) \mathrm{d} y\\
&=\int_A \widetilde{p}_{0,1}(0,y)\frac{Q_{0} f(y)}{Q_{1} f(0)}\mathrm{d} y\\
&=\mu(A).
\end{align*}
Therefore, $X_1$ is distributed as the probability distribution $\mu$.
This completes the proof.
\end{proof}
{\color{black}
\subsection{Drift term $b(x,t)$ for Gaussian mixture distribution \eqref{gaumix}}
We derive the expression of the drift term $b(x,t)$ in \eqref{gaumix} for the Gaussian mixture distribution.
\begin{proof} For $Z \sim N(0, \bI_p)$, we have
\begin{align}\label{exp21}
&\Ebb_{Z} f_i(x+\sqrt{1-t}Z)\notag\\
&=\Ebb_{Y\sim N(x,(1-t)\bI_p)} f_i(Y)\notag\\
&=\frac{1}{|\Sigma_i|^{1/2}}\cdot\Ebb_{Y\sim N(x,(1-t)\bI_p)} \exp\left(\frac{\|Y\|_2^2-(Y-\alpha_i)^{\top}\Sigma_i^{-1}(Y-\alpha_i)}{2}\right)\notag\\
&=\frac{1}{(2\pi (1-t))^{p/2}|\Sigma_i|^{1/2}} \int \exp\left(\frac{\|y\|^2}{2}-\frac{(y-\alpha_i)^{\top}\Sigma_i^{-1}(y-\alpha_i)}{2}-\frac{\|y-x\|^2_2}{2-2t}\right)d y\notag\\
&=\frac{g_i(x,t)}{|t\Sigma_i+(1-t)\bI_p|^{1/2}}.
\end{align}
where
\begin{align*}
g_i(x,t)&=\exp\left(\frac{1}{2-2t}\|(t\bI_p+(1-t)\Sigma_i^{-1})^{-1/2}((1-t)\Sigma_i^{-1}\alpha_i+x)\|_2^2\right)\\
&~~~~\times \exp\left(-\frac{1}{2}\alpha_i^T\Sigma_i^{-1}\alpha_i-\frac{1}{2-2t}\|x\|_2^2
\right).
\end{align*}
Similarly, we have
\begin{align}\label{exp22}
&\Ebb_{Z}\nabla f_i(x+\sqrt{1-t}Z)\notag\\
&=\Ebb_{Y\sim N(x,(1-t)\bI_p)}\nabla f_i(Y)\notag\\
&=\frac{\Sigma_i^{-1}\alpha_i}{|\Sigma_i|^{1/2}}\cdot\Ebb_{Y\sim N(x,(1-t)\bI_p)} \exp\left(\frac{\|Y\|_2^2-(Y-\alpha_i)^{\top}\Sigma_i^{-1}(Y-\alpha_i)}{2}\right)\notag\\
&~~~+\frac{\bI_p-\Sigma_i^{-1}}{|\Sigma_i|^{1/2}}\cdot\Ebb_{Y\sim N(x,(1-t)\bI_p)} Y\cdot\exp\left(\frac{\|Y\|_2^2-(Y-\alpha_i)^{\top}\Sigma_i^{-1}(Y-\alpha_i)}{2}\right)
\notag\\
&=
\frac{\Sigma_i^{-1}\alpha_i}{|\Sigma_i|^{1/2}}\cdot
\int \frac{1}{(2\pi (1-t))^{p/2}} \exp\left(\frac{\|y\|^2}{2}-\frac{(y-\alpha_i)^{\top}\Sigma_i^{-1}(y-\alpha_i)}{2}-\frac{\|y-x\|^2_2}{2-2t}\right)d y\notag\\
&~~~+\frac{\bI_p-\Sigma_i^{-1}}{|\Sigma_i|^{1/2}}\cdot
\int \frac{y }{(2\pi (1-t))^{p/2}} \exp\left(\frac{\|y\|^2}{2}-\frac{(y-\alpha_i)^{\top}\Sigma_i^{-1}(y-\alpha_i)}{2}-\frac{\|y-x\|^2_2}{2-2t}\right)d y\notag\\
&=\frac{\Sigma_i^{-1}\alpha_i+(\bI_p-\Sigma_i^{-1})[t\bI_p+(1-t)\Sigma_i^{-1}]^{-1}
[(1-t)\Sigma_i^{-1}\alpha_i+x]}
{|t\Sigma_i+(1-t)\bI_p|^{1/2}}\cdot g_i(x,t).
\end{align}
Hence, the analytical expression of $b(x,t)$ in \eqref{dr} is obtained by plugging the expressions \eqref{exp21}
and \eqref{exp22} into \eqref{dr}.
\end{proof}
}
\subsection{Proof of Remark \ref{R2}}
\begin{proof}
Since $g$ and $\nabla g$ are Lipschitz continuous,  there exists a finite and positive constant $\gamma$
such that for all $x,y \in \mathbb{R}^p$,
\begin{align}\label{L1L}
|g(x)-g(y)|\leq \gamma \|x-y\|_2,
\end{align}
\begin{align}\label{L2L}
\|\nabla g(x)- \nabla g(y)\|_2\leq \gamma \|x-y\|_2.
\end{align}
Moreover, since  $g$ has a lower  bound greater than 0,   there exists a finite and positive constant $\xi$ such that
\begin{align}\label{boundfL}
 g \geq \xi>0.
\end{align}
By \eqref{L1L} and \eqref{boundfL}, it yields that
for all $x \in \mathbb{R}^{p}$ and $t \in[0,1]$,
\begin{align}\label{re3}
\|b(x, t)\|_2=\frac{\left\|\nabla Q_{1-t} g(x)\right\|_2}{Q_{1-t} g(x)} \leq \frac{\gamma}{\xi}.
\end{align}
Then, by \eqref{L1L}-\eqref{re3},  for all  $x, y \in \mathbb{R}^{p}$ and $t \in[0,1]$,
\begin{align*}
\left\|b(x, t)-b\left(y, t\right)\right\|_2 &=\left\|\frac{\nabla Q_{1-t} g(x)}{Q_{1-t} g(x)}-\frac{\nabla Q_{1-t} g\left(y\right)}{Q_{1-t} g\left(y\right)}\right\|_2 \\
& \leq \frac{\left\|\nabla Q_{1-t} g(x)-\nabla Q_{1-t} g\left(y\right)\right\|_2}{Q_{1-t} g\left(y\right)}+\|b(x, t)\|_2 \cdot \frac{\left|Q_{1-t} g(x)-Q_{1-t} g\left(y\right)\right|}{Q_{1-t} g\left(y\right)} \\
&\leq\left(\frac{\gamma}{\xi}+\frac{\gamma^{2}}{\xi^{2}}\right)\left\|x-y\right\|_2.
\end{align*}
Similarly, by \eqref{L1L}-\eqref{re3},  for all  $x, y \in \mathbb{R}^{p}$ and $t \in[0,1]$,
\begin{align*}
\left\|b(x, t)-b\left(x, s\right)\right\|_2 &=\left\|\frac{\nabla Q_{1-t} g(x)}{Q_{1-t} g(x)}-\frac{\nabla Q_{1-s} g\left(x\right)}{Q_{1-s} g\left(x\right)}\right\|_2 \\
& \leq \frac{\left\|\nabla Q_{1-t} g(x)-\nabla Q_{1-s} g\left(x\right)\right\|_2}{Q_{1-t} g\left(x\right)}+\|b(x, s)\|_2 \cdot \frac{\left|Q_{1-s} g(x)-Q_{1-t} g\left(x\right)\right|}{Q_{1-t} g\left(x\right)} \\
&\leq\Ebb\|Z\|_2\left(\frac{\gamma}{\xi}+\frac{\gamma^{2}}{\xi^{2}}\right)|t-s|^{\frac{1}{2}}\\
&\leq\sqrt{p}\left(\frac{\gamma}{\xi}+\frac{\gamma^{2}}{\xi^{2}}\right)|t-s|^{\frac{1}{2}}.
\end{align*}
Therefore, there exists a finite and positive constant $C_1\geq \sqrt{p}\left(\frac{\gamma}{\xi}+\frac{\gamma^{2}}{\xi^{2}}\right)$ such that
\begin{align*}
\|b(x,t)-b(y,s)\|_2&\leq\|b(x,t)-b(x,s)\|_2+\|b(x,s)-b(y,s)\|_2\\
&\leq C_1\left(\|x-y\|_2+|t-s|^{\frac{1}{2}}\right).
\end{align*}
Therefore, the assumption  \eqref{cond3} holds.
Further, set $t=s$, then  \eqref{cond2} holds.
Combining  \eqref{re3} and \eqref{cond2} with the triangle inequality, we have
$$
\|b(x,t)\|_2\leq \|b(0,t)\|_2+C_1\|x\|_2\leq \frac{\gamma}{\xi}+C_1\|x\|_2.
$$
Let $C_0\geq \max\left\{\frac{\gamma}{\xi}, C_1\right\}$, then
\eqref{cond1} holds.

\end{proof}
\subsection{Preliminary lemmas for Theorem \ref{th1}}
First, we introduce  Lemmas \ref{lemma1o}-\ref{lemma2o} in preparing for the
proofs of Theorem \ref{th1}.
\begin{lemma}\label{lemma1o}
Assume \eqref{cond1} holds, then
\begin{align*}
\Ebb[\|X_t\|_2^2]\leq 2(C_0+p)\exp(2C_0t).
\end{align*}
\end{lemma}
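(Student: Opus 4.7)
My plan is to derive the second-moment estimate by applying It\^o's formula to the squared norm of the SFS diffusion, reducing the inequality to a linear differential inequality for $u(t) := \Ebb[\|X_t\|_2^2]$, and then closing with Gr\"onwall's lemma. The process $X_t$ satisfies $dX_t = b(X_t,t)\,dt + dB_t$ with $X_0 = 0$, and condition \eqref{cond1} gives the linear growth on $b$ that makes the estimate possible.

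First I would apply It\^o's formula to $\varphi(x) = \|x\|_2^2$, obtaining
\begin{align*}
d\|X_t\|_2^2 = 2\langle X_t, b(X_t,t)\rangle\, dt + 2\langle X_t, dB_t\rangle + p\, dt,
\end{align*}
where the $p\,dt$ term comes from the trace of the identity diffusion matrix. Since $b$ has only linear growth, the local martingale $\int_0^t 2\langle X_s, dB_s\rangle$ need not be a true martingale a priori, so I would introduce the stopping times $\tau_N := \inf\{t \ge 0 : \|X_t\|_2 \ge N\}$, take expectations up to $t\wedge \tau_N$ to kill the martingale term, and then send $N \to \infty$ using monotone/Fatou at the end (the usual localization argument, standard for SDEs with linear-growth coefficients as in \cite{revuz2013continuous}).

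Next I would bound the drift contribution with Young's inequality, $2\langle X_t, b(X_t,t)\rangle \le \|X_t\|_2^2 + \|b(X_t,t)\|_2^2$, and then invoke \eqref{cond1} to replace $\|b(X_t,t)\|_2^2$ by $C_0(1 + \|X_t\|_2^2)$. This yields the integral inequality
\begin{align*}
u(t) \le \int_0^t \bigl[(1+C_0)u(s) + (C_0 + p)\bigr]\, ds,
\end{align*}
equivalently $u'(t) \le (1+C_0) u(t) + (C_0+p)$ with $u(0) = 0$. Gr\"onwall's inequality then gives $u(t) \le \frac{C_0+p}{1+C_0}\bigl(e^{(1+C_0)t} - 1\bigr)$, which I would loosen (using $(1+C_0) \le 2C_0$ once $C_0$ is taken at least $1$, or by absorbing constants into the prefactor) to reach the stated bound $2(C_0+p)\exp(2C_0 t)$.

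I do not expect a serious obstacle here; the only subtlety is the localization that legitimizes discarding the stochastic-integral term, and the slightly lossy choice of Young constants that produces the exponent $2C_0$ rather than $1+C_0$. The bound is deliberately crude so that it can be fed into the later Wasserstein estimates (Theorems \ref{th1}--\ref{th4}) without having to track sharp constants.
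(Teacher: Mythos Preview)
Your It\^o-formula approach is correct and yields a valid second-moment bound, but it is not the route the paper takes. The paper works directly with the integral form $X_t=\int_0^t b(X_u,u)\,du+B_t$, applies the triangle inequality, $(a+b)^2\le 2a^2+2b^2$, and Cauchy--Schwarz in time, then uses \eqref{cond1} together with $t\le 1$ to obtain
\[
\Ebb\|X_t\|_2^2\le 2C_0\int_0^t \Ebb\|X_u\|_2^2\,du+2(C_0+p),
\]
after which the Bellman--Gr\"onwall inequality gives exactly $2(C_0+p)e^{2C_0 t}$. This argument is more elementary---no It\^o formula, no stopping-time localization---and it produces the stated constants on the nose. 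Your route gives the (often sharper) bound $u(t)\le \frac{C_0+p}{1+C_0}\bigl(e^{(1+C_0)t}-1\bigr)$, but with exponent $1+C_0$ rather than $2C_0$; your proposed loosening ``$1+C_0\le 2C_0$'' needs $C_0\ge 1$, which the paper does not assume, so to match the stated inequality verbatim you would have to adjust the Young split or simply record that you have proved an equivalent $O\bigl((C_0+p)e^{O(C_0)t}\bigr)$ bound with slightly different constants.
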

\begin{proof}
By the definition of $X_t$ in \eqref{sch-equation}, we have
$
\|X_t\|_2\leq \int_{0}^t\|b(X_u,u)\|_2\mathrm{d}u+\|B_t\|_2.
$
It follows that
\begin{align*}
\|X_t\|_2^2&\leq
2\left(\int_{0}^t\|b(X_u,u)\|_2\mathrm{d}u\right)^2+2\|B_t\|_2^2\\
&\leq
2t\int_{0}^t\|b(X_u,u)\|_2^2\mathrm{d}u+2\|B_t\|_2^2\\
&\leq
2t\int_{0}^tC_0[\|X_u\|_2^2+1]\mathrm{d}u+2\|B_t\|_2^2,
\end{align*}
where the first inequality follows from the inequality $(a+b)^2\leq 2a^2+2b^2$, the last inequality follows from condition \eqref{cond1}. Thus,
\begin{align*}
\Ebb\|X_t\|_2^2&\leq
2t\int_{0}^tC_0(\Ebb\|X_u\|_2^2+1)\mathrm{d}u+2\Ebb\|B_t\|_2^2\\
&\leq
2C_0\int_{0}^t \Ebb\|X_u\|_2^2\mathrm{d}u+2(C_0+p).
\end{align*}
By the Bellman-Gronwall inequality, we have
\begin{align*}
\Ebb\|X_t\|_2^2\leq 2(C_0+p)\exp(2C_0t).
\end{align*}
This completes the proof.
\end{proof}
\begin{lemma}\label{lemma2o}
Assume \eqref{cond1} holds, then for any $0\leq t_1\leq t_2\leq 1$,
\begin{align*}
\Ebb[\|X_{t_2}-X_{t_1}\|_2^2]\leq 4C_0\exp(2C_0)(C_0+p)(t_2-t_1)^2+2C_0(t_2-t_1)^2+2p(t_2-t_1).
\end{align*}
\end{lemma}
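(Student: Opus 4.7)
The plan is to mimic the argument used for Lemma \ref{lemma1o}, but applied to the increment $X_{t_2}-X_{t_1}$ rather than to $X_t$ itself. The starting point is the integral form of the SDE \eqref{sch-equation}, which gives
\[
X_{t_2}-X_{t_1} = \int_{t_1}^{t_2} b(X_u,u)\,\mathrm{d}u + (B_{t_2}-B_{t_1}).
\]
Applying the elementary inequality $\|a+b\|_2^2\le 2\|a\|_2^2+2\|b\|_2^2$ splits the target quantity into a drift-integral contribution and a Brownian-increment contribution.

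For the Brownian part, the standard identity $\Ebb\|B_{t_2}-B_{t_1}\|_2^2 = p(t_2-t_1)$ contributes the $2p(t_2-t_1)$ term. For the drift part I would apply Cauchy--Schwarz in time,
\[
\Bigl\|\int_{t_1}^{t_2} b(X_u,u)\,\mathrm{d}u\Bigr\|_2^2 \le (t_2-t_1)\int_{t_1}^{t_2}\|b(X_u,u)\|_2^2\,\mathrm{d}u,
\]
then take expectations and invoke \eqref{cond1} to get $\Ebb\|b(X_u,u)\|_2^2 \le C_0(1+\Ebb\|X_u\|_2^2)$.

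The second moment $\Ebb\|X_u\|_2^2$ is then controlled by Lemma \ref{lemma1o}, which yields $\Ebb\|X_u\|_2^2\le 2(C_0+p)\exp(2C_0)$ uniformly in $u\in[0,1]$. Plugging this bound back in and integrating over $[t_1,t_2]$ produces the two $(t_2-t_1)^2$ terms, with the constants matching exactly $4C_0\exp(2C_0)(C_0+p)$ and $2C_0$ after the factor of $2$ from $(a+b)^2\le 2a^2+2b^2$ is absorbed. Summing the drift and Brownian contributions gives the stated inequality.

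There is no real obstacle here: the proof is a routine application of the SDE decomposition, Cauchy--Schwarz, the linear growth condition \eqref{cond1}, and the uniform bound from the preceding lemma. The only mild bookkeeping item is to be careful that the uniform bound $\exp(2C_0u)\le\exp(2C_0)$ on $[0,1]$ is used so that the constant in front of $(t_2-t_1)^2$ comes out as written, rather than as a quantity depending on $t_1,t_2$.
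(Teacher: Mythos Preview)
Your proposal is correct and follows essentially the same approach as the paper: decompose the increment via the SDE, apply $(a+b)^2\le 2a^2+2b^2$, use Cauchy--Schwarz on the drift integral, invoke \eqref{cond1}, and then plug in the uniform second-moment bound from Lemma \ref{lemma1o} (with $\exp(2C_0u)\le\exp(2C_0)$ on $[0,1]$) to recover the stated constants.
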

\begin{proof}
By the definition of $X_t$ in \eqref{sch-equation}, we have
\begin{align*}
\|X_{t_2}-X_{t_1}\|_2\leq \int_{t_1}^{t_2}\|b(X_u,u)\|_2\mathrm{d}u+\|B_{t_2}-B_{t_1}\|_2.
\end{align*}
Therefore,
\begin{align*}
\|X_{t_2}-X_{t_1}\|_2^2&\leq
2\left(\int_{t_1}^{t_2}\|b(X_u,u)\|_2\mathrm{d}u\right)^2+2\|B_{t_2}-B_{t_1}\|_2^2\\
&\leq
2(t_2-t_1)\int_{t_1}^{t_2}\|b(X_u,u)\|_2^2\mathrm{d}u+2\|B_{t_2}-B_{t_1}\|_2^2\\
&\leq
2(t_2-t_1)\int_{t_1}^{t_2}C_0[\|X_u\|_2^2+1]\mathrm{d}u+2\|B_{t_2}-B_{t_1}\|_2^2,
\end{align*}
where the last inequality follows from condition \eqref{cond1}.
Hence,
\begin{align*}
\Ebb\|X_{t_2}-X_{t_1}\|_2^2&\leq
2(t_2-t_1)\int_{t_1}^{t_2}C_0(\Ebb\|X_u\|_2^2+1)\mathrm{d}u
+2\Ebb\|B_{t_2}-B_{t_1}\|_2^2\\
&\leq
4C_0\exp(2C_0)(C_0+p)(t_2-t_1)^2+2C_0(t_2-t_1)^2+2p(t_2-t_1),
\end{align*}
where the last inequality follows from Lemma \ref{lemma1o}.
This completes the proof.
\end{proof}
\subsection{Proof of Theorem \ref{th1}}
\begin{proof}
By the definition of $Y_{t_k}$ and $X_{t_k}$, we have
\begin{align*}
\|Y_{t_k}-X_{t_k}\|_2^2
&\leq \|Y_{t_{k-1}}-X_{t_{k-1}}\|_2^2
+\left(\int_{t_{k-1}}^{t_k}\|b(X_u,u)-b(Y_{t_{k-1}},t_{k-1})\|_2\mathrm{d} u\right)^2\\
&~~~+2\|Y_{t_{k-1}}-X_{t_{k-1}}\|_2
\left(\int_{t_{k-1}}^{t_k}\|b(X_u,u)-b(Y_{t_{k-1}},t_{k-1})\|_2\mathrm{d} u\right)\\
&\leq (1+s) \|Y_{t_{k-1}}-X_{t_{k-1}}\|_2^2
+(1+s)\int_{t_{k-1}}^{t_k}\|b(X_u,u)-b(Y_{t_{k-1}},t_{k-1})\|_2^2\mathrm{d} u\\
&\leq (1+s) \|Y_{t_{k-1}}-X_{t_{k-1}}\|_2^2+4C_1^2(1+s)\int_{t_{k-1}}^{t_k}[\|X_u-Y_{t_{k-1}}\|_2^2+|u-t_{k-1}|]\mathrm{d} u\\
&\leq (1+s) \|Y_{t_{k-1}}-X_{t_{k-1}}\|_2^2
+8C_1^2(1+s)\int_{t_{k-1}}^{t_k}\|X_u-X_{t_{k-1}}\|_2^2\mathrm{d} u\\
&~~~~+8C_1^2s(1+s)\|X_{t_{k-1}}-Y_{t_{k-1}}\|_2^2
+4C_2^2(1+s)s^2\\
&\leq (1+s+8C_1^2(s+s^2))\|Y_{t_{k-1}}-X_{t_{k-1}}\|_2^2
+8C_1^2(1+s)\int_{t_{k-1}}^{t_k}\|X_u-X_{t_{k-1}}\|_2^2\mathrm{d} u\\
&~~~~+4C_1^2(1+s)s^2,
\end{align*}
where the second inequality holds duo to $2ab\leq s a^2+\frac{b^2}{s}$,
the third inequality follows from condition \eqref{cond3}.
Then,
\begin{align}\label{th1eq1o}
\Ebb\|Y_{t_k}-X_{t_k}\|_2^2
&\leq (1+s+8C_1^2(s+s^2))
\Ebb\|Y_{t_{k-1}}-X_{t_{k-1}}\|_2^2\notag\\
&~~~+8C_1^2(1+s)\int_{t_{k-1}}^{t_k}\Ebb\|X_u-X_{t_{k-1}}\|_2^2\mathrm{d} u+4C_1^2(s^2+s^3)\notag\\
&\leq(1+s+8C_1^2(s+s^2)) \Ebb\|Y_{t_{k-1}}-X_{t_{k-1}}\|_2^2+h(s)+4C_1^2(s^2+s^3),
\end{align}
where $h(s)=8C_0^2(s+s^2)[4C_0(C_0+p)\exp(2C_0)s^2+2C_0s^2+2ps]$,
and the last inequality \eqref{th1eq1o} follows from Lemma \ref{lemma2o}.
Owing to $Y_{t_0}=X_{t_0}=0$, we can conclude that
\begin{align*}
\Ebb\|Y_{t_K}-X_{t_K}\|_2^2
&\leq\frac{(1+s+8C_1^2(s+s^2))^K-1}{s+8C_1^2(s+s^2)}
\left[h(s)+4C_1^2(s^2+s^3)\right]\\
&\leq\mathcal{O}(ps).
\end{align*}
Therefore,
\begin{align*}
W_2(Law(Y_{t_K}),\mu)
\leq \mathcal{O}(\sqrt{ps}).
\end{align*}
This completes the proof.
\end{proof}
\subsection{Preliminary Lemmas for Theorem \ref{th2}-\ref{th3}}
First, we introduce  Lemmas \ref{lemma3}-\ref{lemma6} in preparing for the
proofs of Theorem \ref{th2}-\ref{th3}.
\begin{lemma}\label{lemma3}
(Lemma 1 in \cite{dalalyan2017further} and Lemma 2 in \cite{dalalyan2019user-friendly}).
Denote
$\Delta_k=X_{t_k}-\widetilde{Y}_{t_k}$ and
$\omega_k=b(\widetilde{Y}_{t_k},t_k)-b(X_{t_k},t_k)$ with $k=0,1,\ldots,K$.
Assume conditions \eqref{cond2} and \eqref{cond4} hold, and $s<2/(C_1+M)<1$, then
\begin{align*}
\|\Delta_{k}-s\omega_k\|_2\leq \rho \|\Delta_{k}\|_2,
\end{align*}
where $\rho=1-sM\in (0,1)$.
\end{lemma}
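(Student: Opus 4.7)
}
The plan is to recognize the quantity $\Delta_k - s\omega_k$ as the displacement produced by one step of gradient descent applied to the potential $U(\cdot, t_k)$. Since $b(x,t) = -\nabla_x U(x,t)$, we have $\omega_k = \nabla_x U(X_{t_k}, t_k) - \nabla_x U(\widetilde{Y}_{t_k}, t_k)$, so with $T(x) := x - s\nabla_x U(x, t_k)$,
$$\Delta_k - s\omega_k = T(X_{t_k}) - T(\widetilde{Y}_{t_k}).$$
Thus the lemma reduces to showing that $T$ is $(1-sM)$-Lipschitz on $\mathbb{R}^p$ whenever $s < 2/(C_1+M)$. This is a classical contraction estimate for a gradient step applied to a function that is simultaneously $M$-strongly convex and has $C_1$-Lipschitz gradient, which is precisely what \eqref{cond4} and \eqref{cond2} (the latter applied to $b = -\nabla_x U$) furnish.

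First I would expand the square,
$$\|\Delta_k - s\omega_k\|_2^2 = \|\Delta_k\|_2^2 - 2s\langle \Delta_k, \omega_k\rangle + s^2\|\omega_k\|_2^2,$$
and then bring in two standard consequences of the hypotheses. The first is Nesterov's co-coercivity inequality for $M$-strongly convex, $C_1$-smooth functions,
$$\langle \omega_k, \Delta_k\rangle \geq \frac{MC_1}{M+C_1}\|\Delta_k\|_2^2 + \frac{1}{M+C_1}\|\omega_k\|_2^2,$$
which follows by applying Baillon--Haddad to $U(\cdot,t_k) - (M/2)\|\cdot\|_2^2$. The second is the elementary lower bound $\|\omega_k\|_2 \geq M\|\Delta_k\|_2$, obtained from strong monotonicity $\langle \omega_k, \Delta_k\rangle \geq M\|\Delta_k\|_2^2$ combined with Cauchy--Schwarz.

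Substituting the co-coercivity bound into the expansion yields
$$\|\Delta_k - s\omega_k\|_2^2 \leq \left(1 - \frac{2sMC_1}{M+C_1}\right)\|\Delta_k\|_2^2 + s\left(s - \frac{2}{M+C_1}\right)\|\omega_k\|_2^2.$$
The critical observation is that for $s < 2/(M+C_1)$ the coefficient of $\|\omega_k\|_2^2$ is negative, so replacing $\|\omega_k\|_2^2$ by a \emph{smaller} quantity \emph{increases} the right-hand side; therefore I would use $\|\omega_k\|_2^2 \geq M^2\|\Delta_k\|_2^2$ (the auxiliary bound from the previous paragraph) to absorb this term. Substituting and simplifying collapses the bracket to $1 - 2sM + s^2M^2 = (1-sM)^2$, and taking square roots gives the claim.

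The main obstacle, and the step that I would highlight, is the subtle point that naively combining strong monotonicity $\langle \omega_k, \Delta_k\rangle \geq M\|\Delta_k\|_2^2$ with the Lipschitz upper bound $\|\omega_k\|_2 \leq C_1\|\Delta_k\|_2$ only yields the weaker constant $\sqrt{1 - 2sM + s^2C_1^2}$, which exceeds $1-sM$ when $C_1 > M$. The sharp constant $1-sM$ requires the refined Nesterov inequality together with the \emph{lower} bound $\|\omega_k\|_2 \geq M\|\Delta_k\|_2$, and the two must be combined in the precise way above. Since the result is standard (cf.\ \citet{dalalyan2017further,dalalyan2019user-friendly}), in practice one may simply invoke those references, which is what the lemma statement already does.
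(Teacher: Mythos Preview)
Your proposal is correct and follows essentially the same route as the paper: both expand $\|\Delta_k - s\omega_k\|_2^2$, apply the Nesterov co-coercivity inequality $\langle \Delta_k,\omega_k\rangle \ge \frac{MC_1}{M+C_1}\|\Delta_k\|_2^2 + \frac{1}{M+C_1}\|\omega_k\|_2^2$, and then use the lower bound $\|\omega_k\|_2 \ge M\|\Delta_k\|_2$ on the resulting negative-coefficient term to collapse everything to $(1-sM)^2\|\Delta_k\|_2^2$. Your gradient-step framing and the remark on why the naive monotonicity-plus-Lipschitz bound is insufficient are helpful expository additions, but the underlying argument is the same as the paper's.
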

\begin{proof}
By  \eqref{cond2} and \eqref{cond4}, we have
$$
(x-y)^{\top}(\nabla U(x,t)-\nabla U(y,t)) \geq \frac{MC_1}{M+C_1}\|x-y\|_{2}^{2}+\frac{1}{M+C_1}\|\nabla U(x,t)- \nabla U(y,t)\|_{2}^{2}
$$
for all $x, y \in \mathbb{R}^p$. Therefore, by some simple calculation, we can get
\begin{align*}
\left\|\Delta_{k}-s \omega_{k}\right\|_{2}^{2} &=\left\|\Delta_{k}\right\|_{2}^{2}-2 s \Delta_{k}^{\top} \omega_{k}+s^{2}\left\|\omega_{k}\right\|_{2}^{2} \\
&\leq\left\|\Delta_{k}\right\|_{2}^{2}-\frac{2 s M C_1}{M+C_1}\left\|\Delta_{k}\right\|_{2}^{2}-\frac{2 s}{M+C_1}\left\|\omega_{k}\right\|_{2}^{2}+s^{2}\left\|\omega_{k}\right\|_{2}^{2} \\
&=\left(1-\frac{2 s M C_1}{M+C_1}\right)\left\|\Delta_{k}\right\|_{2}^{2}
+s\left(s-\frac{2}{M+C_1}\right)\left\|\omega_{k}\right\|_{2}^{2}.
\end{align*}
By \eqref{cond4}, we have $\left\|\omega_{k}\right\|_{2}\geq M\|\Delta_{k}\|_2$. Due to $s \leq 2/(M+C_1)$, then it yields
\[
\left\|\Delta_{k}-s \omega_{k}\right\|_{2}^{2} \leq(1-s M)^{2}\left\|\Delta_{k}\right\|_{2}^{2}.
\]
As $C_1> M$, then $2/(M+C_1)< 1/M$. Thus $\rho=1-sM\in (0,1)$.
This completes the proof.
\end{proof}
\begin{lemma}\label{lemma4}
If $g$  and $\nabla g$ are Lipschitz continuous, and $g$  has the lower bound greater than 0, then for any $R>0$,
\begin{align*}
\underset{\|x\|_2
\leq R,t\in [0,1]}{\sup} \Ebb\left[\|b(x,t)-\tilde{b}_m(x,t)\|_2^2\right]
\leq 
\mathcal{O}\left(\frac{p\exp(R^2)}{m}\right).
\end{align*}
Moreover, if $g$ has a finite upper bound, then
\begin{align*}
\underset{x\in\mathbb{R}^p,t\in [0,1]}{\sup}
\Ebb\left[\|b(x,t)-\tilde{b}_m(t,x)\|_2^2\right]
\leq \mathcal{O}\left(\frac{p}{m}\right).
\end{align*}
\end{lemma}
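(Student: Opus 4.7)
My plan is to bound the Monte Carlo error $\tilde{b}_m - b$ by a standard quotient decomposition, with the main work being variance estimates that exploit the Lipschitz hypotheses on $g$ and $\nabla g$. Writing
\begin{align*}
A(x,t) = \Ebb_Z[\nabla g(x+\sqrt{1-t}Z)], \quad B(x,t) = \Ebb_Z[g(x+\sqrt{1-t}Z)],
\end{align*}
with empirical analogues $\hat A_m,\hat B_m$ formed from $Z_1,\dots,Z_m\stackrel{\text{iid}}{\sim}N(0,\bI_p)$, I would start from
\begin{align*}
\tilde{b}_m - b \;=\; \frac{\hat A_m - A}{\hat B_m} \;+\; \frac{A\,(B - \hat B_m)}{B\,\hat B_m},
\end{align*}
apply $(a+b)^2\le 2a^2+2b^2$ in squared norm, and use the pointwise lower bound $B\ge\xi$, $\hat B_m\ge\xi$ (which follows from $g\ge\xi>0$) to reduce the problem to
\begin{align*}
\Ebb\|\tilde{b}_m-b\|_2^2 \;\le\; \tfrac{2}{\xi^2}\,\Ebb\|\hat A_m - A\|_2^2 \;+\; \tfrac{2\|A\|_2^2}{\xi^4}\,\Ebb(\hat B_m - B)^2.
\end{align*}

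Next I would bound the two Monte Carlo variances with an anchoring trick that yields the correct linear-in-$p$ dimensional scaling. Since $A$ is the minimiser of $c\mapsto\Ebb\|\nabla g(x+\sqrt{1-t}Z)-c\|_2^2$, choosing $c=\nabla g(x)$ and using the Lipschitz continuity of $\nabla g$ with constant $\gamma$ gives
\begin{align*}
\Ebb\|\hat A_m - A\|_2^2 \;\le\; \tfrac{1}{m}\,\Ebb\|\nabla g(x+\sqrt{1-t}Z)-\nabla g(x)\|_2^2 \;\le\; \gamma^2(1-t)\,\tfrac{p}{m},
\end{align*}
using $\Ebb\|Z\|_2^2 = p$. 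The scalar analogue, based on $g$ Lipschitz, yields $\Ebb(\hat B_m - B)^2 \le \gamma^2(1-t)\,p/m$. These two estimates supply the $p/m$ factor that appears in both conclusions of the lemma.

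Finally I would control $\|A\|_2^2$ differently in the two cases. Under the stronger hypothesis with $g$ bounded above, Lipschitz continuity of $g$ already forces $\|\nabla g\|_2\le\gamma$ almost everywhere, hence $\|A\|_2\le\gamma$ uniformly in $x$; combining this with the variance bounds above produces the uniform $\mathcal{O}(p/m)$ estimate. Without the upper bound on $g$, the remaining control must absorb the possible linear growth of $g$; on the ball $\|x\|_2 \le R$ a Cauchy--Schwarz step then produces a Gaussian moment generating term $\Ebb\exp(\alpha\|Z\|_2)$ composed with the linear growth of $g$, from which the factor $\exp(R^2)$ enters the bound.

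The main technical obstacle is arranging the dimensional dependence correctly: a naive bound using $\tfrac{1}{m}\Ebb\|\nabla g(x+\sqrt{1-t}Z)\|_2^2$ would carry a factor of order $R^2+p$ and produce an inferior $(R^2+p)/m$ rate, so the anchoring step (replacing $A$ by $\nabla g(x)$ and exploiting the Lipschitz constant of $\nabla g$) is essential. After that, everything else is bookkeeping, and the distinction between the two conclusions lies entirely in how one controls $\|A\|_2$ under the respective boundedness hypotheses on $g$.
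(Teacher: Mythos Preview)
Your decomposition and variance estimates are correct and closely parallel the paper's argument. The paper writes the algebraically equivalent split $b-\tilde{b}_m=\dfrac{d(e_m-e)+e(d-d_m)}{ee_m}$ (with $d=A$, $e=B$ in your notation), bounds $\Ebb\|d-d_m\|_2^2$ and $\Ebb|e-e_m|^2$ by a symmetrization trick (introducing independent copies $Z_i'$ and using the Lipschitz hypotheses to get $2p\gamma^2/m$), and uses $e,e_m\ge\xi$ in the denominator.

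The slip in your proposal is the last paragraph. Lipschitz continuity of $g$ with constant $\gamma$ \emph{already} forces $\|\nabla g\|_2\le\gamma$ a.e., with no need for an upper bound on $g$; hence $\|A\|_2\le\gamma$ uniformly in $(x,t)$ in \emph{both} cases. Plugging this into your own inequality
\[
\Ebb\|\tilde{b}_m-b\|_2^2\ \le\ \tfrac{2}{\xi^2}\,\Ebb\|\hat A_m-A\|_2^2+\tfrac{2\|A\|_2^2}{\xi^4}\,\Ebb(\hat B_m-B)^2
\]
already yields $\mathcal{O}(p/m)$ uniformly over $\mathbb{R}^p\times[0,1]$, which is stronger than the first conclusion of the lemma and renders the extra boundedness hypothesis in the second conclusion superfluous. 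The reason the paper picks up the $\exp(R^2)$ factor is that \emph{its} decomposition carries the multiplier $|e|=Q_{1-t}g(x)$, which can grow with $\|x\|$ when $g$ is unbounded; your split places $B$ only in the denominator, so no such control is needed. In short: your route works and in fact proves more than stated, and the hand-waving about ``Gaussian moment generating terms'' is unnecessary --- just observe that $\|A\|_2\le\gamma$ always.
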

\begin{proof}
Denote two independent sets of independent copies of $Z\sim N(0, \bI_p)$,
that is, $\bZ=\{Z_1,\ldots,Z_m\}$ and $\bZ^{'}=\{Z_1^{'},\ldots,Z_m^{'}\}$.
For notation convenience, we  denote
\begin{align*}
&d=\Ebb_{Z}\nabla g(x+\sqrt{1-t}Z),~ d_m=\frac{\sum_{i=1}^m\nabla g(x+\sqrt{1-t}Z_i)}{m},\\
&e=\Ebb_{Z} g(x+\sqrt{1-t}Z),~ e_m=\frac{\sum_{i=1}^m g(x+\sqrt{1-t}Z_i)}{m},\\
&d_m^{'}=\frac{\sum_{i=1}^m\nabla g(x+\sqrt{1-t}Z_i^{'})}{m},~e_m^{'}=\frac{\sum_{i=1}^m g(x+\sqrt{1-t}Z_i^{'})}{m}.
\end{align*}
Due to $d-d_m=\Ebb\left[d_m^{'}-d_m|\bZ\right]$,
then $\|d-d_m\|_2^2\leq \Ebb\left[\|d_m^{'}-d_m\|_2^2|\bZ\right]$.
Since $g$ and $\nabla g$ are Lipschitz continuous, there exists a finite and positive constant $\gamma$
such that for all $x,y \in \mathbb{R}^p$,
\begin{align}\label{L1}
|g(x)-g(y)|\leq \gamma \|x-y\|_2,
\end{align}
\begin{align}\label{L2}
\|\nabla g(x)- \nabla g(y)\|_2\leq \gamma \|x-y\|_2.
\end{align}
Then,
\begin{align}\label{mm1}
\Ebb\|d-d_m\|_2^2
&\leq \Ebb\left[\Ebb[\|d_m^{'}-d_m\|_2^2|\bZ]\right]=\Ebb\|d_m^{'}-d_m\|_2^2\notag\\
&=\frac{\Ebb_{Z_1,Z_1^{'}}\left\|\nabla g(x+\sqrt{1-t}Z_1)-\nabla g(x+\sqrt{1-t}Z_1^{'})\right\|_2^2}{m}\notag\\
&\leq \frac{(1-t)\gamma^2}{m}\Ebb_{Z_1,Z_1^{'}}\left\|Z_1-Z_1^{'}\right\|_2^2\notag\\
&\leq \frac{2p\gamma^2}{m},
\end{align}
where the second inequality holds by \eqref{L2}.
Similarly, we also have
\begin{align}\label{mm2}
\Ebb|e-e_m|^2
&\leq \Ebb|e_m^{'}-e_m|^2\notag\\
&=\frac{\Ebb_{Z_1,Z_1^{'}}
\left|g(x+\sqrt{1-t}Z_1)-g(x+\sqrt{1-t}Z_1^{'})\right|^2}{m}\notag\\
&\leq \frac{(1-t)\gamma^2}{m}\Ebb_{Z_1,Z_1^{'}}\left\|Z_1-Z_1^{'}\right\|^2_2\notag\\
&\leq \frac{2p\gamma^2}{m},
\end{align}
where the second inequality holds due to \eqref{L1}.
By \eqref{mm1} and \eqref{mm2}, it follows that
\begin{align}\label{mm3}
\underset{x\in \mathbb{R}^p,t\in [0,1]}{\sup}\Ebb\left\|d-d_m\right\|_2^2
\leq \frac{2p\gamma^{2}}{m},
\end{align}
\begin{align}\label{mm4}
\underset{x\in \mathbb{R}^p,t\in [0,1]}{\sup}\Ebb|e-e_m|^2\leq \frac{2p\gamma^2}{m}.
\end{align}
Since  $g$ has a lower bound greater than 0, there exists a finite and positive constant $\xi$ such that
\begin{align}\label{boundf}
 g \geq \xi>0.
\end{align}
Then, by\eqref{L1}, \eqref{L2} and \eqref{boundf}, through some simple calculation, it yields that
\begin{align}\label{mm5}
\|b(x,t)-\tilde{b}_m(x,t)\|_2
&=\left\|\frac{d}{e}-\frac{d_m}{e_m}\right\|_2\notag\\
&\leq \frac{\|d\|_2|e_m-e|+\|d-d_m\|_2|e|}{|ee_m|}\notag\\
&\leq \frac{\gamma|e_m-e|+\|d-d_m\|_2|e|}{\xi^2}.
\end{align}
Let $R>0$, then
\begin{align}\label{mm6}
\underset{\|x\|_2\leq R}\sup g(x)\leq \mathcal{O}\left(\exp(R^2/2)\right).
\end{align}
Therefore, 
by \eqref{mm3}-\eqref{mm4} and \eqref{mm6},
it can be concluded  that
\begin{align*}
\underset{\|x\|_2 \leq R,t\in [0,1]}{\sup} \Ebb\left[\|b(x,t)-\tilde{b}_m(x,t)\|_2^2\right]
\leq 
\mathcal{O}\left(\frac{p\exp(R^2)}{m}\right).
\end{align*}

Moreover, if $g$ has a finite upper bound, that is, there exists a finite and positive constant $\zeta$ such that $g\leq\zeta$, then similar to \eqref{mm5}, it follows that for all $x\in \mathbb{R}^p$ and $t\in[0,1]$,
\begin{align}\label{mm55}
\|b(x,t)-\tilde{b}_m(x,t)\|_2
&\leq  \frac{\gamma|e_m-e|+\zeta\|d-d_m\|_2}{\xi^2}.
\end{align}
By \eqref{mm3}-\eqref{mm4} and \eqref{mm55}, we have
\begin{align*}
\underset{x\in\mathbb{R}^p,t\in [0,1]}{\sup} \Ebb\left[\|b(x,t)-\tilde{b}_m(t,x)\|_2^2\right]
\leq \mathcal{O}\left(\frac{p}{m}\right).
\end{align*}
\end{proof}

\begin{lemma}\label{lemma5}
Assume that $g$ is $\gamma$-Lipschitz continuous and has the lower bound greater than 0,
that is, $g\geq \xi >0$ for a positive and finite constant $\xi$, then for $k=0,1,\ldots,K$,
\begin{align*}
\Ebb\|\widetilde{Y}_{t_{k}}\|^2_2
\leq  \frac{6\gamma^2}{\xi^2}+3p.
\end{align*}
\end{lemma}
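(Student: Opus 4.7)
The plan is to obtain a uniform-in-$k$ bound on $\Ebb\|\widetilde{Y}_{t_k}\|_2^2$ by combining an almost-sure sup-norm bound on the Monte Carlo drift $\tilde{b}_m$ with an unrolling of the Euler--Maruyama recursion. The key structural point is that the total ``time budget'' $sk$ never exceeds $sK = 1$, so the drift contribution telescopes into an $\mathcal{O}(1)$ quantity while the noise contribution is linear in $sk$.

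First I would establish the deterministic bound $\|\tilde{b}_m(x,t)\|_2 \leq \gamma/\xi$, uniformly in $x$, $t$, and in the Monte Carlo sample. Indeed, $g$ being $\gamma$-Lipschitz implies $\|\nabla g\|_2 \leq \gamma$ wherever $g$ is differentiable, so using the representation \eqref{drifte1},
\[
\|\tilde{b}_m(x,t)\|_2 = \left\|\frac{(1/m)\sum_{j=1}^m \nabla g(x+\sqrt{1-t}Z_j)}{(1/m)\sum_{j=1}^m g(x+\sqrt{1-t}Z_j)}\right\|_2 \leq \frac{\gamma}{\xi}.
\]
This is where the lower bound $g\geq \xi>0$ is essential.

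Next I would unroll the recursion. Since $\widetilde{Y}_{t_0}=0$, iterating the update in Algorithm \ref{alg:2} gives
\[
\widetilde{Y}_{t_k} = s\sum_{j=0}^{k-1}\tilde{b}_m(\widetilde{Y}_{t_j},t_j) \;+\; \sqrt{s}\sum_{j=1}^{k}\epsilon_j.
\]
Applying $\|a+b\|_2^2 \leq 2\|a\|_2^2 + 2\|b\|_2^2$ (with the slightly lossier constant $3$ to match the stated bound) splits the problem into a ``drift'' piece and a ``noise'' piece, which I would handle separately. For the drift piece, the triangle inequality together with the Step~1 bound yields the deterministic estimate
\[
\Big\|s\sum_{j=0}^{k-1}\tilde{b}_m(\widetilde{Y}_{t_j},t_j)\Big\|_2 \;\leq\; sk\cdot \frac{\gamma}{\xi} \;\leq\; \frac{\gamma}{\xi},
\]
using $sk\leq sK=1$; squaring gives $\gamma^2/\xi^2$. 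For the noise piece, because the $\epsilon_j$'s are i.i.d.\ $N(0,\mathbf{I}_p)$ and mutually orthogonal in $L_2$,
\[
\Ebb\Big\|\sqrt{s}\sum_{j=1}^{k}\epsilon_j\Big\|_2^2 \;=\; s\cdot k\cdot p \;\leq\; p.
\]
Adding the two contributions (with the appropriate constants) gives $\Ebb\|\widetilde{Y}_{t_k}\|_2^2 \leq 6\gamma^2/\xi^2 + 3p$, uniformly in $k\in\{0,1,\ldots,K\}$.

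There is no serious obstacle: the entire argument relies on the a priori drift bound and on $sk\leq 1$. The only subtlety worth flagging is that one must use the $\nabla g$-form of $\tilde{b}_m$ in \eqref{drifte1} rather than the Stein form \eqref{drifte2}, because the latter carries a $\sqrt{1-t_k}$ in the denominator that degenerates as $t_k\to 1$ and would spoil a clean sup-norm bound. With the $\nabla g$-form the Lipschitz assumption on $g$ provides exactly the right control, and the rest is standard inequalities.
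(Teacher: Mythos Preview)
Your argument is correct and, if anything, cleaner than the paper's. You both rely on the same pointwise bound $\|\tilde{b}_m(x,t)\|_2\le\gamma/\xi$, but the routes diverge thereafter. The paper proceeds recursively: from the one-step update it derives
\[
\Ebb\|\widetilde{Y}_{t_{k+1}}\|_2^2 \;\le\; (1+s)\,\Ebb\|\widetilde{Y}_{t_k}\|_2^2 + \frac{(s+s^2)\gamma^2}{\xi^2} + sp,
\]
then sums the resulting geometric series and uses $(1+s)^K\le e$ to arrive at the stated constants. You instead unroll the recursion in one shot, bound the cumulative drift by $sk\cdot\gamma/\xi\le\gamma/\xi$ deterministically, and exploit $L_2$-orthogonality of the Gaussian increments for the noise. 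Your approach is shorter and in fact yields the sharper bound $2\gamma^2/\xi^2+2p$; the paper's Gronwall-style recursion is slightly more flexible in spirit (it would still work if the drift were only bounded in an averaged, step-by-step sense), but that flexibility is not needed here. Your remark about using the $\nabla g$-form \eqref{drifte1} rather than the Stein form \eqref{drifte2} is well taken: the paper's inequality \eqref{eq2} implicitly relies on the same representation for the same reason.
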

\begin{proof}
Define
$\Theta_{k,t}=\widetilde{Y}_{t_k}+(t-t_k)\tilde{b}_m(\widetilde{Y}_{t_k},t_k)$
and $\widetilde{Y}_{t}=\Theta_{k,t}+B_t-B_{t_k}$,
where $t_k \leq t \leq t_{k+1}$ with $k=0,1,\ldots,K-1$.
Since $g$ is $\gamma$-Lipschitz continuous and $g\geq \xi >0$, for all $x \in \mathbb{R}^p$ and $t\in[0,1]$,
\begin{align}\label{eq2}
\|b(x,t)\|_2^2\leq \frac{\gamma^2}{\xi^2},~~ \|\tilde{b}_m(x,t)\|_2^2 \leq \frac{\gamma^2}{\xi^2}.
\end{align}
By \eqref{eq2},  we have
\begin{align*}
\|\Theta_{k,t}\|^2_2&=\|\widetilde{Y}_{t_k}\|^2_2+(t-t_k)^2\|\tilde{b}_m(\widetilde{Y}_{t_k},t_k)\|^2_2
+2(t-t_k)\widetilde{Y}_{t_k}^{\top}\tilde{b}_m(\widetilde{Y}_{t_k},t_k)\\
&\leq (1+s)\|\widetilde{Y}_{t_k}\|^2_2+\frac{(s+s^2)\gamma^2}{\xi^2}.
\end{align*}
Furthermore, we have
\begin{align*}
\Ebb\left[\|\widetilde{Y}_{t}\|^2_2|\widetilde{Y}_{t_k}\right]&=\Ebb\left[\|\Theta_{k,t}\|^2_2|\widetilde{Y}_{t_k}\right]+(t-t_k)p\\
&\leq(1+s)\Ebb\|\widetilde{Y}_{t_k}\|^2_2+\frac{(s+s^2)\gamma^2}{\xi^2}+sp.
\end{align*}
Therefore,
\begin{align*}
\Ebb\|\widetilde{Y}_{t_{k+1}}\|^2_2
\leq(1+s)\Ebb\|\widetilde{Y}_{t_k}\|^2_2+\frac{(s+s^2)\gamma^2}{\xi^2}+sp.
\end{align*}
Since $\widetilde{Y}_{t_0}=0$, by induction, we have
\begin{align*}
\Ebb\|\widetilde{Y}_{t_{k+1}}\|^2_2
\leq \frac{6\gamma^2}{\xi^2}+3p.
\end{align*}
\end{proof}
\begin{lemma}\label{lemma6}
If $g$  and $\nabla g$ are Lipschitz continuous, and $g$ has the lower bound greater than 0, then for $k=0,1,\ldots,K$ and $t\in [0,1]$,
\begin{align*}
\Ebb\left\|b(\widetilde{Y}_{t_k},t_k)-\tilde{b}_m(\widetilde{Y}_{t_k},t_k)\right\|_2^2\leq
\mathcal{O}\left(\frac{p}{\log(m)}\right),
\end{align*}
\begin{align*}
\Ebb\left\|b(X_{t},t)-\tilde{b}_m(X_{t},t)\right\|_2^2\leq
\mathcal{O}\left(\frac{p}{\log(m)}\right).
\end{align*}
Moreover, if $g$  has a finite upper bound, then
\begin{align*}
\Ebb\left\|b(\widetilde{Y}_{t_k},t_k)-\tilde{b}_m(\widetilde{Y}_{t_k},t_k)\right\|_2^2\leq
\mathcal{O}\left(\frac{p}{m}\right),
\end{align*}
\begin{align*}
\Ebb\left\|b(X_{t},t)-\tilde{b}_m(X_{t},t)\right\|_2^2\leq
\mathcal{O}\left(\frac{p}{m}\right).
\end{align*}
\end{lemma}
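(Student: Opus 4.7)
The plan is to reduce both inequalities to Lemma \ref{lemma4} via a truncation argument: inside a ball of radius $R$ we use the pointwise bound from Lemma \ref{lemma4}, and outside the ball we exploit the fact that under the Lipschitz and lower-bound assumptions on $g$ the drift and its Monte Carlo estimate are uniformly bounded, so only the tail probability matters. Concretely, since $g$ is $\gamma$-Lipschitz and $g\geq\xi>0$, the same computation as in \eqref{eq2} gives $\|b(x,t)\|_2\leq \gamma/\xi$ and $\|\tilde b_m(x,t)\|_2\leq \gamma/\xi$ for every $x,t$, hence
\begin{equation*}
\|b(x,t)-\tilde b_m(x,t)\|_2^2 \leq \frac{4\gamma^2}{\xi^2}.
\end{equation*}

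For the first bound I would write, for any $R>0$,
\begin{equation*}
\Ebb\|b(\widetilde Y_{t_k},t_k)-\tilde b_m(\widetilde Y_{t_k},t_k)\|_2^2
= \Ebb\!\left[\|b-\tilde b_m\|_2^2\,\mathbf{1}_{\{\|\widetilde Y_{t_k}\|_2\leq R\}}\right]
+ \Ebb\!\left[\|b-\tilde b_m\|_2^2\,\mathbf{1}_{\{\|\widetilde Y_{t_k}\|_2> R\}}\right].
\end{equation*}
Because the Monte Carlo sample $\{Z_j\}$ is independent of $\widetilde Y_{t_k}$, conditioning on $\widetilde Y_{t_k}$ and applying Lemma \ref{lemma4} bounds the first term by $\mathcal{O}(p\exp(R^2)/m)$. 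For the second term, the uniform bound $4\gamma^2/\xi^2$ combined with Markov's inequality and the second-moment estimate $\Ebb\|\widetilde Y_{t_k}\|_2^2 = \mathcal{O}(p)$ from Lemma \ref{lemma5} yields a tail contribution $\mathcal{O}(p/R^2)$. Choosing $R^2 = \tfrac{1}{2}\log m$ makes the Monte Carlo piece $\mathcal{O}(p/\sqrt{m})$, which is dominated by the tail piece $\mathcal{O}(p/\log m)$, giving the stated bound. The same argument works verbatim for $X_t$ in place of $\widetilde Y_{t_k}$, using Lemma \ref{lemma1o} to control $\Ebb\|X_t\|_2^2 = \mathcal{O}(p)$ uniformly in $t\in[0,1]$.

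For the improved rates under the additional assumption that $g$ is bounded above, no truncation is needed: the second part of Lemma \ref{lemma4} supplies a bound $\mathcal{O}(p/m)$ that holds uniformly over $x\in\mathbb{R}^p$ and $t\in[0,1]$, so taking expectation over $\widetilde Y_{t_k}$ (respectively $X_t$) preserves the rate by independence of $\{Z_j\}$ from the state variable.

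The only subtle point is the balance in the truncation step: one cannot do much better than $R^2 \asymp \log m$ because of the $\exp(R^2)$ factor inherited from the Gaussian-tail upper bound $\sup_{\|x\|\leq R} g(x) \lesssim \exp(R^2/2)$ in \eqref{mm6}. This exponential dependence is exactly why the unbounded-$g$ case incurs the logarithmic loss $p/\log m$ instead of the sharper $p/m$ available when $g$ is bounded, and it is the reason this lemma is stated in two parts rather than one.
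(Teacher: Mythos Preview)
Your proposal is correct and follows essentially the same route as the paper: a truncation at radius $R$, Lemma~\ref{lemma4} inside the ball, the uniform bound from \eqref{eq2} plus Markov's inequality and the second-moment estimates (Lemma~\ref{lemma5} for $\widetilde Y_{t_k}$, Lemma~\ref{lemma1o} for $X_t$) outside, and then the choice $R^2=\tfrac12\log m$; the bounded-$g$ case is handled directly via the second part of Lemma~\ref{lemma4}. Your explicit mention of the independence between the Monte Carlo sample $\{Z_j\}$ and the state variable, and your observation that the inside-ball contribution is actually $\mathcal{O}(p/\sqrt m)$ and hence dominated by the tail term, are minor refinements but do not change the argument.
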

\begin{proof}
Let $R>0$, then
\begin{equation}\label{erro1}
\begin{split}
\Ebb\left\|b(\widetilde{Y}_{t_k},t_k)-\tilde{b}_m(\widetilde{Y}_{t_k},t_k)\right\|_2^2
&=\Ebb_{\widetilde{Y}_{t_k}}\Ebb_Z\left[\left\|b(\widetilde{Y}_{t_k},t_k)-\tilde{b}_m(\widetilde{Y}_{t_k},t_k)\right\|_2^21(\|\widetilde{Y}_{t_k}\|_2\leq R)\right]\\
&~~~+\Ebb_{\widetilde{Y}_{t_k}}\Ebb_Z\left[\left\|b(\widetilde{Y}_{t_k},t_k)-\tilde{b}_m(\widetilde{Y}_{t_k},t_k)\right\|_2^21(\|\widetilde{Y}_{t_k}\|_2> R)\right].
\end{split}
\end{equation}
Next, we need to bound the two terms of \eqref{erro1}.
First, by Lemma \ref{lemma4}, we have
$$
\Ebb_{\widetilde{Y}_{t_k}}\Ebb_Z\left[\left\|b(\widetilde{Y}_{t_k},t_k)-\tilde{b}_m(\widetilde{Y}_{t_k},t_k)\right\|_2^21(\|\widetilde{Y}_{t_k}\|_2\leq R)\right]\leq
\mathcal{O}\left(\frac{p\exp(R^2)}{m}\right).
$$
Secondly, combining \eqref{eq2} and Lemma \eqref{lemma5} with Markov inequality,
$$
\Ebb_{\widetilde{Y}_{t_k}}\Ebb_Z\left[\left\|b(\widetilde{Y}_{t_k},t_k)-\tilde{b}_m(\widetilde{Y}_{t_k},t_k)\right\|_2^21(\|\widetilde{Y}_{t_k}\|_2> R)\right]
\leq \mathcal{O}\left(p/R^2\right).
$$
Hence
\begin{align}\label{err1}
\Ebb\left\|b(\widetilde{Y}_{t_k},t_k)-\tilde{b}_m(\widetilde{Y}_{t_k},t_k)\right\|_2^2\leq
\mathcal{O}\left(\frac{p\exp(R^2)}{m}\right)
+\mathcal{O}\left(p/R^2\right).
\end{align}
Similar to \eqref{err1}, by Lemma \ref{lemma1o} and Lemma \ref{lemma4}, we have
\begin{align}\label{berr1}
\Ebb\left\|b(X_{t},t)-\tilde{b}_m(X_{t},t)\right\|_2^2\leq
\mathcal{O}\left(\frac{p\exp(R^2)}{m}\right)
+\mathcal{O}\left(p/R^2\right).
\end{align}
Set $R=\left(\frac{\log(m)}{2}\right)^{1/2}$ in \eqref{err1} and \eqref{berr1}, then it yields that
\begin{align*}
\Ebb\left\|b(\widetilde{Y}_{t_k},t_k)-\tilde{b}_m(\widetilde{Y}_{t_k},t_k)\right\|_2^2\leq
\mathcal{O}\left(p/\log(m)\right),
\end{align*}
\begin{align*}
\Ebb\left\|b(X_{t},t)-\tilde{b}_m(X_{t},t)\right\|_2^2\leq
\mathcal{O}\left(p/\log(m)\right).
\end{align*}

Moreover, if $g$ has a finite upper bound, by Lemma \ref{lemma4}, we can similarly get
$$
\Ebb\left\|b(\widetilde{Y}_{t_k},t_k)-\tilde{b}_m(\widetilde{Y}_{t_k},t_k)\right\|_2^2=
\Ebb_{\widetilde{Y}_{t_k}}\Ebb_{Z}\left[\left\|b(\widetilde{Y}_{t_k},t_k)-\tilde{b}_m(\widetilde{Y}_{t_k},t_k)\right\|_2^2\right]
\leq \mathcal{O}\left(p/m\right),
$$
$$
\Ebb\left\|b(X_{t},t)-\tilde{b}_m(X_{t},t)\right\|_2^2\leq \mathcal{O}\left(p/m\right).
$$
This completes the proof.
\end{proof}
\subsection{Proof of Theorem \ref{th2}}
\begin{proof}
Let $\Delta_k=X_{t_k}-\widetilde{Y}_{t_k}$.
Then,
\begin{align*}
\Delta_{k+1}&=\Delta_{k}+(X_{t_{k+1}}-X_{t_k})-(\widetilde{Y}_{t_{k+1}}-\widetilde{Y}_{t_k})\\
&=\Delta_{k}-s\left[\tilde{b}_m(\widetilde{Y}_{t_k},t_k)-\tilde{b}_m(\widetilde{Y}_{t_k}+
\Delta_{k},t_k)\right] +\int_{t_k}^{t_{k+1}}\left[b(X_t,t)-\tilde{b}_m(X_{t_k},t_k)\right]\mathrm{d}t.
\end{align*}
Combining Lemma \ref{lemma3} with the triangle inequality,
we have
\begin{align*}
&\left\|\Delta_{k}-s\left[\tilde{b}_m(\widetilde{Y}_{t_k},t_k)-\tilde{b}_m(\widetilde{Y}_{t_k}
+\Delta_{k},t_k)\right]\right\|_{L_2}\\
&\leq
\left\|\Delta_{k}-s\left[b(\widetilde{Y}_{t_k},t_k)-b(\widetilde{Y}_{t_k}+
\Delta_{k},t_k)\right]\right\|_{L_2}
+s\left\|b(\widetilde{Y}_{t_k},t_k)-\tilde{b}_m(\widetilde{Y}_{t_k},t_k)\right\|_{L_2}\\
&~~~+s\left\|\tilde{b}_m(\widetilde{Y}_{t_k}+\Delta_{k},t_k)-b(\widetilde{Y}_{t_k}+\Delta_{k},t_k)\right\|_{L_2}\\
&\leq \rho \|\Delta_{k}\|_{L_2}+s\left\|b(\widetilde{Y}_{t_k},t_k)-\tilde{b}_m(\widetilde{Y}_{t_k},t_k)\right\|_{L_2}
+s\left\|b(\widetilde{Y}_{t_k}+\Delta_{k},t_k)-\tilde{b}_m(\widetilde{Y}_{t_k}+\Delta_{k},t_k)\right\|_{L_2}.
\end{align*}
Therefore, by Lemma \ref{lemma6},
\begin{align}\label{eqq1}
\left\|\Delta_{k}-s\left[\tilde{b}_m(\widetilde{Y}_{t_k},t_k)-\tilde{b}_m(\widetilde{Y}_{t_k}+\Delta_{k},t_k)\right]\right\|_{L_2}
\leq \rho \|\Delta_{k}\|_{L_2}+s\cdot
\mathcal{O}\left(\sqrt{\frac{p}{\log(m)}}\right).
\end{align}
Moreover,
\begin{align*}
&\left\|\int_{t_k}^{t_{k+1}}\left[b(X_t,t)-\tilde{b}_m(X_{t_k},t_k)\right]\mathrm{d}t\right\|_2^2\\
&\leq s\int_{t_k}^{t_{k+1}}\left\|b(X_t,t)-\tilde{b}_m(X_{t_k},t_k)\right\|_2^2\mathrm{d}t\\
&\leq 2s\int_{t_k}^{t_{k+1}}\left\|b(X_t,t)-b(X_{t_k},t_k)\right\|_2^2\mathrm{d}t
+2s^2\left\|b(X_{t_k},t_k)-\tilde{b}_m(X_{t_k},t_k)\right\|_2^2\\
&\leq 4sC_1^2\int_{t_k}^{t_{k+1}}\left[\|X_t-X_{t_k}\|^2+|t-t_k|\right]\mathrm{d}t
+2s^2\left\|b(X_{t_k},t_k)-\tilde{b}_m(X_{t_k},t_k)\right\|_2^2\\
&\leq 4sC_1^2\int_{t_k}^{t_{k+1}}\|X_t-X_{t_k}\|^2\mathrm{d}t+4C_1^2 s^3
+2s^2\left\|b(X_{t_k},t_k)-\tilde{b}_m(X_{t_k},t_k)\right\|_2^2,
\end{align*}
where the third inequality holds by \eqref{cond3}.
Furthermore, by Lemmas \ref{lemma2o} and \ref{lemma6}, we have
\begin{align*}
&\Ebb\left[\left\|\int_{t_k}^{t_{k+1}}(b(X_t,t)-\tilde{b}_m(X_{t_k},t_k))\mathrm{d}t\right\|_2^2\right]\\
&\leq
4sC_1^2\int_{t_k}^{t_{k+1}}\Ebb\|X_t-X_{t_k}\|^2\mathrm{d}t+4C_1^2 s^3
+2s^2\Ebb\left[\|b(X_{t_k},t_k)-\tilde{b}_m(X_{t_k},t_k)\|_2^2\right]\\
&\leq
4s^2C_1^2[4C_0\exp(2C_0)(C_0+p)s^2+2C_0s^2+2ps]
+4C_1^2s^3+2s^2\Ebb\left[\|b(X_{t_k},t_k)-\tilde{b}_m(X_{t_k},t_k)\|_2^2\right]\\
&\leq
4s^2C_1^2[4C_0\exp(2C_0)(C_0+p)s^2+2C_0s^2+2ps]
+4C_1^2s^3+s^2\mathcal{O}\left(\frac{p}{\log(m)}\right).
\end{align*}
Therefore,
\begin{align}\label{eqq2}
\left\|\int_{t_k}^{t_{k+1}}(b(X_t,t)-\tilde{b}_m(X_{t_k},t_k))\mathrm{d}t\right\|_{L_2}
\leq
\mathcal{O}(\sqrt{p}s^{3/2})+\mathcal{O}\left(s\sqrt{\frac{p}{\log(m)}}\right).
\end{align}
By \eqref{eqq1} and \eqref{eqq2}, we have
\begin{align*}
\|\Delta_{k+1}\|_{L_2}\leq \rho \|\Delta_{k}\|_{L_2}+
\mathcal{O}(\sqrt{p}s^{3/2})+\mathcal{O}\left(s\sqrt{\frac{p}{\log(m)}}\right).
\end{align*}
From the definition of $\rho$ in Lemma \ref{lemma3}, we can get
\begin{align*}
\|\Delta_{k+1}\|_{L_2}\leq \rho^{k+1}\|\Delta_0\|_{L_2}+\mathcal{O}(\sqrt{sp})+\mathcal{O}\left(\sqrt{\frac{p}{\log(m)}}\right).
\end{align*}
Furthermore, since we set $X_{t_0}=\widetilde{Y}_{t_0}=0$, we have
\begin{align*}
W_2(\mbox{Law}(\widetilde{Y}_{t_K}),\mu)&\leq \rho^{K}\|\Delta_0\|_{L_2}+\mathcal{O}(\sqrt{sp})+\mathcal{O}\left(\sqrt{\frac{p}{\log(m)}}\right)\\
&\leq \mathcal{O}(\sqrt{sp})+\mathcal{O}\left(\sqrt{\frac{p}{\log(m)}}\right).
\end{align*}
\end{proof}
\subsection{Proof of Theorem \ref{th3}}
\begin{proof}
This proof is same as that of Theorem \ref{th2}.
Similar to \eqref{eqq1}, by Lemma \ref{lemma6}, we have
\begin{align}\label{th2eq1}
&\left\|\Delta_{k}-s\left[\tilde{b}_m(\widetilde{Y}_{t_k},t_k)-\tilde{b}_m(\widetilde{Y}_{t_k}
+\Delta_{k},t_k)\right]\right\|_{L_2} \notag \\
&\leq \rho \|\Delta_{k}\|_{L_2}
+s\left\|b(\widetilde{Y}_{t_k},t_k)-\tilde{b}_m(\widetilde{Y}_{t_k},t_k)\right\|_{L_2}
+s\left\|b(\widetilde{Y}_{t_k}+\Delta_{k},t_k)-\tilde{b}_m(\widetilde{Y}_{t_k}+\Delta_{k},t_k)\right\|_{L_2} \notag\\
&\leq \rho \|\Delta_{k}\|_{L_2}+s\cdot
\mathcal{O}\left(\sqrt{\frac{p}{m}}\right).
\end{align}
Then we also have
\begin{align}\label{th2eq2}
\left\|\int_{t_k}^{t_{k+1}}(b(X_t,t)-\tilde{b}_m(X_{t_k},t_k))\mathrm{d}t\right\|_{L_2}
\leq
\mathcal{O}(\sqrt{p}s^{3/2})+\mathcal{O}\left(s\sqrt{\frac{p}{m}}\right).
\end{align}
By \eqref{th2eq1}  and \eqref{th2eq2},
\begin{align*}
\|\Delta_{k+1}\|_{L_2}\leq \rho \|\Delta_{k}\|_{L_2}+
\mathcal{O}(\sqrt{p}s^{3/2})+\mathcal{O}\left(s\sqrt{\frac{p}{m}}\right).
\end{align*}
Therefore, by induction and $X_{t_0}=\widetilde{Y}_{t_0}=0$, it follows that
\begin{align*}
W_2(\mbox{Law}(\widetilde{Y}_{t_K}),\mu)\leq \mathcal{O}(\sqrt{ps})+\mathcal{O}\left(\sqrt{\frac{p}{m}}\right).
\end{align*}
\end{proof}
\subsection{Proof of Theorem \ref{th4}}
\begin{proof}
By triangle inequality, we have  $$W_2(Law(Y_{t_K}(\varepsilon)),\mu)\leq
W_2(\mu,\mu_{\varepsilon})+W_2(Law(Y_{t_K}(\varepsilon)),\mu_{\varepsilon})
$$
and
$$W_2(Law(\widetilde{Y}_{t_K}(\varepsilon)),\mu)\leq
W_2(\mu,\mu_{\varepsilon})+W_2(Law(\widetilde{Y}_{t_K}(\varepsilon)),\mu_{\varepsilon}).
$$
First, we show that $W_2(\mu,\mu_{\varepsilon})$ will converge to zero as $\varepsilon$ goes to zero.
Let $Y\sim \mu$ and $Z\sim N(0,\bI_p)$,  and let $\tau$ be a  Bernoulli random variable with  $P(\tau=1)=1-\varepsilon$ and
$P(\tau=0)=\varepsilon$. Assume $Y$, $Z$ and $\tau$ are mutually independent.
Then $(Y,(1-\tau)Z+\tau Y)$ is a coupling of $(\mu,\mu_{\varepsilon})$. Denote the joint distribution
of $(Y,(1-\tau)Z+\tau Y)$ by $\pi$. Then, we have
\begin{align*}
\int_{\mathbb{R}^p\times\mathbb{R}^p}\|x-y\|_2^2d \pi&=\Ebb\left\|Y-((1-\tau)Z+\tau Y)\right\|_2^2\\
&=\Ebb\left[\Ebb\left[\|Y-((1-\tau)Z+\tau Y)\|_2^2|\tau\right]\right]\\
&=\Ebb\left[\Ebb\left[\|Y-((1-\tau)Z+\tau Y)\|_2^2|\tau=1\right]\right]P(\tau=1)\\
&~~~+\Ebb\left[\Ebb\left[\|Y-((1-\tau)Z+\tau Y)\|_2^2|\tau=0\right]\right]P(\tau=0)\\
&=\Ebb[\|Y-Z\|_2^2|\tau=0]P(\tau=0)\\
&=\varepsilon \Ebb\|Y-Z\|_2^2\\
&=\mathcal{O}(p\varepsilon).
\end{align*}
Therefore, it follows that
\begin{align}\label{W2muo}
\underset{\varepsilon\rightarrow 0}{\lim}~W_2(\mu,\mu_{\varepsilon})=0.
\end{align}
Similar to the proof of Theorems \ref{th1}-\ref{th2},  we have
\begin{align}\label{ers1}
\underset{K \rightarrow \infty}{\lim}~ W_2(Law(Y_{t_K}(\varepsilon)),\mu_{\varepsilon})=0,
\end{align}
\begin{align}\label{ers2}
\underset{m,K \rightarrow \infty}{\lim}~ W_2(Law(\widetilde{Y}_{t_K}(\varepsilon)),\mu_{\varepsilon})=0.
\end{align}
Combining \eqref{W2muo} with \eqref{ers1} and \eqref{ers2}, we have
\begin{align*}
&\underset{K \rightarrow \infty,\varepsilon \rightarrow 0}{\lim} W_2(\mbox{Law}(Y_{t_K}(\varepsilon)),\mu)=0,\\
&\underset{m,K \rightarrow \infty, \varepsilon \rightarrow 0}{\lim} W_2(\mbox{Law}(\widetilde{Y}_{t_K}(\varepsilon)),\mu)=0.
\end{align*}
This completes the proof.
\end{proof}

\end{appendix}
{\singlespace
\bibliography{schrodinger_bib}   

\begin{thebibliography}{}

\bibitem[Ahn et~al., 2012]{ahn2012bayesian}
Ahn, S., Korattikara, A., and Welling, M. (2012).
\newblock Bayesian posterior sampling via stochastic gradient fisher scoring.
\newblock In {\em 29th International Conference on Machine Learning, ICML
  2012}, pages 1591--1598.

\bibitem[Bakry et~al., 2008]{bakry2008rate}
Bakry, D., Cattiaux, P., and Guillin, A. (2008).
\newblock Rate of convergence for ergodic continuous {Markov} processes:
  {Lyapunov} versus {Poincar{\'e}}.
\newblock {\em Journal of Functional Analysis}, 254(3):727--759.

\bibitem[Barkhagen et~al., 2021]{barkhagen2018stochastic}
Barkhagen, M., Chau, N.~H., Moulines, {\'E}., R{\'a}sonyi, M., Sabanis, S., and
  Zhang, Y. (2021).
\newblock On stochastic gradient {Langevin} dynamics with dependent data
  streams in the logconcave case.
\newblock {\em Bernoulli}, 27(1):1--33.

\bibitem[Bernton et~al., 2019]{bernton2019schrodinger}
Bernton, E., Heng, J., Doucet, A., and Jacob, P.~E. (2019).
\newblock {Schr\"odinger} bridge samplers.

\bibitem[Berthonnaud et~al., 2005]{berthonnaud2005analysis}
Berthonnaud, E., Dimnet, J., Roussouly, P., and Labelle, H. (2005).
\newblock Analysis of the sagittal balance of the spine and pelvis using shape
  and orientation parameters.
\newblock {\em Clinical Spine Surgery}, 18(1):40--47.

\bibitem[Betancourt, 2017]{betancourt2017conceptual}
Betancourt, M. (2017).
\newblock A conceptual introduction to {Hamiltonian Monte Carlo}.
\newblock {\em arXiv preprint arXiv:1701.02434}.

\bibitem[Bierkens et~al., 2019]{bierkens2019zig}
Bierkens, J., Fearnhead, P., Roberts, G., et~al. (2019).
\newblock The zig-zag process and super-efficient sampling for {Bayesian}
  analysis of big data.
\newblock {\em The Annals of Statistics}, 47(3):1288--1320.

\bibitem[Bou-Rabee et~al., 2020]{bou2020coupling}
Bou-Rabee, N., Eberle, A., Zimmer, R., et~al. (2020).
\newblock Coupling and convergence for {Hamiltonian Monte Carlo}.
\newblock {\em Annals of Applied Probability}, 30(3):1209--1250.

\bibitem[Bouchard-C{\^o}t{\'e} et~al., 2018]{bouchard2018bouncy}
Bouchard-C{\^o}t{\'e}, A., Vollmer, S.~J., and Doucet, A. (2018).
\newblock The bouncy particle sampler: A nonreversible rejection-free markov
  chain monte carlo method.
\newblock {\em Journal of the American Statistical Association},
  113(522):855--867.

\bibitem[Brooks et~al., 2011]{brooks2011handbook}
Brooks, S., Gelman, A., Jones, G., and Meng, X.-L. (2011).
\newblock {\em Handbook of markov chain monte carlo}.
\newblock CRC press.

\bibitem[Cattiaux and Guillin, 2009]{cattiaux2009trends}
Cattiaux, P. and Guillin, A. (2009).
\newblock Trends to equilibrium in total variation distance.
\newblock In {\em Annales de l'IHP Probabilit{\'e}s et statistiques},
  volume~45, pages 117--145.

\bibitem[Changye and Robert, 2020]{changye2020markov}
Changye, W. and Robert, C.~P. (2020).
\newblock Markov chain monte carlo algorithms for bayesian computation, a
  survey and some generalisation.
\newblock In {\em Case Studies in Applied Bayesian Data Science}, pages
  89--119. Springer.

\bibitem[Chau et~al., 2019]{chau2019stochastic}
Chau, N.~H., Moulines, {\'E}., R{\'a}sonyi, M., Sabanis, S., and Zhang, Y.
  (2019).
\newblock On stochastic gradient {Langevin} dynamics with dependent data
  streams: the fully non-convex case.
\newblock {\em arXiv preprint arXiv:1905.13142}.

\bibitem[Chen et~al., 2014]{chen2014stochastic}
Chen, T., Fox, E., and Guestrin, C. (2014).
\newblock Stochastic gradient {Hamiltonian} {Monte Carlo}.
\newblock In {\em International conference on machine learning}, pages
  1683--1691. PMLR.

\bibitem[Chen et~al., 2020]{chen2020stochastic}
Chen, Y., Georgiou, T.~T., and Pavon, M. (2020).
\newblock Stochastic control liasons: {Richard Sinkhorn} meets {Gaspard Monge}
  on a {Schr{\"o}dinger} bridge.
\newblock {\em arXiv preprint arXiv:2005.10963}.

\bibitem[Cheng and Bartlett, 2018]{cheng2018convergence}
Cheng, X. and Bartlett, P. (2018).
\newblock Convergence of {Langevin MCMC} in {KL-divergence}.
\newblock {\em Proceedings of Machine Learning Research, Volume 83: Algorithmic
  Learning Theory}, pages 186--211.

\bibitem[Cheng et~al., 2018]{cheng2018sharp}
Cheng, X., Chatterji, N.~S., Abbasi-Yadkori, Y., Bartlett, P.~L., and Jordan,
  M.~I. (2018).
\newblock Sharp convergence rates for {L}angevin dynamics in the nonconvex
  setting.
\newblock {\em arXiv preprint arXiv:1805.01648}.

\bibitem[Chib and Greenberg, 1995]{chib1995understanding}
Chib, S. and Greenberg, E. (1995).
\newblock Understanding the {Metropolis-Hastings} algorithm.
\newblock {\em The american statistician}, 49(4):327--335.

\bibitem[Clerx et~al., 2019]{Clerx2019Pints}
Clerx, M., Robinson, M., Lambert, B., Lei, C.~L., Ghosh, S., Mirams, G.~R., and
  Gavaghan, D.~J. (2019).
\newblock Probabilistic inference on noisy time series (pints).
\newblock {\em Journal of Open Research Software}, 7(1).

\bibitem[Dai~Pra, 1991]{dai1991stochastic}
Dai~Pra, P. (1991).
\newblock A stochastic control approach to reciprocal diffusion processes.
\newblock {\em Applied mathematics and Optimization}, 23(1):313--329.

\bibitem[Dalalyan, 2017a]{dalalyan2017further}
Dalalyan, A. (2017a).
\newblock Further and stronger analogy between sampling and optimization:
  {Langevin Monte Carlo} and gradient descent.
\newblock In {\em Conference on Learning Theory}, pages 678--689. PMLR.

\bibitem[Dalalyan, 2017b]{dalalyan2017theoretical}
Dalalyan, A.~S. (2017b).
\newblock Theoretical guarantees for approximate sampling from smooth and
  log-concave densities.
\newblock {\em Journal of the Royal Statistical Society: Series B (Statistical
  Methodology)}, 79(3):651--676.

\bibitem[Dalalyan and Karagulyan, 2019]{dalalyan2019user-friendly}
Dalalyan, A.~S. and Karagulyan, A.~G. (2019).
\newblock User-friendly guarantees for the {Langevin Monte Carlo} with
  inaccurate gradient.
\newblock {\em Stochastic Processes and their Applications},
  129(12):5278--5311.

\bibitem[De~Bortoli et~al., 2021]{de2021diffusion}
De~Bortoli, V., Thornton, J., Heng, J., and Doucet, A. (2021).
\newblock Diffusion schrodinger bridge with applications to score-based
  generative modeling.

\bibitem[Dua and Graff, 2017]{Dua:2019}
Dua, D. and Graff, C. (2017).
\newblock {UCI} machine learning repository.

\bibitem[Duane et~al., 1987]{duane1987hybrid}
Duane, S., Kennedy, A.~D., Pendleton, B.~J., and Roweth, D. (1987).
\newblock Hybrid {Monte Carlo}.
\newblock {\em Physics letters B}, 195(2):216--222.

\bibitem[Dunson and Johndrow, 2020]{dunson2020hastings}
Dunson, D.~B. and Johndrow, J. (2020).
\newblock The {Hastings} algorithm at fifty.
\newblock {\em Biometrika}, 107(1):1--23.

\bibitem[Durmus and Moulines, 2016]{durmus2016sampling}
Durmus, A. and Moulines, E. (2016).
\newblock Sampling from a strongly log-concave distribution with the unadjusted
  {Langevin} algorithm.
\newblock {\em arXiv: Statistics Theory}.

\bibitem[Durmus et~al., 2017]{durmus2017nonasymptotic}
Durmus, A., Moulines, E., et~al. (2017).
\newblock Nonasymptotic convergence analysis for the unadjusted {Langevin}
  algorithm.
\newblock {\em The Annals of Applied Probability}, 27(3):1551--1587.

\bibitem[Durmus et~al., 2019]{durmus2016high-dimensional}
Durmus, A., Moulines, E., et~al. (2019).
\newblock High-dimensional {Bayesian} inference via the unadjusted {Langevin}
  algorithm.
\newblock {\em Bernoulli}, 25(4A):2854--2882.

\bibitem[E et~al., 2019]{weinan2019applied}
E, W., Li, T., and Vanden-Eijnden, E. (2019).
\newblock {\em Applied stochastic analysis}, volume 199.
\newblock American Mathematical Soc.

\bibitem[Eldan et~al., 2020]{eldan2020}
Eldan, R., Lehec, J., Shenfeld, Y., et~al. (2020).
\newblock Stability of the logarithmic {Sobolev} inequality via the
  {F{\"o}llmer} process.
\newblock In {\em Annales de l'Institut Henri Poincar{\'e}, Probabilit{\'e}s et
  Statistiques}, volume~56, pages 2253--2269. Institut Henri Poincar{\'e}.

\bibitem[F{\"o}llmer, 1985]{follmer1985}
F{\"o}llmer, H. (1985).
\newblock An entropy approach to the time reversal of diffusion processes.
\newblock In {\em Stochastic Differential Systems Filtering and Control}, pages
  156--163. Springer.

\bibitem[F{\"o}llmer, 1986]{follmer1986}
F{\"o}llmer, H. (1986).
\newblock Time reversal on {Wiener} space.
\newblock In {\em Stochastic processes—mathematics and physics}, pages
  119--129. Springer.

\bibitem[F{\"o}llmer, 1988]{follmer1988}
F{\"o}llmer, H. (1988).
\newblock Random fields and diffusion processes.
\newblock In {\em {\'E}cole d'{\'E}t{\'e} de Probabilit{\'e}s de Saint-Flour
  XV--XVII, 1985--87}, pages 101--203. Springer.

\bibitem[Forgy, 1965]{forgy1965cluster}
Forgy, E.~W. (1965).
\newblock Cluster analysis of multivariate data: efficiency versus
  interpretability of classifications.
\newblock {\em Biometrics}, 21:768--769.

\bibitem[Gelfand and Smith, 1990]{gelfand1990}
Gelfand, A.~E. and Smith, A. F.~M. (1990).
\newblock Sampling-based approaches to calculating marginal densities.
\newblock {\em Journal of the American Statistical Association},
  85(410):398--409.

\bibitem[{Geman} and {Geman}, 1984]{geman1984}
{Geman}, S. and {Geman}, D. (1984).
\newblock Stochastic relaxation, {Gibbs} distributions, and the {Bayesian}
  restoration of images.
\newblock {\em IEEE Transactions on Pattern Analysis and Machine Intelligence},
  PAMI-6(6):721--741.

\bibitem[Haario et~al., 2001]{haario2001adaptive}
Haario, H., Saksman, E., Tamminen, J., et~al. (2001).
\newblock An adaptive {Metropolis} algorithm.
\newblock {\em Bernoulli}, 7(2):223--242.

\bibitem[Haberman, 1976]{haberman1976generalized}
Haberman, S.~J. (1976).
\newblock {\em Generalized residuals for log-linear models}.

\bibitem[Hale, 2010]{hale2010asymptotic}
Hale, J.~K. (2010).
\newblock {\em Asymptotic behavior of dissipative systems}.
\newblock Number~25. American Mathematical Soc.

\bibitem[Hastings, 1970]{hastings1970monte}
Hastings, W.~K. (1970).
\newblock {Monte Carlo} sampling methods using {Markov} chains and their
  applications.

\bibitem[Hoffman and Gelman, 2014]{hoffman2014no}
Hoffman, M.~D. and Gelman, A. (2014).
\newblock The {No-U-Turn} sampler: adaptively setting path lengths in
  {Hamiltonian Monte Carlo}.
\newblock {\em J. Mach. Learn. Res.}, 15(1):1593--1623.

\bibitem[Iacus, 2009]{iacus2009simulation}
Iacus, S.~M. (2009).
\newblock {\em Simulation and inference for stochastic differential equations:
  with {R} examples}.
\newblock Springer Science, Business Media.

\bibitem[Iacus and Yoshida, 2018]{iacus2018simulation}
Iacus, S.~M. and Yoshida, N. (2018).
\newblock {\em Simulation and Inference for Stochastic Processes with {YUIMA}}.
\newblock Springer.

\bibitem[Jamison, 1975]{jamison1975markov}
Jamison, B. (1975).
\newblock The {Markov} processes of {Schr{\"o}dinger}.
\newblock {\em Zeitschrift f{\"u}r Wahrscheinlichkeitstheorie und Verwandte
  Gebiete}, 32(4):323--331.

\bibitem[Johnstone et~al., 2016]{johnstone2016uncertainty}
Johnstone, R.~H., Chang, E.~T., Bardenet, R., De~Boer, T.~P., Gavaghan, D.~J.,
  Pathmanathan, P., Clayton, R.~H., and Mirams, G.~R. (2016).
\newblock Uncertainty and variability in models of the cardiac action
  potential: Can we build trustworthy models?
\newblock {\em Journal of molecular and cellular cardiology}, 96:49--62.

\bibitem[Kloeden and Platen, 1992]{kloeden1992stochastic}
Kloeden, P.~E. and Platen, E. (1992).
\newblock Stochastic differential equations.
\newblock In {\em Numerical Solution of Stochastic Differential Equations},
  pages 103--160. Springer.

\bibitem[Landsman and Ne{\v{s}}lehov{\'a}, 2008]{landsman2008stein}
Landsman, Z. and Ne{\v{s}}lehov{\'a}, J. (2008).
\newblock Stein's lemma for elliptical random vectors.
\newblock {\em Journal of Multivariate Analysis}, 99(5):912--927.

\bibitem[Lehec, 2013]{lehec2013}
Lehec, J. (2013).
\newblock Representation formula for the entropy and functional inequalities.
\newblock In {\em Annales de l'IHP Probabilit{\'e}s et statistiques},
  volume~49, pages 885--899.

\bibitem[L{\'e}onard, 2014]{leonard2014survey}
L{\'e}onard, C. (2014).
\newblock A survey of the schrodinger problem and some of its connections with
  optimal transport.
\newblock {\em Dynamical Systems}, 34(4):1533--1574.

\bibitem[Lloyd, 1982]{lloyd1982least}
Lloyd, S. (1982).
\newblock Least squares quantization in {PCM}.
\newblock {\em IEEE transactions on information theory}, 28(2):129--137.

\bibitem[Ma et~al., 2019]{ma2019sampling}
Ma, Y.-A., Chen, Y., Jin, C., Flammarion, N., and Jordan, M.~I. (2019).
\newblock Sampling can be faster than optimization.
\newblock {\em Proceedings of the National Academy of Sciences},
  116(42):20881--20885.

\bibitem[MacQueen et~al., 1967]{macqueen1967some}
MacQueen, J. et~al. (1967).
\newblock Some methods for classification and analysis of multivariate
  observations.
\newblock In {\em Proceedings of the fifth Berkeley symposium on mathematical
  statistics and probability}, volume~1, pages 281--297. Oakland, CA, USA.

\bibitem[Mangoubi et~al., 2018]{mangoubi2018does}
Mangoubi, O., Pillai, N.~S., and Smith, A. (2018).
\newblock Does {Hamiltonian} monte carlo mix faster than a random walk on
  multimodal densities?
\newblock {\em arXiv preprint arXiv:1808.03230}.

\bibitem[Martin et~al., 2020]{martin2020computing}
Martin, G.~M., Frazier, D.~T., and Robert, C.~P. (2020).
\newblock Computing bayes: Bayesian computation from 1763 to the 21st century.
\newblock {\em arXiv preprint arXiv:2004.06425}.

\bibitem[Menz et~al., 2014]{menz2014poincare}
Menz, G., Schlichting, A., et~al. (2014).
\newblock Poincar{\'e} and logarithmic {Sobolev} inequalities by decomposition
  of the energy landscape.
\newblock {\em Annals of Probability}, 42(5):1809--1884.

\bibitem[Metropolis et~al., 1953]{metropolis1953equation}
Metropolis, N., Rosenbluth, A.~W., Rosenbluth, M.~N., Teller, A.~H., and
  Teller, E. (1953).
\newblock Equation of state calculations by fast computing machines.
\newblock {\em The journal of chemical physics}, 21(6):1087--1092.

\bibitem[Mou et~al., 2019]{mou2019improved}
Mou, W., Flammarion, N., Wainwright, M.~J., and Bartlett, P.~L. (2019).
\newblock Improved bounds for discretization of {L}angevin diffusions:
  Near-optimal rates without convexity.
\newblock {\em arXiv preprint arXiv:1907.11331}.

\bibitem[Neal et~al., 2011]{neal2011mcmc}
Neal, R.~M. et~al. (2011).
\newblock {MCMC} using {Hamiltonian} dynamics.
\newblock {\em Handbook of markov chain monte carlo}, 2(11):2.

\bibitem[Patterson and Teh, 2013]{patterson2013stochastic}
Patterson, S. and Teh, Y.~W. (2013).
\newblock {Stochastic gradient Riemannian} {L}angevin dynamics on the
  probability simplex.
\newblock In {\em NIPS}, pages 3102--3110.

\bibitem[Pavliotis, 2014]{pavliotis2014stochastic}
Pavliotis, G.~A. (2014).
\newblock {\em Stochastic processes and applications: diffusion processes, the
  {Fokker-Planck} and {L}angevin equations}, volume~60.
\newblock Springer.

\bibitem[Pavon et~al., 2021]{pavon2018datadriven}
Pavon, M., Tabak, E.~G., and Trigila, G. (2021).
\newblock The data-driven schroedinger bridge.
\newblock {\em Communications on Pure and Applied Mathematics}.

\bibitem[Peters et~al., 2012]{peters2012rejection}
Peters, E.~A. et~al. (2012).
\newblock Rejection-free {Monte Carlo} sampling for general potentials.
\newblock {\em Physical Review E}, 85(2):026703.

\bibitem[Peyre and Cuturi, 2020]{peyre2020computational}
Peyre, G. and Cuturi, M. (2020).
\newblock Computational optimal transport.
\newblock {\em arXiv 1803.00567}.

\bibitem[Raginsky et~al., 2017]{raginsky2017non}
Raginsky, M., Rakhlin, A., and Telgarsky, M. (2017).
\newblock Non-convex learning via stochastic gradient {L}angevin dynamics: a
  nonasymptotic analysis.
\newblock In {\em Conference on Learning Theory}, pages 1674--1703. PMLR.

\bibitem[Revuz and Yor, 2013]{revuz2013continuous}
Revuz, D. and Yor, M. (2013).
\newblock {\em Continuous Martingales and {Brownian} Motion}, volume 293.
\newblock Springer Science \& Business Media.

\bibitem[Robert and Casella, 1999]{robert1999metropolis}
Robert, C.~P. and Casella, G. (1999).
\newblock The {Metropolis-Hastings} algorithm.
\newblock In {\em Monte Carlo Statistical Methods}, pages 231--283. Springer.

\bibitem[Roberts et~al., 1996]{roberts1996exponential}
Roberts, G.~O., Tweedie, R.~L., et~al. (1996).
\newblock Exponential convergence of {Langevin} distributions and their
  discrete approximations.
\newblock {\em Bernoulli}, 2(4):341--363.

\bibitem[Schr{\"o}dinger, 1932]{schrodinger1932theorie}
Schr{\"o}dinger, E. (1932).
\newblock Sur la th{\'e}orie relativiste de l'{\'e}lectron et
  l'interpr{\'e}tation de la m{\'e}canique quantique.
\newblock In {\em Annales de l'institut Henri Poincar{\'e}}, volume~2, pages
  269--310.

\bibitem[Sinkhorn, 1964]{sinkhorn1964}
Sinkhorn, R. (1964).
\newblock A relationship between arbitrary positive matrices and doubly
  stochastic matrices.
\newblock {\em The Annals of Mathematical Statistics}, 35(2):876 -- 879.

\bibitem[Smith et~al., 1988]{smith1988using}
Smith, J.~W., Everhart, J.~E., Dickson, W., Knowler, W.~C., and Johannes, R.~S.
  (1988).
\newblock Using the adap learning algorithm to forecast the onset of diabetes
  mellitus.
\newblock In {\em Proceedings of the annual symposium on computer application
  in medical care}, page 261. American Medical Informatics Association.

\bibitem[Stein, 1972]{stein1972}
Stein, C. (1972).
\newblock A bound for the error in the normal approximation to the distribution
  of a sum of dependent random variables.
\newblock {\em Proceedings of the Sixth Berkeley Symposium on Mathematical
  Statistics and Probability}, 2:583--602.

\bibitem[Stein, 1986]{stein1986}
Stein, C. (1986).
\newblock {\em Approximate Computations of Expectations}, volume~7.
\newblock Lecture Notes - Monograph Series, Institue of Mathematical
  Statistics.

\bibitem[Tierney, 1994]{tierney1994markov}
Tierney, L. (1994).
\newblock Markov chains for exploring posterior distributions.
\newblock {\em the Annals of Statistics}, pages 1701--1728.

\bibitem[Tzen and Raginsky, 2019]{tzen2019theoretical}
Tzen, B. and Raginsky, M. (2019).
\newblock Theoretical guarantees for sampling and inference in generative
  models with latent diffusions.
\newblock In {\em Conference on Learning Theory}, pages 3084--3114. PMLR.

\bibitem[Wang et~al., 2009]{wang2009log}
Wang, F.-Y. et~al. (2009).
\newblock {Log-Sobolev} inequalities: different roles of {Ric and Hess}.
\newblock {\em The Annals of Probability}, 37(4):1587--1604.

\bibitem[Wang et~al., 2021]{sbg21}
Wang, G., Jiao, Y., Xu, Q., Wang, Y., and Yang, C. (2021).
\newblock Deep generative learning via schrodinger bridge.
\newblock In {\em ICML}.

\bibitem[Welling and Teh, 2011]{welling2011bayesian}
Welling, M. and Teh, Y.~W. (2011).
\newblock Bayesian learning via stochastic gradient {Langevin} dynamics.
\newblock In {\em Proceedings of the 28th international conference on machine
  learning (ICML-11)}, pages 681--688.

\bibitem[Zhang et~al., 2019a]{zhang2019cyclical}
Zhang, R., Li, C., Zhang, J., Chen, C., and Wilson, A.~G. (2019a).
\newblock Cyclical stochastic gradient mcmc for {Bayesian} deep learning.
\newblock In {\em International Conference on Learning Representations}.

\bibitem[Zhang et~al., 2019b]{zhang2019nonasymptotic}
Zhang, Y., Akyildiz, {\"O}.~D., Damoulas, T., and Sabanis, S. (2019b).
\newblock Nonasymptotic estimates for stochastic gradient {Langevin} dynamics
  under local conditions in nonconvex optimization.
\newblock {\em arXiv preprint arXiv:1910.02008}.

\end{thebibliography}
}

\end{document}